\documentclass{article}
\usepackage[T1]{fontenc}
\usepackage[utf8]{inputenc}
\usepackage{float}
\usepackage{mathtools}
\usepackage{amsmath}
\usepackage{amsthm}
\usepackage{amssymb}
\usepackage{geometry}
\usepackage[authoryear,round]{natbib}

\makeatletter

\floatstyle{ruled}
\newfloat{algorithm}{tbp}{loa}
\providecommand{\algorithmname}{Algorithm}
\floatname{algorithm}{\protect\algorithmname}

\@ifundefined{date}{}{\date{}}
\usepackage{algorithmic} 
\usepackage{microtype}
\usepackage{graphicx}
\usepackage{subfig}
\usepackage{booktabs} 
\usepackage{xcolor}
\usepackage{hyperref}

\usepackage{amsmath}
\usepackage{amssymb}
\usepackage{mathtools}
\usepackage{amsthm}
\theoremstyle{plain}

\theoremstyle{definition}

\theoremstyle{remark}

\usepackage[textsize=tiny]{todonotes}
\renewcommand{\epsilon}{\varepsilon}
\renewcommand{\ln}{\log}
\renewcommand{\tilde}{\widetilde}

\makeatother

\theoremstyle{plain}
\newtheorem{thm}{\protect\theoremname}
\theoremstyle{remark}
\newtheorem{rem}[thm]{\protect\remarkname}
\theoremstyle{plain}
\newtheorem{lem}[thm]{\protect\lemmaname}
\newtheorem{prop}[thm]{\protect\propositionname}
\theoremstyle{remark}
\newtheorem{claim}[thm]{\protect\claimname}
\providecommand{\claimname}{Claim}
\providecommand{\lemmaname}{Lemma}
\providecommand{\propositionname}{Proposition}
\providecommand{\remarkname}{Remark}
\providecommand{\theoremname}{Theorem}

\begin{document}
\global\long\def\R{\mathbb{R}}%
\global\long\def\diag{\mathrm{diag}}%
\global\long\def\opt{\mathsf{OPT}}%
\global\long\def\smax{\text{smax}}%
\global\long\def\smin{\text{smin}}%
\global\long\def\po{\textsc{PackingOracle}}%
\global\long\def\co{\textsc{CoveringOracle}}%
\global\long\def\init{\text{init}}%
\global\long\def\pc{\textsc{MPCOracle}}%
\global\long\def\est{\textsc{MaxEstimator}}%

\title{Solving Positive Linear Programs with Differential Privacy}
\author{Alina Ene\thanks{Department of Computer Science, Boston University, \texttt{aene@bu.edu}.}\and 
Huy L. Nguyen\thanks{Khoury College of Computer Sciences, Northeastern University, \texttt{hu.nguyen@northeastern.edu}.}\and 
Ta Duy Nguyen\thanks{Department of Computer Science, Boston University, \texttt{taduy@bu.edu}.}\and 
Adrian Vladu\thanks{CNRS \& IRIF, Universit\'e Paris Cit\'e, \texttt{vladu@irif.fr}.}}
\maketitle
\begin{abstract}
We study differentially private approximation algorithms for positive
linear programs (LPs with nonnegative coefficients and variables),
focusing on the fundamental families of packing, covering, and mixed
packing-covering formulations. We focus on the high-sensitivity, constraint-private
regime of Hsu-Roth-Roughgarden-Ullman (ICALP 2014), where neighboring
instances may differ by an arbitrary single constraint, so one cannot
hope to approximately satisfy every constraint under privacy. We give
private solvers that return approximate solutions while violating
only a controlled number of constraints. Our algorithms improve the
prior instance-dependent guarantees, and also yield new data-independent
bounds that depend only on the dimension. Our techniques involve a
dense multiplicative weights update method developed from a regularized
dual viewpoint, which we analyze in a way that exploits structure
specific to positive LPs.

\end{abstract}

\section{Introduction}

Positive linear programs (LPs)---most notably packing, covering and
mixed packing-covering LPs---play a central role in machine learning
and computer science, where they are used to model optimization problems
with all non-negative variables and constraints. Packing LPs which
arise in problems such as resource usage maximization subject to capacity
constraints, have a wide range of applications including ad allocation,
bandwidth allocation and fractional matching, while covering LPs seen
in problems such as cost minimization subject to covering requirements
naturally model facility location relaxations, set cover, etc. Mixed
packing-covering LPs combining both types of constraints appear in
network flow control, learning with budget and fair allocation problems.
A key reason for the importance of this class of LPs is that they
admit fast approximation algorithms, including multiplicative weights
update, primal-dual methods, which scale to massive datasets common
in modern ML systems. 

In many scenarios, linear programs directly involve users' sensitive
data such as health or financial information. For example, in a packing
LP for bandwidth allocation, each constraint may encode a user's capacity
or usage limit while in a covering LP modeling minimum-cost provisioning
of services, each user's constraint may represent a personal service
requirement. In such cases, solving the LP requires the algorithm
designer to pay particular attention to protecting the users' private
information. This is where differential privacy proves its usefulness.
Differential privacy provides a principled framework for protecting
users in this context by ensuring that the output of an LP-solving
algorithm does not change significantly when the input LP differs
by one constraint.

Research on solving linear programs under differential privacy dates
back to \citet{hsu2014privately} and has since been extended to broader
settings and related problems, including private learning of subspaces
and halfspaces \citet{DBLP:conf/focs/BunNSV15,DBLP:conf/colt/BeimelMNS19,kaplan2020private,DBLP:conf/citc/GaoS21,DBLP:conf/nips/Ben-EliezerMZ22},
which are fundamental learning theory problems. A key challenge in
privately solving linear programs is an inherent impossibility result:
it is not possible to simultaneously guarantee differential privacy
and satisfy all constraints. Indeed, the addition or removal of a
single constraint can change an LP from feasible to infeasible or
vice versa, revealing sensitive information. Consequently, any differentially
private algorithm must allow for the violation or removal of some
constraints. Characterizing upper bounds on the number of constraints
that must be dropped to preserve privacy has therefore become a central
question in this line of work, including this paper.

To address this challenge, \citet{hsu2014privately} introduces a
general framework for approximately solving linear programs under
differential privacy. However, in this work, explicit constraint-private
algorithms and performance guarantees are provided only for the fractional
set cover problem. Moreover, the resulting bounds are instance-dependent:
the number of constraints that must be dropped depends on properties
of the problem such as the optimal objective value. More recent works
by \citet{kaplan2024differentially,ene2025solving} develop algorithmic
approaches for solving general LPs exactly. These methods achieve
data-independent guarantees, but incur high-degree polynomial dependence
on the problem dimension, which can be overly pessimistic for structured
instances such as positive LPs.

In this work, we revisit the problem of solving linear programs with
differential privacy, focusing on the class of positive LPs, including
packing, covering and mixed packing-covering LPs. We propose new approximation
algorithms for solving these problems as well as new analysis that
can exploit their structure to improve the number of constraints that
are violated due to privacy. 

\subsection{Our contribution}

We focus on constraint-private LPs where neighboring inputs can differ
by a single constraint. This is the most challenging setting among
the various scenarios introduced by \citet{hsu2014privately}, for
which guarantees for private approximation algorithms are not well
understood. 

We give constraint-private algorithms that output solutions satisfying
all constraints of the LP approximately, except for a bounded number
of constraints. We provide upper bounds on the number of constraints
that might be violated, distinguishing between two types of guarantee:
one that only depends on the problem dimension and one that involves
data-dependent properties, specific to the LP given as input. 

Our results are stated in the following theorem (precise statements
are given in subsequent sections). 
\begin{thm}
(cf. Theorems \ref{thm:packing} and \ref{thm:covering}) Given the
approximation factor $\alpha\in(0,1)$, there exist $(\epsilon,\delta)$-differentially
private algorithms that output a solution to a packing (covering)
LP given by constraints $Ax\le1$ (respectively, $Ax\ge1$) for $A\in\R^{n\times d}_{\ge0}$.
With high probability, the output solution satisfies $Ax\le1+\alpha$
(respectively, $Ax\ge1-\alpha$), except for at most $s$ constraints
where,
\[
s=\tilde{O}\left(\frac{\left(A_{\max}\opt\right)^{1.5}}{\alpha^{2}\epsilon}\right),\text{ or }s=\tilde{O}\left(\frac{d^{1.5}}{\alpha^{3.5}\epsilon}\right),
\]
where $\opt$ is the optimal objective of the LP and $A_{\max}$ denotes
the maximum entry of $A$.
\end{thm}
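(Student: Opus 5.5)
We will prove the packing and covering cases in parallel and treat covering as the mirror image of packing. In both, the algorithm is a dense multiplicative weights method run on the dual, over the constraints. After normalizing the objective, the packing problem becomes: find $x\ge0$ with $\langle c,x\rangle\ge\opt$ and $Ax\le1+\alpha$ for all but $s$ rows. The dual player keeps a distribution $p$ over the $n$ constraints. We do not let $p$ be arbitrary. Instead we restrict it to the dense set $\{p:\ p_i\le 1/s\}$, using the projection given by Bregman (KL) projection onto the capped simplex. This is equivalent to entropy regularization with a capping constraint.

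In each round the primal player best-responds to the aggregated constraint $p^\top A x\le 1$. This amounts to choosing the single coordinate $j$ that maximizes $c_j/(p^\top A)_j$ and scaling it to be as large as the constraint allows. We make this choice privately with the exponential mechanism, since it is a selection over $d$ coordinates. The utility has low sensitivity because a single constraint carries mass at most $1/s$ under $p$. Then $p$ is updated multiplicatively with losses $A_i x^{(t)}$, truncated to width $\rho$, and re-projected. The final output is the average $\bar x$ of the iterates.

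The privacy argument runs as follows. Because of the cap, replacing one constraint changes $p$ by $O(1/s)$ in $\ell_1$ in every round, and hence changes the selection score by $O(\rho/s)$. Advanced composition over $T$ rounds then gives $(\epsilon,\delta)$-privacy provided $s\gtrsim \rho\sqrt{T\log(1/\delta)}/(\epsilon\cdot\text{gap})$, where the gap is the additive accuracy loss we allow, of order $\alpha$.

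The utility argument is the standard regret bound for MWU restricted to the capped simplex. The comparator is the uniform distribution on any set $S$ of $s$ violated constraints. Its KL divergence from the uniform start is $\log(n/s)$. This gives
\[
\frac1T\sum_t \frac1s\sum_{i\in S}A_i x^{(t)} \le \frac1T\sum_t p_t^\top A x^{(t)} + O\Big(\rho\sqrt{\tfrac{\log n}{T}}\Big) + \text{(privacy error)}.
\]
Since $p_t^\top Ax^{(t)}\le 1$, the bound says that more than $s$ constraints cannot all be violated by $\alpha$ once $T\approx \rho^2\log n/\alpha^2$. Substituting this $T$ into the privacy condition yields $s\approx \rho^{2}/(\alpha^{2}\epsilon)$ up to logs.

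The main obstacle is controlling the width $\rho$. The naive width is $A_{\max}\opt$, which gives only $(A_{\max}\opt)^2$. To reach the exponent $1.5$, we plan a sharper variance-type analysis that exploits positivity. The loss vectors are nonnegative, so we can use the local-norm MWU bound $\sum_t p_t^\top (\ell_t)^2$ in place of $\rho^2T$. We also have $p_t^\top \ell_t\le 1$, so $p_t^\top\ell_t^2\le\rho$. This improves the regret term to $\sqrt{\rho\log n/T}$ with steps $\eta\approx 1/\rho$, i.e. $T\approx\rho\log n/\alpha^2$. The privacy sensitivity still scales with $\rho$, so $s\approx \rho\sqrt{T}/(\alpha\epsilon)=\rho^{1.5}/(\alpha^2\epsilon)$, which with $\rho=A_{\max}\opt$ is the first bound.

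For the dimension-only bound, we first reduce to bounded width. We privately cap the coordinates that any single constraint can make "expensive": any column $j$ whose entries are large relative to $\alpha/d$ is thresholded. This is possible because an approximate solution is supported on at most $d$ coordinates, and a rounding or scaling argument loses a factor of $\alpha$ per coordinate. The effective width becomes $\rho=\tilde O(d/\alpha)$, and substituting gives $d^{1.5}/(\alpha^{3.5}\epsilon)$. I am unsure of the exact reduction here, and it is the step I would verify most carefully.

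Covering is handled identically: the primal player minimizes instead of maximizes, losses are $1-A_ix$, and the comparator is a set of under-covered rows. Finally, a high-probability bound for the exponential mechanism, with a union bound over the $T$ rounds, converts the expected guarantees into the stated "with high probability" statement.
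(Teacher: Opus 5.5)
The instance-dependent part of your proposal is essentially the paper's argument. The lazily KL-projected capped-simplex weights are exactly $\nabla\smax^{U}(\eta Ax_t)$ with $U=1/s$. Your local-norm step is the paper's inequality $\smax^{U}(y+u)\le\smax^{U}(y)+(1+D)\langle\nabla\smax^{U}(y),u\rangle$ for $u\ge0$, which follows from $\nabla^{2}\smax^{U}\preceq\diag(\nabla\smax^{U})$. The bookkeeping also matches: $T\approx\rho\log n/\alpha^{2}$, sensitivity $3\rho/s$, and $s\approx\rho^{1.5}/(\alpha^{2}\epsilon)$.

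Two things need tidying. First, the oracle should output $\Delta_t=\opt\cdot1_j$ with the linear score $-\opt\langle p_t,A1_j\rangle$. Scaling $x$ until $p^{\top}Ax=1$ makes the width unbounded. With the objective fixed at $\opt$, the width is automatically $A_{\max}\opt$ and no loss truncation is needed. Truncating would only bound the truncated values, not $A_i\overline{x}$. Second, in the covering case you do not have $p_t^{\top}Ax^{(t)}\le1$, since the oracle only gives a lower bound, and $1-A_ix$ is not a nonnegative loss. You therefore have to work with the nonnegative $A_ix^{(t)}$ and absorb the variance term multiplicatively, via $\smin^{U}(y+u)\ge\smin^{U}(y)+(1-D)\langle\nabla\smin^{U}(y),u\rangle$ together with $\langle p_t,A\Delta_t\rangle\ge1-\alpha/10$.

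The genuine gap is the dimension-only bound. Your sketch (``privately cap columns'', ``loses a factor of $\alpha$ per coordinate'', ``covering is identical'') does not give an argument, so substituting $\rho=d/\alpha$ is unjustified. What works is clipping every \emph{entry} at $H=\Theta(d/(\alpha\opt))$. This is row-local and uses only the public $\opt$, so neighbors stay neighbors, no privacy is spent, and the width becomes $H\opt=\Theta(d/\alpha)$. Packing and covering then need opposite arguments.

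\textbf{Packing.} Clipping only relaxes constraints, so $x^{*}$ stays feasible and the oracle still finds $j$ with $\opt\langle p_t,A1_j\rangle\le1$. However, the output may violate the \emph{original} rows. You must post-process by zeroing every $\overline{x}_j\le2/H=\alpha\opt/d$. This is an additive loss of $\alpha\opt/d$ per coordinate, not a factor $\alpha$, so at most $\alpha\opt$ in total. Then observe that if a clipped row satisfies $A_i\overline{x}\le1+\alpha$ and $A^{\init}_{ij}>H$, then $\overline{x}_j\le(1+\alpha)/H\le2/H$. That coordinate was therefore zeroed, and $A^{\init}_i\overline{x}\le A_i\overline{x}$.

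\textbf{Covering.} Clipping only tightens constraints, so the output transfers for free because $A^{\init}\ge A$. What you must show is that the clipped LP still has a good point. With $H=40d/(\alpha\opt)$, take $y=(x^{*}+\frac{\alpha\opt}{20d}1)/(1+\alpha/20)$. It has $1^{\top}y=\opt$ and $Ay\ge1-\alpha/20$, because any row containing a clipped entry gets at least $H\cdot\frac{\alpha\opt}{20d(1+\alpha/20)}\ge1$ from that coordinate.
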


\begin{thm}
(cf. Theorem \ref{thm:mixed}) Given the approximation factor $\alpha\in(0,1)$,
there exist $(\epsilon,\delta)$-differentially private algorithms
that output a solution to a mixed packing-covering LP given by constraints
$Px\le1,Cx\ge1$ where $P\in\R^{p\times d}_{\ge0},C\in\R^{c\times d}_{\ge0}$.
With high probability, the output solution satisfies $Px\le1+\alpha,Cx\ge1-\alpha$,
except for at most $s$ constraints where,
\begin{align*}
s & =\tilde{O}\left(\frac{P_{\max}C_{\max}\sqrt{P_{\max}+C_{\max}}V^{2.5}}{\alpha^{4.5}\epsilon}\right),\\
\text{ or }s & =\tilde{O}\left(\frac{d^{3}}{\alpha^{6}\epsilon}\right)
\end{align*}
and $V=1^{\top}x$ for some feasible solution $x$, $P_{\max},C_{\max}$
denote the maximum entries of $P$ and $C$, respectively.
\end{thm}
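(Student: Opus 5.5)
The plan is to reduce the mixed packing-covering problem to a private feasibility game and solve it with a dense multiplicative weights update (MWU) over the constraints. This mirrors the packing and covering results of Theorems \ref{thm:packing} and \ref{thm:covering}.

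\textbf{Normalization and regularized dual.} I would first normalize so that a feasible $x$ with $1^{\top}x=V$ exists, and restrict the primal to the scaled simplex $\{x\ge0:1^{\top}x\le V\}$. Feasibility is then equivalent to the value of a zero-sum game being at most $0$. The dual player holds a distribution $y$ over the packing rows and a distribution $z$ over the covering rows. The primal player best-responds by minimizing $y^{\top}Px-z^{\top}Cx$ over the scaled simplex. That best response is a vertex $Ve_j$, and it can be selected privately with the exponential mechanism, since changing one constraint changes each score by at most $\max(P_{\max},C_{\max})V$ divided by the dense mass.

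The dual weights are not plain multiplicative weights. I would use a \emph{dense} variant obtained from an entropy regularizer capped at $1/s$ per coordinate, i.e.\ Bregman projection onto the set of $s$-dense distributions. Capping makes every dual distribution give weight at most $1/s$ to any single constraint, which bounds the sensitivity of each round. Privacy over $T$ rounds then follows from advanced composition, with per-round privacy $\epsilon_0\approx\epsilon/\sqrt{T\log(1/\delta)}$.

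\textbf{Utility.} This is the key step where positive structure enters. The regret bound of MWU against the dense comparator set gives the following. For the averaged $\bar{x}$, every \emph{$s$-dense} distribution over violated constraints witnesses violation at most $\alpha$. Hence fewer than $s$ packing constraints exceed $1+\alpha$, and likewise fewer than $s$ covering constraints fall below $1-\alpha$.

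The number of rounds comes from regret. The width of the payoffs is $\rho=\max(P_{\max},C_{\max})V$. Following the packing and covering analyses, I would not use the worst-case width. Instead I would exploit that, for positive LPs, the loss can be bounded by the current value $y^{\top}Px\le 1+\alpha$ along the trajectory, a local-norm or "self-bounding" argument. This gives $T=\tilde{O}(\rho/\alpha^{2})$ instead of $\rho^{2}/\alpha^{2}$.

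\textbf{Balancing $s$.} The exponential mechanism's error per round is $\tilde{O}(\rho/(s\epsilon_0))$, and it must be at most $\alpha$. Substituting $T$ and solving for $s$ yields a bound of the form $\rho^{1.5}\cdot(\text{poly}\,1/\alpha)/\epsilon$.

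The mixed case costs an extra factor because the two sides interact. The covering violation must be converted into a multiplicative bound, for which I would use a $\log$-barrier or softmin potential on the covering side. This introduces the product $P_{\max}C_{\max}V^{2}$ and additional $1/\alpha$ factors, giving the stated $\tilde{O}\big(P_{\max}C_{\max}\sqrt{P_{\max}+C_{\max}}V^{2.5}/(\alpha^{4.5}\epsilon)\big)$.

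\textbf{Data-independent bound.} For the $d^{3}/(\alpha^{6}\epsilon)$ bound, I would first privately preprocess the instance to reduce the effective width to $\mathrm{poly}(d/\alpha)$. This means rescaling coordinates so that each column's large entries are controlled, and discarding, or counting as violated, the few constraints with huge entries. The count of such constraints would be estimated with a private max/quantile estimator (\est). With $P_{\max}V,C_{\max}V\lesssim d/\alpha$ after rescaling, the instance bound specializes to $d^{3}/\alpha^{6}$.

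\textbf{Main obstacle.} I expect the hardest step to be the utility analysis of the dense MWU for the \emph{mixed} game. It requires simultaneously obtaining improved width dependence and converting dense-distribution regret into a count of violated constraints on both sides. The rescaling step that makes the bound dimension-only is also delicate, because the rescaling itself must be private. I would first attempt a potential-function argument tracking $\smax(Px)-\smin(Cx)$ directly.
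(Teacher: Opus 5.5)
\paragraph{Assessment.} There is a genuine gap: the data-dependent mixed bound is not derived, and the data-independent arithmetic does not produce the stated bound.

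The problem sits in the data-dependent mixed bound, exactly at the interaction you flag. As written, $y$ is a probability distribution on the packing rows and $z$ a separate one on the covering rows. The constants then cancel and the payoff is $y^{\top}Px-z^{\top}Cx$. This vanishes at $x=0$, so the game value is always at most $0$. The regret guarantee only says that the average of the $s$ largest entries of $P\overline{x}$ is at most the average of the $s$ smallest entries of $C\overline{x}$ plus $\alpha$. That does not fix the scale: all entries of $P\overline{x}$ and $C\overline{x}$ equal to $5$ passes. So the two separate conclusions do not follow.

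The formulation is repairable with one dense distribution over all $p+c$ rows, but the claim $T=\tilde{O}(\rho/\alpha^{2})$ is not. In the packing analysis the oracle certifies $\langle\nabla\smax^{U}(\eta Ax_{t-1}),A\Delta_{t}\rangle\le1+\alpha/10$. That is what turns the factor $(1+\eta\|A\Delta_{t}\|_{\infty})$ into an $O(\alpha)$ error per round. In an additive mixed game the oracle only controls the difference of the packing and covering terms, and both can be of order $\rho$: the best response may overshoot the covering rows while also loading the packing rows. The $(1+\eta\rho)$ distortion on the packing side and the $(1-\eta\rho)$ distortion on the covering side then leave an error of order $\eta\rho\,(y^{\top}Px_{t}+z^{\top}Cx_{t})$, i.e.\ $\alpha\rho$ rather than $\alpha$.

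The standard width analysis then needs $\rho^{2}/\alpha^{2}$ rounds and gives $\tilde{O}(\rho^{2}/(\alpha^{2}\epsilon))$. That does not imply the stated bound when $P_{\max}V$ is small. Your intermediate $\rho^{1.5}/(\alpha^{2}\epsilon)$ would beat the theorem whenever $P_{\max}V$ and $C_{\max}V$ are comparable, which is a warning sign. The sentence that a log-barrier or softmin potential ``introduces'' $P_{\max}C_{\max}V^{2}$ and extra $1/\alpha$ factors is fitted to the target, not derived.

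\paragraph{What the paper does instead.} The missing ideas are these.
(i) The oracle picks $j$ by the exponential mechanism with the \emph{ratio} score $-\langle\nabla\smax^{U}(Px),P1_{j}\rangle/\langle\nabla\smin^{U}(Cx),C1_{j}\rangle$. Both one-sided errors then multiply the same accumulated covering gain $\Sigma=\sum_{t}\langle\nabla\smin^{U}(Cx_{t}),C\Delta_{t+1}\rangle$. This yields $\smax^{U}(Px_{T})/\smin^{U}(Cx_{T})\le1+\alpha$ once $\Sigma\gtrsim\ln n/\alpha$.
(ii) $C$ is perturbed to $C_{ij}+\alpha/V$, which keeps the denominator at least $\alpha/V$. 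The score then has sensitivity $O(SRV^{2}/(s\alpha^{2}))$. This, not a barrier, is where $P_{\max}C_{\max}V^{2}/\alpha^{2}$ comes from.
(iii) Steps are $\alpha1_{j}/(30(S+R))$, so $\Sigma\ge T\alpha C_{\min}/(30(S+R))$ and $T=\Theta((S+R)V\ln n/\alpha^{3})$ suffices. The factor $\sqrt{T}$ then supplies the remaining $\sqrt{(P_{\max}+C_{\max})V}/\alpha^{1.5}$.
(iv) A final exponential mechanism over scalings $k\cdot x_{T}$, with $k$ a power of $1+\alpha$, turns the scale-free ratio bound into $P\overline{x}\le1+\alpha$ and $C\overline{x}\ge1-\alpha$ outside $s$ rows.

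\paragraph{Data-independent bound.} Your plan matches the structure of Algorithm~\ref{alg:mixed-1}: private column maxima via \est, discarding packing rows above the estimate, and perturbing and clipping $C$ relative to $M_{j}$. Its last step, however, is not a calculation. Substituting $P_{\max}V,C_{\max}V\lesssim d/\alpha$ into the instance bound gives $d^{2.5}/\alpha^{7}$, not $d^{3}/\alpha^{6}$. The paper does not specialize the instance bound this way. It redoes the sensitivity computation with the column-relative perturbation $\alpha M_{j}/d$ and clipping at $40dM_{j}/\alpha$, and takes $T=\Theta(d^{2}\ln n/\alpha^{4})$.
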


In these theorems statements, the $\tilde{O}$-notation hides logarithmic
factors that depend on $n,d,\delta$, the success probability and
the range of the input. 
\begin{rem}
Our $d$-dependent bounds from these theorems are most useful in the
regime where the number of constraints $n$ is much bigger than the
problem dimension $d$. A natural example is the problem of setting
the prices (the variables) of some common goods among a large user
base to maximize revenue while making sure that most users can meet
their needs. Here, the number of users $n$ is vast, while the number
of variables $d$ remains small. Another example is the combinatorial
public project problem where the goal is to select public projects
from a set of $d$ candidates, so as to maximize the total utility
of $n$ users where $n\gg d$. This is a discrete problem but an LP
relaxation can still yield valuable information such as by rounding.
\end{rem}

\paragraph{Comparison with \citet{hsu2014privately}.}

While\textbf{ }\citet{hsu2014privately} provide a general framework
to solve constraint-private LPs, explicit bounds are only provided
for fractional set cover problem for which an explicit private oracle
can be constructed. For this problem where $A_{\max}=1$, the number
of violated constraints given by \citet{hsu2014privately}'s algorithm
is $\tilde{O}\left(\frac{\opt^{2}}{\alpha^{2}\epsilon}\right)$. By
comparison, for this problem, our algorithm gives a bound of $\tilde{O}\left(\frac{\opt^{1.5}}{\alpha^{2}\epsilon}\right)$.
This is an improvement by a factor $\opt^{0.5}$. At the same time,
our work goes beyond this instance-specific bound and provides an
upper bound of $\tilde{O}\left(\frac{d^{1.5}}{\alpha^{3.5}\epsilon}\right)$.
This bound improves the former when $\opt\gg\frac{d}{\alpha}$. 

\paragraph{Comparison with \citet{kaplan2024differentially,ene2025solving}.}

These works focus on solving general LPs with high precision with
differential privacy, i.e, achieving $\log$-dependence instead of
polynomial dependence on $\frac{1}{\alpha}$. For general LPs, these
solvers have high degree polynomial dependence on the dimension. Specifically,
\citet{kaplan2024differentially} have an upper bound of $\tilde{O}\left(\frac{d^{9}}{\epsilon}\right)$
and the best known bound from\textbf{ }\citet{ene2025solving} is
$\tilde{O}\left(\frac{d^{4}}{\epsilon}\right)$. In the special case
when the LP has a strictly positive margin $\rho$$,$ these bounds
can be improved to $\tilde{O}\left(\frac{d^{5}}{\epsilon}\mathrm{poly}\log\frac{1}{\rho}\right)$
and $\tilde{O}\left(\frac{d^{2}}{\epsilon}\mathrm{poly}\log\frac{1}{\rho}\right)$,
respectively. When we can afford to tolerate a low-precision approximation,
which is equivalent to creating an LP with margin $\alpha$, the algorithm
by\textbf{ }\citet{ene2025solving} achieves an upper bound of $\tilde{O}\left(\frac{d^{2}}{\epsilon}\mathrm{poly}\log\frac{1}{\alpha}\right)$
on the number of violated constraints. By comparison, our data-independent bounds are $\tilde{O}\left(\frac{d^{1.5}}{\alpha^{3.5}\epsilon}\right)$
for pure problems and $\tilde{O}\left(\frac{d^{3}}{\alpha^{6}\epsilon}\right)$
for mixed packing-covering LPs. For pure LPs, our algorithms improve
the dependence on the problem dimension, while having a worse bound
for the mixed packing-covering LPs. On the other hand, our algorithms
also offer instance-specific bound, which can improve these data-independent
bounds in certain regimes.

\paragraph{Our technique.}

Our main technique is a newly developed Dense Multiplicative Weights
Update algorithm for positive LPs. The prior work by \citet{hsu2014privately}
also uses a version of Dense Multiplicative Weights Update, relying
on the Bregman projection approach by \citet{DBLP:journals/jmlr/HerbsterW01}
as a blackbox. However, this blackbox approach fails to leverage the
positivity of the constraints. Instead, we develop our new toolset
from first principles, which allow us to exploit the structure of
the problem. 

Our algorithm assigns and update weights to the constraints in the
LP and makes sure that these weights do not exceed a certain threshold
in order to preserve the privacy when a constraint of the LP changes
(thresholding guarantees a bound on the sensitivity). We develop this
algorithm from a regularized dual viewpoint, using a variant of the
standard softmax function, where the dual variable has its coordinates
bounded by a specified threshold. More precisely, the main tool is
the truncated softmax function, defined as follows: 
\[
\text{smax}^{U}\left(x\right)=\max_{r\in\mathcal{D}^{U}}\left\langle x,r\right\rangle -\omega\left(r\right),
\]
where the domain of the dual variable ${\cal D}^{U}=\left\{ r\in\Delta_{n}:\max r\leq U\right\} $
is the unit simplex with coordinates truncated at some value $0<U<1$
and $\omega$ is the negative entropy. The weights of the constraints
are given by $\nabla\text{smax}^{U}\left(x\right)$. The main idea
when using $\text{smax}^{U}\left(x\right)$ is that bounding the regularization
term can provide an improved bound on the sensitivity of the LP. Properties
of this function and its gradient allow us control the contribution
of any set of $s=\frac{1}{U}$ constraints, which exactly provides
the guarantee on the number of violated constraints in all cases.

\subsection{Related Work}

\citet{hsu2014privately} were the first to initiate the study of
solving linear programs with differential privacy. In their work,
multiple definitions of neighboring inputs are considered. We only
consider the most challenging case of constraint-private LPs where
neighboring inputs can differ by a single constraint, which is the
same setting studied by \citet{kaplan2024differentially,ene2025solving}.
The main tool developed by \citet{hsu2014privately} is private MWU
to approximately solve LPs. Previously, private MWUs have been developed
in the context of linear queries release \cite{hardt2010multiplicative}
to achieve optimal accuracy and runtime. For solving LPs, the aim,
on the other hand, is to find a solution that can minimize the number
of violated constraints. \citet{kaplan2024differentially,ene2025solving}
use a different approach based on privatizing rescaled perceptron
algorithms to solve general LPs to high accuracy.

Our work focus on the class of positive LPs. Where privacy is not
a concern, there has been a long line of research developing multiplicative
weights update for positive linear programs to improve runtime over
algorithms for general LPs, starting from \citet{shahrokhi1990maximum,DBLP:journals/mor/PlotkinST95,grigoriadis1994fast,luby1993parallel,young2001sequential}
followed by many subsequent works (see the survey by \citet{arora2012multiplicative}).
It is therefore natural to seek analogous improvement when privacy
is imposed. Our work show that improvement is indeed possible.

\section{Preliminaries}\label{sec:Preliminaries}

\subsection{Notation}

For a matrix $A$, we use $A_{\min}$ and $A_{\max}$ to denote the
minimum and maximum entries of $A$. We use $A_{i}$ to denote the
$i$-th row. We use a scalar $a\in\R$ to denote the vector whose
coordinates are all $a$. The dimension can be inferred from the context.

\subsection{Positive Linear Programs}

We consider the problem of solving linear programs with positive entries.
There are three main families of such LPs.

\paragraph{Packing LPs.}

We seek to approximately solve the problem 
\[
\max_{x\ge0}c^{\top}x\text{ s.t }Ax\le b,
\]
where $A\in\R^{n\times d}_{\ge0},b\in\R^{n}_{\ge0},c\in\R^{d}_{\ge0}$.
Without loss of generality, we can assume that $b=1$ and $c=1$. 

\paragraph{Covering LPs.}

We seek to approximately solve the problem 
\[
\min_{x\ge0}c^{\top}x\text{ s.t }Ax\ge b,
\]
where $A\in\R^{n\times d}_{\ge0},b\in\R^{n}_{\ge0},c\in\R^{d}_{\ge0}$.
We will also assume that $b=1$ and $c=1$. 

\paragraph{Mixed Packing-Covering LPs.}

This is the most general positive LPs, for which we want to find $x\ge0$
such that
\[
Px\le b\text{ and }Cx\ge c,
\]
where $P\in\R^{p\times d}_{\ge0},C\in\R^{c\times d}_{\ge0},b\in\R^{p}_{\ge0},c\in\R^{c}_{\ge0}$.
We will assume that $b=1$ and $c=1$. 

\subsection{Differential Privacy}

We will represent an LP by its constraint matrix, after normalization
so that the objective coefficients and scalars are $1$s. We say that
two LPs with constraint matrices $A$ and $A'$ are neighbors if they
differ by a single constraint. A randomized algorithm ${\cal A}$
is said to be $(\epsilon,\delta)$-differentially private (DP) if
for all neighboring LPs $A$ and $A'$ and every subset of possible
outcomes ${\cal O}$,
\begin{align*}
\Pr\left[{\cal A}(A)\in{\cal O}\right] & \le e^{\epsilon}\Pr\left[{\cal A}(A')\in{\cal O}\right]+\delta.
\end{align*}
In the case $\delta=0$, we say the algorithm is $\epsilon$-DP. 

We will use the exponential mechanism \cite{DBLP:conf/focs/McSherryT07}---a
standard tool in differential privacy. The exponential mechanism involves
a score function $Q(i,A)$ which takes as input a candidate $i$ from
a finite set $S$ and a dataset $A$ and outputs a real value. Given
a dataset $A$, the exponential mechanism $\mathsf{EM}_{\epsilon,Q}$
outputs $i\in S$ with probability proportional to $\exp\left(\epsilon Q(i,A)\right)$.

We have the following guarantee.
\begin{thm}
The exponential mechanism is $\epsilon$-differentially private and
guarantee that for a score function $Q$ with sensitivity $\Delta$,
with probability at least $1-\beta$, 
\begin{align*}
Q(\mathsf{EM}_{\epsilon,Q}(A),A)\ge\max_{j\in S}Q(j,S)-\frac{2\Delta\left(\log d+\log\frac{1}{\beta}\right)}{\epsilon} & ,
\end{align*}
where we say a score function $Q$ has sensitivity $\Delta$ if for
all $i\in S$ and any neighboring inputs $A$ and $A'$, $\left|Q(i,A)-Q(i,A')\right|\le\Delta$. 
\end{thm}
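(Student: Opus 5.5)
We prove the two parts of the statement separately: privacy through a likelihood-ratio bound, and utility through a tail bound combined with a union bound. The mechanism outputs $i$ with probability
\[
p_A(i)=\exp\left(\frac{\epsilon Q(i,A)}{2\Delta}\right)\Big/\sum_{j\in S}\exp\left(\frac{\epsilon Q(j,A)}{2\Delta}\right).
\]
We read the statement's ``proportional to $\exp(\epsilon Q(i,A))$'' with this standard $\epsilon/(2\Delta)$ scaling, which is what the constant $2\Delta/\epsilon$ in the utility bound reflects. We also read $\log d$ as $\log|S|$ and $Q(j,S)$ as $Q(j,A)$, the evident intended meanings. The claim is the standard guarantee of \citet{DBLP:conf/focs/McSherryT07}, so we only recall the argument.

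\textbf{Privacy.} Fix neighboring $A,A'$ and an outcome $i$. The sensitivity assumption gives $|Q(i,A)-Q(i,A')|\le\Delta$. Hence the numerator of $p_A(i)$ is at most $e^{\epsilon/2}$ times the numerator of $p_{A'}(i)$. Applying the same assumption to every $j\in S$, the normalizer for $A'$ is at most $e^{\epsilon/2}$ times the normalizer for $A$. Combining the two, $p_A(i)\le e^{\epsilon}p_{A'}(i)$ for every $i$. Summing over $i\in\mathcal{O}$ gives $\epsilon$-DP with $\delta=0$.

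\textbf{Utility.} Let $\mathrm{OPT}=\max_{j}Q(j,A)$ and let $B=\{i: Q(i,A)<\mathrm{OPT}-t\}$. Each $i\in B$ has unnormalized weight less than $\exp\left(\epsilon(\mathrm{OPT}-t)/(2\Delta)\right)$. The normalizer is at least the weight of the maximizer, $\exp\left(\epsilon\,\mathrm{OPT}/(2\Delta)\right)$. A union bound over $B$ then gives
\[
\Pr[\mathsf{EM}_{\epsilon,Q}(A)\in B]\le |S|\exp\left(-\frac{\epsilon t}{2\Delta}\right).
\]
Choosing $t=\frac{2\Delta(\log|S|+\log(1/\beta))}{\epsilon}$ makes the right-hand side exactly $\beta$, which is the claimed bound.

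The only step needing care is bookkeeping. If one insists on the unscaled mechanism $\propto\exp(\epsilon Q)$, the same argument yields $2\epsilon\Delta$-DP instead. The fix is simply to rescale $\epsilon$, so nothing substantive changes.
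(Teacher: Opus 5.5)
Your proof is correct and is exactly the standard McSherry--Talwar argument. Privacy comes from a likelihood-ratio bound under the $\epsilon/(2\Delta)$ scaling, and utility from a per-outcome tail bound plus a union bound over $S$; the paper gives no proof of its own and simply cites that result. Your reading of the statement's typos (sampling $\propto\exp(\epsilon Q/(2\Delta))$, $\log d$ as $\log|S|$, $Q(j,S)$ as $Q(j,A)$) is the right one and matches how the paper applies the bound, e.g.\ the term $6H\cdot\opt(\log d+\log\frac{T}{\beta})/(s\epsilon')$ with $\Delta=3H\cdot\opt/s$ and $|S|=d$.
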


\subsection{Tool: Truncated Softmax and Softmin }

The main tool we use to analyze our algorithm is the truncated softmax
function, defined as follows. Let $\Delta_{n}=\left\{ r\in\R^{n}_{\ge0}:\sum_{i}r_{i}=1\right\} $
be the unit simplex. Given a scalar parameter $0<U\leq1$, define
$\mathcal{D}^{U}=\left\{ r\in\Delta_{n}:\max r\leq U\right\} $. We
define 
\[
\text{smax}^{U}\left(x\right)=\max_{r\in\mathcal{D}^{U}}\left\langle x,r\right\rangle -\omega\left(r\right),
\]
where $\omega\left(r\right)=\sum_{i}r_{i}\ln r_{i}$ is the negative
entropy. We also define the truncated softmin function $\smin^{U}$
as
\begin{align*}
\smin^{U}\left(x\right) & =-\smax^{U}\left(-x\right)=\min_{r\in\mathcal{D}^{U}}\left\langle x,r\right\rangle +\omega\left(r\right).
\end{align*}

\paragraph{Gradient.}

The gradient of $\text{smax}^{U}\left(x\right)$ is given by
\begin{align*}
\left[\nabla\smax^{U}\left(x\right)\right]_{i} & =\min\left\{ U,\exp\left(x_{i}-t_{U}(x)\right)\right\} ,
\end{align*}
where $t_{U}(x)$ is such that $\nabla\text{smax}^{U}\left(x\right)\in\Delta_{n}$
i.e, $\sum_{i}\left[\nabla\text{\ensuremath{\smax}}^{U}\left(x\right)\right]_{i}=1$.
Similarly, 
\begin{align*}
\left[\nabla\smin^{U}\left(x\right)\right]_{i} & =\min\left\{ U,\exp\left(-x_{i}-t_{U}(x)\right)\right\} ,
\end{align*}
where $t_{U}(x)$ is such that $\nabla\text{\ensuremath{\smin}}^{U}\left(x\right)\in\Delta_{n}$
i.e, $\sum_{i}\left[\nabla\smin^{U}\left(x\right)\right]_{i}=1$.

\paragraph{Bounding the $\protect\smax^{U}$ and $\text{\ensuremath{\protect\smin}}^{U}$
increase.}

We bound the change in $\smax^{U}$ when performing an update $x\gets x+u$
for a vector $u\in\R^{n}_{\ge0}$
\[
\text{smax}^{U}\left(x+u\right)\leq\text{smax}^{U}\left(x\right)+\frac{e^{D}-1}{D}\left\langle \nabla\text{smax}^{U}\left(x\right),u\right\rangle ,
\]
where $D=\max_{i}\left\{ u_{i}\right\} $. For $u\in\R^{n}_{\ge0}$
such that $\max_{i}\left\{ u_{i}\right\} \le1$, we also have
\begin{equation}
\text{smax}^{U}\left(x+u\right)\leq\text{smax}^{U}\left(x\right)+\left(1+D\right)\left\langle \nabla\text{smax}^{U}\left(x\right),u\right\rangle .\label{eq:smax-update}
\end{equation}
Similarly,
\begin{align}
\text{\ensuremath{\smin}}^{U}\left(x+u\right) & \ge\text{\ensuremath{\smin}}^{U}\left(x\right)+\left(1-D\right)\left\langle \nabla\text{\ensuremath{\smin}}^{U}\left(x\right),u\right\rangle .\label{eq:smin-update}
\end{align}

\section{Private Algorithms for Packing LP}

In this section, we seek to approximately solve the following problem
with differential privacy 
\[
\max_{x\ge0}\ 1^{\top}x\text{ s.t }Ax\le1,
\]
where $A\in\R^{n\times d}_{\ge0}.$ We assume to know the value $\opt=\max\left\{ 1^{\top}x,Ax\le1\right\} $,
so the goal is, given the approximation factor $\alpha$, to find
$x$ such that 
\[
1^{\top}x\ge\left(1-\alpha\right)\opt\text{ and }Ax\le1+\alpha.
\]
We also assume the entries of the input matrix $A$ come from a bounded
range and are upper bounded by a known value $S=O\left(A_{\max}\right)$.
\begin{rem}
\label{rem:assumption}The assumption that we know the value of $\opt$
is also made in\textbf{ }\citet{hsu2014privately} and implicitly
in \citet{kaplan2024differentially,ene2025solving}. Here, the implicit
assumption is that the maximum coefficient for each variable $x_{j}$
across all constraints lies within a range $[m,M]$ for known values
$M>m>0$. One then has $\opt\in[\frac{1}{dM},\frac{d}{m}]$. We can
run the algorithm for each multiple of $\left(1+\alpha\right)$ in
this range which covers a $\left(1\pm\alpha\right)$ approximation
of $\opt$. Then using the exponential mechanism, we can select the
best guess of $\opt$ with only $\log$ factors incurred. By the same
reason, using a binary search in $[m,M]$, we can estimate the value
$S=O\left(A_{\max}\right)$. We will ignore these $\log$ factors
due to the binary searches in our bounds.
\end{rem}

\subsection{Algorithm}

\begin{algorithm}
\caption{Private Algorithm for Packing LP}

\label{alg:packing}

\begin{algorithmic}[1]

\STATE \textbf{Input}: $A\in\R^{n\times d}$, upper bound $S=O\left(A_{\max}\right)$
for the entries of $A$, optimal objective $\opt=\max\left\{ 1^{\top}x,Ax\le1\right\} $,
$\alpha,\beta\in(0,1)$, privacy parameters $\epsilon,\delta$.

\STATE Let $H=\frac{2d}{\alpha\cdot\opt}$ if pre-processing else
$H=S$

\STATE Let $T=\frac{20H\cdot\opt\ln n}{\alpha^{2}},$ $\epsilon'=\frac{\epsilon}{2\sqrt{T\log\frac{1}{\delta}}}$,
$s=\frac{60H\cdot\opt(\log d+\log\frac{T}{\beta})}{\alpha\epsilon'}$,
$U=\frac{1}{s}$

\STATE \textbf{(Optional) Pre-processing}: for entry $A_{ij}$ in
$A$: $A_{ij}=\min\left\{ A_{ij},H\right\} $

\STATE Initialize $x_{0}=0$

\STATE for $t=0,\dots,T-1$:

\STATE $\qquad$Let $\Delta_{t}=\po(x_{t},\epsilon')$

\STATE $\qquad$$x_{t+1}=x_{t}+\Delta_{t}$

\STATE Let $\overline{x}=\frac{x_{T}}{T}$

\STATE \textbf{(Optional) Post-processing}: for $j\in[d]$: if $\overline{x}_{j}\le\frac{2}{H},$
let $\overline{x}_{j}=0$

\STATE \textbf{Output} $\overline{x}$

\STATE $\po(x,\epsilon)$:

\STATE $\qquad$Let $\eta=\frac{\alpha}{10H\cdot\opt}$

\STATE $\qquad$Use the Exponential mechanism with privacy parameter
$\epsilon$ to find coordinate $j\in[d]$ with score function $Q(j)=-\left\langle \nabla\smax^{U}\left(\eta Ax\right),A1_{j}\right\rangle \opt$

\STATE $\qquad$\textbf{Output} $\Delta=1_{j}\cdot\opt$

\end{algorithmic}
\end{algorithm}

Our algorithm is presented in Figure \ref{alg:packing}. Starting
with the initial solution $x=0$, he algorithm proceeds iteratively
as follows. In each iteration, the algorithm calls $\po$, which uses
the exponential mechanism to find a coordinate $j$ such that $\left\langle \nabla\smax^{U}\left(\eta Ax\right),A1_{j}\right\rangle $
is minimized to form the update vector $\Delta_{t}.$ The output of
the algorithm is the average of the cumulative updates, i.e, $\overline{x}=\frac{\sum_{t}\Delta_{t}}{T}$. 

Algorithm \ref{alg:packing} is closely related to the Dense Multiplicative
Weights Update algorithm by \citet{DBLP:journals/jmlr/HerbsterW01}
which is employed by \citet{hsu2014privately} as a blackbox in their
private algorithm for solving LPs. One key distinction here is that
our Dense MWU algorithm exploits the structure of the positive LPs
in maintaining that in each update, the increase in the constraint
values $\text{smax}^{U}\left(\eta A\left(x_{t+1}\right)\right)-\text{smax}^{U}\left(\eta Ax_{t}\right)$
is small when updating $x_{t+1}=x_{t}+\Delta_{t}$ with an appropriate
step size $\eta$. More specifically, without the constraint positivity,
\citet{hsu2014privately} bound
\begin{align*}
\text{smax}^{U}\left(\eta A\left(x_{t+1}\right)\right)-\text{smax}^{U}\left(\eta Ax_{t}\right)\lesssim & \left\langle \nabla\text{smax}^{U}\left(\eta Ax_{t}\right),\eta A\Delta_{t}\right\rangle +\eta^{2}\left\Vert A\Delta_{t}\right\Vert ^{2}_{\infty}.
\end{align*}
The term $\left\Vert A\Delta_{t}\right\Vert _{\infty}$ results in
the convergence rate $\frac{\rho^{2}\log n}{\alpha^{2}}$ of Dense
MWU for general LP, where $\rho$ is the width of the problem, being
an upper bound for $\left\Vert A\Delta_{t}\right\Vert _{\infty}$.
On the other hand, if we have the constraint positivity, we can bound
\begin{align*}
\text{smax}^{U}\left(\eta A\left(x_{t+1}\right)\right)-\text{smax}^{U}\left(\eta Ax_{t}\right)\lesssim & \left(1+\eta\left\Vert A\Delta_{t}\right\Vert _{\infty}\right)\left\langle \nabla\text{smax}^{U}\left(\eta Ax_{t}\right),\eta A\Delta_{t}\right\rangle ,
\end{align*}
and improve the convergence rate to $\frac{\rho\log n}{\alpha^{2}}$.
The improvement of the runtime of the non-private algorithm by a factor
$\rho$ leads to the improvement in the number of violated constraints
in our private algorithm. Intuitively, every iteration requires some
privacy loss so having fewer iterations allows us to utilize the privacy
budget more efficiently.

At the same time, in each iteration, $\po$ guarantees the objective
increases quickly ($1^{\top}\Delta_{t}=\opt$) and the average of
the constraints $Ax\le1$ weighted by $\nabla\smax^{U}\left(\eta Ax_{t-1}\right)$,
i.e $\left\langle \nabla\smax^{U}\left(\eta Ax_{t-1}\right),A\Delta_{t}\right\rangle $
is satisfied approximately. Here, $\nabla\smax^{U}\left(\eta Ax_{t-1}\right)$
can be thought of as the projection of weights $\nabla\smax\left(\eta Ax_{t-1}\right)$
onto a dense distribution over the simplex $\Delta_{n}$ where no
weights exceed $U$ in order to achieve privacy. The use of the truncated
softmax function $\text{smax}^{U}$ allows us to control the sensitivity
of the score function $Q$ used in the exponential mechanism, which
directly translates to the number of constraints that have to be dropped.
This function also gives the bound
\begin{align*}
\max_{S\in\left[n\right]:\left|S\right|=s=\frac{1}{U}}\left\langle \frac{1_{S}}{\left|S\right|},A\overline{x}\right\rangle  & \le\text{smax}^{U}\left(A\overline{x}\right)
\end{align*}
which implies if $\overline{x}$ satisfies $\text{smax}^{U}\left(A\overline{x}\right)\le1+\alpha$,
the average of the top $\frac{1}{U}$ constraints cannot exceed $(1+\alpha)$
and thus the number of violated constraints is at most $\frac{1}{U}$.

It is optional to execute pre-processing and post-processing steps
to obtain the novel data-independent guarantee for the algorithm.
In the pre-processing step, the algorithm truncates the entries of
the input matrix $A$ by threshold $H=\frac{2d}{\alpha\cdot\opt}$.
This step makes sure that we can have an instance-independent bound
for the sensitivity of the score function used in the exponential
mechanism. To make sure that the output solution satisfies the original
constraints instead of the new constraints resulted from the pre-processing,
the entries of $\overline{x}$ that are below the threshold $\frac{\alpha\opt}{d}$
are set to $0$. 

The guarantee of Algorithm \ref{alg:packing} is stated below.
\begin{thm}
\label{thm:packing}Given $\alpha,\beta\in(0,1)$, Algorithm \ref{alg:packing}
is $(\epsilon,\delta)$-differentially private and finds a solution
$x$ such that $1^{\top}x\ge\left(1-\alpha\right)\opt$ and with probability
at least $1-\beta$, $A_{i}x\le1+\alpha$ for all $i\in[n]$ except
for at most $s$ constraints, where

1. Without pre- and post-processing:
\begin{align*}
s & =O\left(\frac{\left(A_{\max}\opt\right)^{1.5}(\log d+\log\frac{n}{\alpha\beta})\sqrt{\log n\log\frac{1}{\delta}}}{\alpha^{2}\epsilon}\right).
\end{align*}

2. With pre- and post-processing:
\begin{align*}
s & =O\left(\frac{d^{1.5}(\log d+\log\frac{n}{\alpha\beta})\sqrt{\log n\log\frac{1}{\delta}}}{\alpha^{3.5}\epsilon}\right).
\end{align*}

\end{thm}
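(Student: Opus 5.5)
Privacy and utility are proved separately, and the utility argument is a potential argument on $\Phi_t=\smax^{U}(\eta Ax_t)$ with $\eta=\frac{\alpha}{10H\opt}$.

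\textbf{Privacy.} First I bound the sensitivity of the score $Q(j)=-\langle\nabla\smax^{U}(\eta Ax),A1_j\rangle\opt$. Changing one constraint changes one row $A_i$, and hence one coordinate of $\eta Ax$. Because every coordinate of $\nabla\smax^{U}$ is capped at $U$, the threshold $t_U$ shifts only slightly. I would argue that the two gradient vectors differ in $\ell_1$ by $O(U)$. The cleanest route is that $\nabla\smax^{U}$ is the projection onto the capped simplex, and changing one input coordinate moves the projection by at most $2U$ in $\ell_1$. Since each entry of $A1_j$ is at most $H$ (this holds with or without pre-processing, since truncation enforces it), the sensitivity is $\Delta=O(UH\opt)$. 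Each call to $\po$ is then $\epsilon'\Delta$-DP. The score must be scaled by $1/\Delta$, or equivalently the mechanism is run with parameter $\epsilon'/\Delta$; this factor is absorbed into the utility loss. Advanced composition over $T$ rounds with $\epsilon'=\epsilon/(2\sqrt{T\log(1/\delta)})$ gives $(\epsilon,\delta)$-DP. Post-processing of $\overline x$ preserves privacy.

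\textbf{Utility.} By the feasible point $x^*$ with $1^\top x^*=\opt$ and $Ax^*\le1$, we get $\min_j\langle g,A1_j\rangle\opt\le\langle g,Ax^*\rangle\le1$ for any distribution $g$. The exponential mechanism guarantee (with a union bound over $T$ rounds, at failure probability $\beta/T$ each) then gives $\langle g_t,A\Delta_t\rangle\le1+\frac{2\Delta(\log d+\log(T/\beta))}{\epsilon'}$. The choice $s=1/U$ makes this additive error $O(UH\opt\log/\epsilon')\le\alpha/3$; this is exactly where $s$ is defined. Each step has $\eta\|A\Delta_t\|_\infty\le\eta H\opt=\alpha/10\le1$, so (\ref{eq:smax-update}) gives $\Phi_{t+1}\le\Phi_t+(1+\alpha/10)\eta(1+\alpha/3)$. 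Starting from $\Phi_0=\smax^{U}(0)\le\ln n$ and telescoping, $\smax^{U}(\eta Ax_T)\le\ln n+T\eta(1+\alpha/2)$. Using that $\smax^{U}$ dominates the average of any $s$ coordinates (the regularizer is at least $-\ln s$), the top-$s$ average of $\eta Ax_T$ is at most $\ln n+T\eta(1+\alpha/2)$. Dividing by $T\eta$ and using $T=20H\opt\ln n/\alpha^2$, so that $\ln n/(T\eta)=\alpha/2$, the top-$s$ average of $A\overline x$ is at most $1+\alpha$. Hence at most $s$ constraints violate $A_i\overline x\le1+\alpha$. The objective satisfies $1^\top\overline x=\opt$ exactly.

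\textbf{Bounds and post-processing.} Substituting gives $s=O(H\opt\log\cdot\sqrt{T\log(1/\delta)}/(\alpha\epsilon))=O((H\opt)^{1.5}\sqrt{\ln n\log(1/\delta)}\log/(\alpha^2\epsilon))$. Taking $H=S=O(A_{\max})$ yields case 1. For case 2, $H\opt=2d/\alpha$ gives the $d^{1.5}/\alpha^{3.5}$ bound. It then remains to check post-processing. Zeroing coordinates with $\overline x_j\le 2/H$ loses at most $2d/H=\alpha\opt$ in objective, so the objective target should be restated with $\alpha$ rescaled. On surviving coordinates $\overline x_j>2/H$, a truncated entry $A_{ij}>H$ means the truncated row already has $A'_i\overline x>2$, so row $i$ is among the violators counted in $s$. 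For all non-violating rows, every entry on the support is untruncated, so $A_i\overline x=A'_i\overline x\le1+\alpha$ under the original constraint. Adjusting constants completes the argument. I expect the main obstacle to be the $O(U)$ $\ell_1$-sensitivity bound for $\nabla\smax^{U}$ under a single-coordinate change (including addition or removal of a row, which changes $n$). I would handle this by monotonicity of the projection in $t_U$: all uncapped coordinates move in the same direction and their total change equals the change in the single altered coordinate, which is at most $U$.
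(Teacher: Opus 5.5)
Your proposal is correct and follows the paper's proof essentially step for step. It uses the same ingredients: a sensitivity bound of $O(UH\opt)$ via the triangle inequality, the exponential mechanism with advanced composition, the potential bound on $\smax^{U}(\eta Ax_T)$ from \eqref{eq:smax-update} with $T\eta=2\ln n/\alpha$, the top-$s$ average comparison, and the same truncation/zeroing argument for case 2. Your monotonicity-in-$t_U$ argument, which gives $\|\nabla\smax^{U}(\eta Ax)-\nabla\smax^{U}(\eta A'x)\|_1\le 2U$, justifies a step the paper only asserts inside its $3H/s$ bound. Also, you do not need to rescale $\alpha$ for the objective: $2d/H=\alpha\opt$ gives exactly $(1-\alpha)\opt$.
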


\subsection{Proof of Theorem \ref{thm:packing}}

\paragraph{Privacy guarantee.}

We have the following lemma.
\begin{lem}
Algorithm \ref{alg:packing} is $\left(\epsilon,\delta\right)$-DP.
\end{lem}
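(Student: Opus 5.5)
The plan is to treat Algorithm \ref{alg:packing} as an adaptive composition of $T$ exponential mechanisms and to apply advanced composition. The pre-processing depends only on each entry individually, and the post-processing depends only on $\overline{x}$, so neither affects privacy. Once each call to $\po(x_t,\epsilon')$ is shown to be $\epsilon'$-DP for every fixed current iterate $x_t$, advanced composition over $T$ adaptive rounds gives $(\epsilon'\sqrt{2T\log(1/\delta)}+T\epsilon'(e^{\epsilon'}-1),\delta)$-DP. With $\epsilon'=\frac{\epsilon}{2\sqrt{T\log(1/\delta)}}$, the first term is at most $\epsilon/\sqrt{2}$. The second term is $O(\epsilon^{2}/\log(1/\delta))$, which is at most $\epsilon(1-1/\sqrt{2})$ for $\epsilon\le1$ and $\delta$ small. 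Hence the total is $(\epsilon,\delta)$-DP.

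The core step is a sensitivity bound for the score $Q(j)=-\opt\langle\nabla\smax^{U}(\eta Ax),A1_{j}\rangle$ under changing one row of $A$. Since the exponential mechanism is stated with the score multiplied by $\epsilon$ (not $\epsilon/(2\Delta)$), I will in effect run it on $Q/(2\Delta)$. Equivalently, I will verify that the sensitivity is $O(1)$ after the normalization implicit in the choice of $s$. The plan is to split the change into two parts. The first is the change of the term belonging to the modified row $i^{*}$, which is at most $\opt\cdot U\cdot H$, since $[\nabla\smax^{U}]_{i^{*}}\le U$ and the entries are at most $H$ (either $S$, or $H$ after truncation). The second is the change in the weights $r=\nabla\smax^{U}(\eta Ax)$ on the remaining rows.

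For the second part, I will argue via the characterization $r_i=\min\{U,\exp(\eta A_ix-t_U)\}$ with $\sum r_i=1$. Changing row $i^{*}$ changes $r_{i^{*}}$ by at most $U$. To compensate, the threshold $t_U$ shifts monotonically, so all other coordinates move in the same direction. Their total $\ell_1$ change is therefore at most $U$, and the full change $\|r-r'\|_1$ is at most $2U$. Since all other rows satisfy $A_i1_j\le H$, the change in $\sum_{i\ne i^{*}}r_iA_{ij}\opt$ is at most $UH\opt$. Thus the sensitivity is $\Delta\le2UH\opt=2H\opt/s$, which is well below the constant needed given the definition $s=\frac{60H\opt(\log d+\log(T/\beta))}{\alpha\epsilon'}$. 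The scaling there is chosen so that the utility loss $2\Delta\log(dT/\beta)/\epsilon'$ is $O(\alpha)$.

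The main obstacle is the monotonicity argument for $t_U$ under a change of one coordinate. This requires care because coordinates can enter or leave the capped set $\{r_i=U\}$. I will handle it by noting that each $r_i$ is a nonincreasing continuous function of $t_U$, and that $t_U$ solves $\sum_i r_i(t)=1$. Replacing one row changes only $r_{i^{*}}(\cdot)$. The new root $t'$ satisfies $\sum_{i\ne i^{*}}r_i(t')=1-r'_{i^{*}}$, and this differs from $1-r_{i^{*}}$ by at most $U$. Monotonicity in $t$ then gives $\sum_{i\ne i^{*}}|r_i(t)-r_i(t')|=|r_{i^{*}}-r'_{i^{*}}|\le U$. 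If the row is added or removed rather than replaced, I will treat it as a zero row so that the same argument applies.
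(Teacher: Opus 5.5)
Your proposal is correct and matches the paper's proof, which is exactly strong composition of the $T$ exponential-mechanism calls with $\epsilon'=\epsilon/(2\sqrt{T\log(1/\delta)})$; the paper uses the second-order term $T\epsilon'^{2}/2$ where you use $T\epsilon'(e^{\epsilon'}-1)$, and either works for constant $\epsilon$ and small $\delta$. Your monotone-threshold argument for $\|r-r'\|_{1}\le 2U$, giving sensitivity $2H\opt/s$, justifies a step the paper leaves implicit in Lemma~\ref{lem:EM-gaurantee-packing} (where it asserts $3H\opt/s$); one fix for added or removed rows: set the missing row's weight function identically to $0$ rather than padding with a zero row, since a zero row still receives weight $\min\{U,e^{-t_U}\}$, and your argument then goes through unchanged.
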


\begin{proof}
In each iteration, we use the Exponential mechanism with privacy parameter
$\epsilon'$ to find $\Delta_{t}$. By strong composition over $T$
iterations, Algorithm \ref{alg:packing} is $\left(\epsilon'\sqrt{2T\log\frac{1}{\delta}}+\frac{T\epsilon'^{2}}{2},\delta\right)$-DP.
For $\epsilon'=\frac{\epsilon}{2\sqrt{T\log\frac{1}{\delta}}}$ and
constant $\epsilon$, this implies Algorithm \ref{alg:packing} is
$\left(\epsilon,\delta\right)$-DP.
\end{proof}

\paragraph{Utility Analysis.}

First, we have the following guarantee of $\po$.
\begin{lem}
\label{lem:EM-gaurantee-packing}With probability at least $1-\beta$,
for all $t\in[T]$:
\begin{align*}
\left\langle \nabla\smax^{U}\left(\eta Ax_{t-1}\right),A\Delta_{t}\right\rangle  & \le1+\frac{\alpha}{10}.
\end{align*}
\end{lem}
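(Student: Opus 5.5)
The plan is to show three things in turn: some coordinate has a small score, the score function has small sensitivity, and the exponential mechanism's guarantee then transfers to every iteration by a union bound.

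\textbf{Step 1 (existence of a good coordinate).} Fix an iteration and write $r=\nabla\smax^{U}(\eta Ax_{t-1})\in\mathcal{D}^{U}\subseteq\Delta_n$. Let $x^*$ be an optimal solution, so $Ax^*\le 1$ and $1^\top x^*=\opt$. Since $r\ge0$ and $\sum_i r_i=1$,
\[
\sum_j \frac{x^*_j}{\opt}\langle r,A1_j\rangle\opt=\langle r,Ax^*\rangle\le 1 .
\]
The weights $x^*_j/\opt$ form a distribution over $[d]$, so some $j^*$ has $\langle r,A1_{j^*}\rangle\opt\le1$, i.e.\ $\max_j Q(j)\ge-1$.

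This holds for the original $A$. If pre-processing was applied, truncating entries only decreases $A$, so $x^*$ remains feasible for the truncated matrix and the same argument applies. Since $\opt$ is given as input, I will use this optimum.

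\textbf{Step 2 (sensitivity of $Q$).} Let $A$ and $A'$ be neighbors differing in one row $i_0$. I expect this step to be the main obstacle, because changing row $i_0$ changes $\eta Ax$ in coordinate $i_0$, which also shifts the threshold $t_U$ and hence every weight. I would bound the change in $Q(j)=-\opt\sum_i r_iA_{ij}$ by splitting it into two parts.

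First, the term for row $i_0$ itself changes by at most $\opt\cdot U\cdot H$, since $r_{i_0}\le U$ and the entries are at most $H$ (here $H=S$, or $H$ after truncation).

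Second, for the remaining rows, I would argue monotonicity: raising or lowering a single coordinate of the input moves all other coordinates of $\nabla\smax^{U}$ in the same direction. Because the total mass is fixed at $1$, the total change in mass on rows $i\ne i_0$ is at most $|r_{i_0}-r'_{i_0}|\le U$. Entries are at most $H$, so this part contributes at most $\opt\cdot U\cdot H$.

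Together this gives $\Delta_Q\le 2UH\opt$. I would justify the monotonicity from the KKT form $r_i=\min\{U,e^{y_i-t_U}\}$: increasing $y_{i_0}$ can only increase $t_U$, which weakly decreases every $r_i$ with $i\ne i_0$.

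\textbf{Step 3 (apply the exponential mechanism and union bound).} By the exponential mechanism guarantee with failure probability $\beta/T$ per iteration, the selected $j$ satisfies, with probability at least $1-\beta/T$,
\[
Q(j)\ge -1-\frac{2\cdot 2UH\opt(\log d+\log\frac{T}{\beta})}{\epsilon'} .
\]
Substituting $U=1/s$ with $s=\frac{60H\opt(\log d+\log\frac{T}{\beta})}{\alpha\epsilon'}$, the error term equals $\frac{4\alpha}{60}<\frac{\alpha}{10}$.

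Since $\Delta_t=1_j\opt$, we have $\langle r,A\Delta_t\rangle=-Q(j)\le 1+\frac{\alpha}{10}$. A union bound over the $T$ iterations gives overall success probability at least $1-\beta$.
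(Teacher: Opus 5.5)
Your proposal is correct and follows the paper's proof: a good coordinate exists because of the feasible $x^*$, $Q$ has sensitivity $O(UH\opt)$, the exponential mechanism is applied with failure probability $\beta/T$, and a union bound finishes. The one difference is the sensitivity step. You split the sum into the changed row versus the other rows and use monotonicity of $\nabla\smax^{U}$ under a single-coordinate change, which gives $2UH\opt$. The paper instead uses a triangle inequality that separates the matrix change from the weight change and gets $3UH\opt$. Your monotonicity argument is exactly what justifies the bound $\|\nabla\smax^{U}(\eta Ax)-\nabla\smax^{U}(\eta A'x)\|_1\le 2U$, which the paper uses without stating it.
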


\begin{proof}
We calculate the sensitivity of the score function $Q(j)=-\left\langle \nabla\smax^{U}\left(\eta Ax\right),A1_{j}\right\rangle \opt$.
As a reminder, we let $H$ be the upper bound on the entries of $A$.
If the algorithm does not do pre-processing, we simply have $H=S$.
Otherwise, $H=\frac{2d}{\alpha\cdot\opt}$.

Note that the coordinate of $\nabla\smax^{U}$ is bounded by $U=\frac{1}{s}$
and the entries of $A$ are bounded by $\frac{2d}{\alpha\cdot\opt}$.
For any $x$ and any neighboring inputs $A$ and $A'$ (after the
preproccessing), 
\begin{align*}
 & \left|\left\langle \nabla\smax^{U}\left(\eta Ax\right),A1_{j}\right\rangle -\left\langle \nabla\smax^{U}\left(\eta A'x\right),A'1_{j}\right\rangle \right|\\
\le\  & \left|\left\langle \nabla\smax^{U}\left(\eta Ax\right),A1_{j}\right\rangle -\left\langle \nabla\smax^{U}\left(\eta Ax\right),A'1_{j}\right\rangle \right|\\
 & \ +\left|\left\langle \nabla\smax^{U}\left(\eta Ax\right),A'1_{j}\right\rangle -\left\langle \nabla\smax^{U}\left(\eta A'x\right),A'1_{j}\right\rangle \right|\\
\le\  & \frac{3H}{s}.
\end{align*}
Hence the sensitivity of $Q$ is $\frac{3H\cdot\opt}{s}$. Since there
is a solution $x^{*}$ which satisfies $1^{\top}x^{*}=\opt$ and $Ax^{*}\le1$,
there must exist a coordinate $j$ such that $\left\langle \nabla\smax^{U}\left(\eta Ax_{t-1}\right),A1_{j}\right\rangle \opt\le1$.
There are $d$ coordinates, so with probability $1-\frac{\beta}{T}$,
in each iteration, the Exponential mechanism guarantees
\begin{align*}
\left\langle \nabla\smax^{U}\left(\eta Ax_{t-1}\right),A\Delta_{t}\right\rangle  & \le1+\frac{6H\cdot\opt\left(\log d+\log\frac{T}{\beta}\right)}{s\epsilon'}.
\end{align*}
Since $s=\frac{60H\cdot\opt\left(\log d+\log\frac{T}{\beta}\right)}{\alpha\epsilon'}$,
we have $\left\langle \nabla\smax^{U}\left(\eta Ax_{t-1}\right),A\Delta_{t}\right\rangle \le1+\frac{\alpha}{10}.$
The lemma statement can be obtained by taking the union bound over
$T$ iterations.
\end{proof}

\begin{lem}
With probability $1-\beta$, the output $\overline{x}$ of Algorithm
\ref{alg:packing} satisfies $1^{\top}\overline{x}\ge\left(1-\alpha\right)\opt$
and $A_{i}\overline{x}\le1+\alpha$ except for at most $s$ constraints
where 

1. Without pre- and post-processing:
\begin{align*}
s & =O\left(\frac{\left(A_{\max}\opt\right)^{1.5}(\log d+\log\frac{n}{\alpha\beta})\sqrt{\log n\log\frac{1}{\delta}}}{\alpha^{2}\epsilon}\right).
\end{align*}

2. With pre- and post-processing:
\begin{align*}
s & =O\left(\frac{d^{1.5}(\log d+\log\frac{n}{\alpha\beta})\sqrt{\log n\log\frac{1}{\delta}}}{\alpha^{3.5}\epsilon}\right).
\end{align*}

\end{lem}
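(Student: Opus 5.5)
The plan is a potential argument on $\Phi_t=\smax^U(\eta Ax_t)$, combined with Lemma~\ref{lem:EM-gaurantee-packing}.

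\textbf{Objective value.} Every update satisfies $1^\top\Delta_t=\opt$. Hence $1^\top\overline{x}=\opt$ before post-processing.

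\textbf{Potential at initialization.} At $x_0=0$ we have $\Phi_0=\max_{r\in\mathcal{D}^U}-\omega(r)\le\ln n$.

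\textbf{Per-step increase.} Each step adds $u=\eta A\Delta_t=\eta\,\opt\,A1_j$. Its entries are at most $D=\eta\opt H=\alpha/10\le1$. By (\ref{eq:smax-update}) and Lemma~\ref{lem:EM-gaurantee-packing},
\[
\Phi_{t+1}\le\Phi_t+(1+\tfrac{\alpha}{10})\,\eta\,(1+\tfrac{\alpha}{10}).
\]
Summing over $T$ steps gives $\Phi_T\le\ln n+(1+\alpha/10)^2\eta T$.

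\textbf{From the potential to violated constraints.} Let $S$ be any set of $s=1/U$ rows. The vector $r=1_S/s$ lies in $\mathcal{D}^U$, and $-\omega(r)=\ln s\ge0$. Therefore
\[
\eta\,\langle 1_S/s,\,Ax_T\rangle\le\Phi_T .
\]
Dividing by $\eta T$ yields
\[
\langle 1_S/s,\,A\overline{x}\rangle\le\frac{\ln n}{\eta T}+(1+\alpha/10)^2 .
\]
With $\eta=\alpha/(10H\opt)$ and $T=20H\opt\ln n/\alpha^2$, we get $\ln n/(\eta T)=\alpha/2$. The right-hand side is then at most $1+\alpha$.

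Now suppose more than $s$ constraints had $A_i\overline{x}>1+\alpha$. Choosing $S$ among them would make the average exceed $1+\alpha$, a contradiction. So at most $s$ constraints are violated. This settles case 1, where $H=S=O(A_{\max})$.

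\textbf{Computing $s$.} Substitute $\epsilon'=\epsilon/(2\sqrt{T\log(1/\delta)})$ into
\[
s=\frac{60H\opt(\log d+\log(T/\beta))}{\alpha\epsilon'} .
\]
This gives $s=O\bigl(H\opt\sqrt{T\log(1/\delta)}\,(\cdots)/(\alpha\epsilon)\bigr)$. Since $\sqrt T=O(\sqrt{H\opt\ln n}/\alpha)$, we obtain $s=O\bigl((H\opt)^{1.5}/\alpha^2\cdot\ldots\bigr)$. In case 2, with $H\opt=2d/\alpha$, this becomes $d^{1.5}/\alpha^{3.5}$. Also $\log T=O(\log(n/\alpha))$ after absorbing the $H\opt$ factor into the logs.

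\textbf{Case 2, the main obstacle.} The argument above shows $\tilde A_i\overline{x}\le1+\alpha$ for the truncated matrix $\tilde A$, outside $s$ rows. It also gives objective $\opt$. Two things must be checked.

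First, the comparator $x^*$ used in Lemma~\ref{lem:EM-gaurantee-packing} remains feasible for $\tilde A$, because $\tilde A\le A$ entrywise.

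Second, after zeroing coordinates with $\overline{x}_j\le2/H=\alpha\opt/d$, the objective drops by at most $d\cdot\alpha\opt/d=\alpha\opt$, which gives $(1-\alpha)\opt$. For every surviving coordinate, $\overline{x}_j>2/H$. If $A_{ij}>H$, then $\tilde A_{ij}\overline{x}_j=H\overline{x}_j>2$. This would already violate the truncated constraint, so such a row is among the $s$ exceptions. For all other rows, every surviving $j$ has $A_{ij}=\tilde A_{ij}$. Since post-processing only decreases $\overline{x}$, we get $A_i\overline{x}\le\tilde A_i\overline{x}\le1+\alpha$. Hence the exceptional set is unchanged.

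The probability $1-\beta$ comes from the union bound already taken in Lemma~\ref{lem:EM-gaurantee-packing}.
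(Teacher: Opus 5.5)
Your proposal is correct and follows essentially the same route as the paper: you bound the potential $\smax^U(\eta Ax_t)$ via the $(1+D)$ update inequality and Lemma~\ref{lem:EM-gaurantee-packing}, compare $\langle 1_S/s,\eta Ax_T\rangle\le\smax^U(\eta Ax_T)$ using $1_S/s\in\mathcal{D}^U$, substitute the parameters for $s$ in the same way, and handle truncation/post-processing with the paper's argument in contrapositive form (a surviving coordinate with $A_{ij}>H$ forces the truncated row to be violated). You also make explicit two points the paper leaves implicit, namely $-\omega(1_S/s)=\ln s\ge 0$ and that the comparator $x^*$ remains feasible for the truncated matrix.
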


\begin{proof}
First, we show that $A_{i}\overline{x}\le1+\alpha$ except for at
most $s$ constraints. We have$\left\Vert A\Delta_{t}\right\Vert _{\infty}\le\max_{i,j}A_{ij}\cdot\opt\le H\cdot\opt$.
Hence, for $\eta=\frac{\alpha}{10H\cdot\opt}$ and for all $t$, $\left\Vert \eta A\Delta_{t}\right\Vert _{\infty}\le\frac{\alpha}{10}\le1$.
Using property \eqref{eq:smax-update} of the $\smax^{U}$ function
\begin{align*}
\text{smax}^{U}\left(\eta Ax_{T}\right) & \leq\text{smax}^{U}\left(\eta Ax_{T-1}\right)+\left(1+\left\Vert \eta A\Delta_{t}\right\Vert _{\infty}\right)\left\langle \nabla\text{smax}^{U}\left(\eta Ax_{T-1}\right),\eta A\Delta_{T}\right\rangle \\
 & \le\text{smax}^{U}\left(\eta Ax_{T-1}\right)+\left(1+\frac{\alpha}{10}\right)\left\langle \nabla\text{smax}^{U}\left(\eta Ax_{T-1}\right),\eta A\Delta_{T}\right\rangle .
\end{align*}
Unrolling this over $T$ iterations, we obtain 
\begin{align*}
\text{smax}^{U}\left(\eta Ax_{T}\right) & \le\text{smax}^{U}\left(0\right)+\eta\left(1+\frac{\alpha}{10}\right)\sum^{T-1}_{t=1}\left\langle \nabla\text{smax}^{U}\left(\eta Ax_{t}\right),A\Delta_{t+1}\right\rangle \\
 & \overset{(i)}{\le}\ln n+\eta\left(1+\frac{\alpha}{10}\right)\cdot T\cdot\left(1+\frac{\alpha}{10}\right)\overset{(ii)}{\le}\eta T\left(1+\alpha\right).
\end{align*}
where $(i)$ comes from Lemma \ref{lem:EM-gaurantee-packing} and
$(ii)$ is due to the choice $\eta=\frac{\alpha}{10H\cdot\opt}$ and
$T=\frac{2\ln n}{\eta\alpha}.$ We then have
\begin{align*}
\max_{S\in\left[n\right]:\left|S\right|=s=\frac{1}{U}}\left\langle \frac{1_{S}}{\left|S\right|},\eta Ax_{T}\right\rangle  & \leq\max_{r\in\Delta_{n},r\leq U}\left\langle r,\eta Ax_{T}\right\rangle -\omega\left(r\right)\\
 & =\text{smax}^{U}\left(\eta Ax_{T}\right)\\
 & \le\eta T\left(1+\alpha\right).
\end{align*}
Therefore, we have that
\[
\max_{S\in\left[n\right]:\left|S\right|=s}\left\langle \frac{1_{S}}{\left|S\right|},A\overline{x}\right\rangle \le\max_{S\in\left[n\right]:\left|S\right|=s}\left\langle \frac{1_{S}}{\left|S\right|},A\frac{x_{T}}{T}\right\rangle \leq1+\alpha.
\]
This means among $s$ constraints with the largest values $A_{i}\overline{x}$,
the smallest value is upperbounded by $1+\alpha$. This implies all
but at most $s$ constraints $A_{i}x\le1+\alpha$ are satisfied, where
\begin{align*}
s & =O\left(\frac{H\cdot\opt\left(\log d+\log\frac{T}{\beta}\right)}{\alpha\epsilon'}\right).
\end{align*}

\textbf{Without pre- and post-processing}. We have
\begin{align*}
s & =O\left(\frac{S\cdot\opt\left(\log d+\log\frac{T}{\beta}\right)}{\alpha\epsilon'}\right)=O\left(\frac{\left(A_{\max}\opt\right)^{1.5}\left(\log d+\log\frac{n}{\alpha\beta}\right)\sqrt{\log n\log\frac{1}{\delta}}}{\alpha^{2}\epsilon}\right),
\end{align*}
where $S=O\left(A_{\max}\right)$. Further, since $\Delta_{t}=1_{j}\opt$
for some coordinate $j$, we have $1^{\top}\Delta_{t}=\opt$. Hence,
$\overline{x}$ satisfies: $1^{\top}\overline{x}=\frac{1}{T}\sum_{t}1^{\top}\Delta_{t}=\opt$,
as needed.

\textbf{With pre- and post-processing}. We have
\begin{align*}
s & =O\left(\frac{d\left(\log d+\log\frac{T}{\beta}\right)}{\alpha^{2}\epsilon'}\right)=O\left(\frac{d^{1.5}\left(\log d+\log\frac{n}{\alpha\beta}\right)\sqrt{\log n\log\frac{1}{\delta}}}{\alpha^{3.5}\epsilon}\right).
\end{align*}
For the objective, we also have, $\frac{1}{T}\sum_{t}1^{\top}\Delta_{t}=\opt.$
The output of the algorithm $\overline{x}$ is obtained by truncating
the entries of $\frac{1}{T}\sum_{t}1^{\top}\Delta_{t}$ that are smaller
than $\frac{\alpha\opt}{d}$. Therefore 
\begin{align*}
1^{\top}x & \ge\opt-d\cdot\frac{\alpha\opt}{d}=\left(1-\alpha\right)\opt.
\end{align*}
Next we show that at most $s$ constraints before truncation are not
satisfied. To distinguish between the input before and after the truncation,
let us denote by $A^{\text{init}}$ the original input and $A$ the
truncated input. Let $i\in[n]$ be such that $A_{i}\overline{x}\le1+\alpha$.
For $j\in[d]$ such that $A^{\text{init}}_{ij}\le\frac{2d}{\alpha\cdot\opt}$,
we have $A^{\text{init}}_{ij}=A_{ij}$, so $A^{\text{init}}_{ij}\overline{x}_{j}=A_{ij}\overline{x}_{j}$.
For $j\in[d]$ such that $A^{\text{init}}_{ij}>\frac{2d}{\alpha\cdot\opt}$,
we have $A_{ij}=\frac{2d}{\alpha\cdot\opt}$. It follows that $\overline{x}_{j}\le\frac{1+\alpha}{A_{ij}}\le\frac{\alpha\opt}{d}$,
which means $\overline{x}_{j}=0$. We then have $A^{\text{init}}_{ij}\overline{x}_{j}=0\le A_{ij}\overline{x}_{j}$.
Overall, $A^{\text{init}}_{i}\overline{x}\le A_{i}\overline{x}\le1+\alpha$.
Since at most $s$ truncated constraints are violated, it follows
that at most $s$ original constraints are violated.
\end{proof}

\section{Private Algorithm for Covering LP}

In this section, we seek to approximately solve the following problem
with differential privacy 
\[
\min\ 1^{\top}x\text{ s.t }Ax\ge1,x\ge0,
\]
where $A\in\R^{n\times d}_{\ge0}.$ We assume to know the value $\opt=\min\left\{ 1^{\top}x,Ax\ge1\right\} $,
so the goal is, given the approximation factor $\alpha$, to find
$x$ such that 
\[
1^{\top}x\le\left(1+\alpha\right)\opt\text{ and }Ax\ge1-\alpha.
\]
We also assume the entries of the input matrix $A$ are upper bounded
by a known value $R=O\left(A_{\max}\right)$, which can be found with
a binary search with only $\log$-dependence on the range of the entries
of $A$. 

\subsection{Algorithm}

\begin{algorithm}
\caption{Private Algorithm for Covering LP}

\label{alg:covering}

\textbf{Input}: $A\in\R^{n\times d}$, upper bound $R=O\left(A_{\max}\right)$
for the entries of $A$, optimal objective $\opt=\min\left\{ 1^{\top}x,Ax\ge1\right\} $,
$\alpha,\beta\in(0,1)$, privacy parameters $\epsilon,\delta$.

\begin{algorithmic}[1]

\STATE Let $H=\frac{40d}{\alpha\cdot\opt}$ if pre-processing else
$H=R$

\STATE \textbf{(Optional) Pre-processing}: for entry $A_{ij}$ in
$A$: $A_{ij}=\min\left\{ A_{ij},H\right\} $

\STATE Let $T=\frac{20H\cdot\opt\ln n}{\alpha},$ $\epsilon'=\frac{\epsilon}{2\sqrt{T\log\frac{1}{\delta}}}$,
$s=\frac{120H\cdot\opt\left(\log d+\log\frac{T}{\beta}\right)}{\alpha\epsilon'}$,
$U=\frac{1}{s}$

\STATE Initialize $x_{0}=0$

\STATE for $t=0,\dots,T-1$:

\STATE $\qquad$Let $\Delta_{t}=\co(x_{t},\epsilon')$

\STATE $\qquad$$x_{t+1}=x_{t}+\Delta_{t}$

\STATE \textbf{Output} $\overline{x}=\frac{x_{T}}{T}$

\STATE $\co(x,\epsilon)$:

\STATE $\qquad$Let $\eta=\frac{\alpha}{10H\cdot\opt}$

\STATE $\qquad$Use the Exponential mechanism with privacy parameter
$\epsilon$ to find coordinate $j\in[d]$ with score function $Q(i)=\left\langle \nabla\smin^{U}\left(\eta Ax\right),A1_{j}\right\rangle \opt$

\STATE $\qquad$\textbf{Output} $\Delta=1_{j}\opt$

\end{algorithmic}
\end{algorithm}

Our algorithm is presented in Figure \ref{alg:covering}. Similarly
to Algorithm \ref{alg:packing}, we also start with an optional preprocessing
step where large entries of the input matrix $A$ are truncated by
threshold $H=\frac{40d}{\alpha\cdot\opt}$. This step is required
only when we want to obtain a data-independent guarantee. Next, the
algorithm proceeds iteratively. In each iteration, the algorithm calls
$\co$, which uses the Exponential mechanism to find an update vector
$\Delta_{t}.$ This vector $\Delta_{t}$ satisfies $1^{\top}\Delta_{t}=\opt$
and with high probability $\left\langle \nabla\smin^{U}\left(\eta Ax_{t-1}\right),A\Delta_{t}\right\rangle \ge1-O\left(\alpha\right)$,
with the parameters defined in the algorithm. The output of the algorithm
is the average of the cumulative updates, i.e, $\overline{x}=\frac{\sum_{t}\Delta_{t}}{T}$. 

The guarantee of Algorithm \ref{alg:covering} is stated below. The
proof follows similarly to that of Theorem \ref{thm:packing} which
we defer to Appendix \ref{sec:Proof-of-Theorem-covering}.
\begin{thm}
\label{thm:covering}Given $\alpha,\beta\in(0,1)$, Algorithm \ref{alg:covering}
is $(\epsilon,\delta)$-differentially private and finds a solution
$x$ such that $1^{\top}x\le\left(1+\alpha\right)\opt$ and with probability
at least $1-\beta$, $A_{i}x\ge1-\alpha$ for all $i\in[n]$ except
for at most $s$ constraints, where

1. Without pre-processing:
\begin{align*}
s & =O\left(\frac{\left(A_{\max}\opt\right)^{1.5}\log d\log\frac{n}{\alpha\beta}\sqrt{\log n\log\frac{1}{\delta}}}{\alpha^{2}\epsilon}\right).
\end{align*}

2. With pre-processing:
\begin{align*}
s & =O\left(\frac{d^{1.5}\log d\log\frac{n}{\alpha\beta}\sqrt{\log n\log\frac{1}{\delta}}}{\alpha^{3.5}\epsilon}\right).
\end{align*}

\end{thm}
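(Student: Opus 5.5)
The proof mirrors that of Theorem \ref{thm:packing}, with $\smin^{U}$ in place of $\smax^{U}$ and all inequalities reversed. \textbf{Privacy} is identical: each call to $\co$ is an $\epsilon'$-DP exponential mechanism, so strong composition over $T$ rounds with $\epsilon'=\frac{\epsilon}{2\sqrt{T\log\frac{1}{\delta}}}$ gives $(\epsilon,\delta)$-DP. \textbf{Objective}: every $\Delta_{t}=1_{j}\opt$, so $1^{\top}\overline{x}=\opt\le(1+\alpha)\opt$ deterministically. Pre-processing only decreases entries of $A$, so the objective is unaffected and no post-processing is needed.

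\textbf{Oracle guarantee.} The entries of $\nabla\smin^{U}$ are at most $U=1/s$, and the entries of $A$ are at most $H$ after optional truncation. The argument of Lemma \ref{lem:EM-gaurantee-packing} then bounds the sensitivity of $Q(j)=\langle\nabla\smin^{U}(\eta Ax),A1_{j}\rangle\opt$ by $3H\opt/s$. Take an optimal $x^{*}$ with $1^{\top}x^{*}=\opt$ and $Ax^{*}\ge1$; if pre-processing is used, first argue it remains feasible up to losing $O(\alpha)$ (see below). Then $\sum_{j}\frac{x^{*}_{j}}{\opt}\langle p,A1_{j}\rangle\opt=\langle p,Ax^{*}\rangle\ge1$ for any $p\in\Delta_{n}$, so some $j$ has score at least $1$. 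The exponential mechanism with a union bound over $T$ rounds then gives, with probability $1-\beta$, $\langle\nabla\smin^{U}(\eta Ax_{t}),A\Delta_{t}\rangle\ge1-\frac{\alpha}{10}$ for all $t$, by the choice of $s$.

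\textbf{Potential argument.} Since $\|\eta A\Delta_{t}\|_{\infty}\le\eta H\opt=\frac{\alpha}{10}$, \eqref{eq:smin-update} gives $\smin^{U}(\eta Ax_{t+1})\ge\smin^{U}(\eta Ax_{t})+(1-\frac{\alpha}{10})\eta\langle\nabla\smin^{U}(\eta Ax_{t}),A\Delta_{t}\rangle$. Using $\smin^{U}(0)=\min_{r}\omega(r)\ge-\ln n$ and telescoping yields $\smin^{U}(\eta Ax_{T})\ge-\ln n+\eta T(1-\frac{\alpha}{10})^{2}\ge\eta T(1-\alpha/2)$, where the last step uses $T=\frac{20H\opt\ln n}{\alpha}$, so that $\ln n=\frac{\alpha}{2}\eta T$. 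Next, for any set $S$ with $|S|=s=1/U$, the uniform distribution $1_{S}/s$ lies in $\mathcal{D}^{U}$ and has $\omega=-\ln s\le0$. Hence $\smin^{U}(y)\le\langle\frac{1_{S}}{s},y\rangle-\ln s\le\langle\frac{1_{S}}{s},y\rangle$. Applying this to the $s$ smallest coordinates of $\eta Ax_{T}$ and dividing by $\eta T$ shows that the average of the $s$ smallest values of $A_{i}\overline{x}$ is at least $1-\alpha/2$. The largest of these is then at least $1-\alpha/2$, so all but at most $s$ constraints (strictly fewer than $s$, in fact) satisfy $A_{i}\overline{x}\ge1-\alpha$. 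Substituting $T$ and $\epsilon'$ into $s=O\bigl(\frac{H\opt\log(dT/\beta)\sqrt{T\log(1/\delta)}}{\alpha\epsilon}\bigr)$ gives the $(A_{\max}\opt)^{1.5}/\alpha^{1.5}$ bound with $H=R$. With $H=\frac{40d}{\alpha\opt}$ one gets $H\opt=40d/\alpha$, hence $d^{1.5}/\alpha^{3}$; the stated exponents are at least as large, so both claims hold up to the stated form.

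\textbf{Main obstacle: the pre-processing case.} Truncation only weakens $A$, so satisfying $A^{\text{trunc}}_{i}\overline{x}\ge1-\alpha$ implies $A^{\text{init}}_{i}\overline{x}\ge1-\alpha$; that direction is free. The difficulty is that $x^{*}$ might not be feasible for the truncated matrix, which breaks the oracle step. To fix this, I would modify $x^{*}$. For each $j$ with $x^{*}_{j}<\frac{\alpha\opt}{40d}$, drop that coordinate and rescale, or more simply replace $x^{*}$ by $x'=x^{*}+\frac{\alpha\opt}{40d}\mathbf{1}$ restricted appropriately. Consider any row $i$ and any $j$ with $A_{ij}$ truncated to $H$. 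If $x^{*}_{j}\ge\frac{\alpha\opt}{40d}$, then $H x^{*}_{j}\ge1$, so row $i$ is still covered. Otherwise, the truncated coordinates of $x^{*}$ that are tiny contribute at most $A^{\text{init}}_{ij}x^{*}_{j}$, which I cannot bound directly. The cleaner route is to use $x''=x^{*}$ with each tiny coordinate raised to $\frac{\alpha\opt}{40d}$. Each truncated entry then contributes at least $H\cdot\frac{\alpha\opt}{40d}=1$, and the objective grows by at most $\frac{\alpha\opt}{40}$. Feeding $x''/(1+\alpha/40)$ into the oracle argument costs only an extra $O(\alpha)$ in the score guarantee, which is absorbed by the constants.
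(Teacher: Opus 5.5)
Your overall structure matches the paper's proof. You use the same privacy and objective arguments, and the same oracle guarantee from a perturbed optimum: your $x''$, with small coordinates raised to $\frac{\alpha\opt}{40d}$, does the job of the paper's $y=\frac{1}{1+\alpha/20}\bigl(x^{*}+\frac{\alpha\opt}{20d}\bigr)$. Your averaging of scores with weights $x''_{j}/1^{\top}x''$ against the gradient is also the right way to extract a good coordinate. The telescoping of $\smin^{U}$ and the comparison with the uniform distribution on an $s$-set are likewise as in the paper.

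\textbf{The error is in the parameter arithmetic of the potential step.} With $\eta=\frac{\alpha}{10H\opt}$ and $T=\frac{20H\opt\ln n}{\alpha}$ one has $\eta T=2\ln n$. So $\ln n=\frac{1}{2}\eta T$, not $\frac{\alpha}{2}\eta T$. The telescoped bound then only gives
\[
\smin^{U}(\eta Ax_{T})\ge\eta T\Bigl(\bigl(1-\tfrac{\alpha}{10}\bigr)^{2}-\tfrac{1}{2}\Bigr).
\]
This only certifies that the $s$ smallest values of $A_{i}\overline{x}$ average at least about $\frac{1}{2}-\frac{\alpha}{5}$, far from $1-\alpha$.

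For the entropy loss $\ln n$ to be an $O(\alpha)$ fraction of $\eta T$, you need $\eta T=\Omega(\ln n/\alpha)$. Since $\eta\le\frac{\alpha}{10H\opt}$ is needed to keep $D\le\frac{\alpha}{10}$, this forces $T=\frac{2\ln n}{\eta\alpha}=\frac{20H\opt\ln n}{\alpha^{2}}$. That is exactly what the paper's proof uses in step (ii); the single power of $\alpha$ in the pseudocode of Algorithm~\ref{alg:covering} is inconsistent with that proof and with Algorithm~\ref{alg:packing}. Even with this $T$, the bound is $\eta T\bigl(1-\frac{7\alpha}{10}+\frac{\alpha^{2}}{100}\bigr)$, not $\eta T(1-\frac{\alpha}{2})$. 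It still exceeds $\eta T(1-\alpha)$, so the counting conclusion survives once the constant is adjusted.

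Consequently, your claimed improvements to $(A_{\max}\opt)^{1.5}/\alpha^{1.5}$ and $d^{1.5}/\alpha^{3}$ are artifacts of the slip. The remark that ``the stated exponents are at least as large'' is not why the theorem holds. With the correct $T$, $\sqrt{T}=\Theta\bigl(\sqrt{H\opt\ln n}/\alpha\bigr)$, so $s=\Theta\bigl(\frac{H\opt\log(dT/\beta)\sqrt{T\log(1/\delta)}}{\alpha\epsilon}\bigr)$ is $(H\opt)^{1.5}/\alpha^{2}$ up to logarithms. This gives $(A_{\max}\opt)^{1.5}/\alpha^{2}$ for $H=R$ and $d^{1.5}/\alpha^{3.5}$ for $H\opt=40d/\alpha$, precisely the stated bounds with no slack. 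With that correction your proof goes through and coincides with the paper's.
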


\section{Mixed Packing-Covering LP with DP}

In this section, we seek to approximately solve the following problem
with differential privacy: Find $x\ge0$ such that
\[
Px\le1\text{ and }Cx\ge1,
\]
where $P\in\R^{p\times d}_{\ge0}$, $C\in\R^{c\times d}_{\ge0}$.
The goal is, given the approximation factor $\alpha$, to find $x$
such that 
\[
Px\le1+\alpha\text{ and }Cx\ge1-\alpha.
\]
Similar to the previous cases, by Remark \ref{rem:assumption}, we
assume that the maximum entry in each column of $P$ and $C$ come
from a bounded range $[m,M]$ for known values $M>m>0$. By a binary
search, we can find upper bounds $S$ for the entries of $P$ and
$R$ for the entries of $C$ such that $S=O\left(P_{\max}\right)$
and $R=O\left(C_{\max}\right)$. We also assume to know a value $V=1^{\top}x$
for a feasible solution $x$.

We give two algorithms to solve this problem: one with guarantee depending
on the input data and one with guarantee only depending on the problem
dimensions. 
\begin{thm}
\label{thm:mixed}Given $\alpha,\beta\in(0,1)$ and let $n=p+c$.
Assuming that the maximum entry in each column of $P$ and $C$ lie
in range $[m,M]$ for known values $M>m>0$, we have the following

1. Algorithm \ref{alg:mixed} is $(\epsilon,\delta)$-differentially
private and finds a solution $x$ such that with probability at least
$1-\beta$, $P_{i}x\le1+\alpha$ for all $i\in[p]$ and $C_{i}x\ge1-\alpha$
for all $i\in[c]$, except for at most $s$ constraints, where 
\begin{align*}
s & =O\left(\frac{P_{\max}C_{\max}\sqrt{P_{\max}+C_{\max}}V^{2.5}}{\alpha^{4.5}\epsilon}\cdot\left(\log d+\log\frac{n}{\alpha\beta}\right)\sqrt{\log\frac{1}{\delta}\log n}\right),
\end{align*}
and $V=1^{\top}x$ for some feasible solution $x$ of the LP. 

2. Algorithm \ref{alg:mixed-1} is $(\epsilon,\delta)$-differentially
private and finds a solution $x$ such that with probability at least
$1-\beta$, $P_{i}x\le1+\alpha$ for all $i\in[p]$ and $C_{i}x\ge1-\alpha$
for all $i\in[c]$, except for at most $s$ constraints, where
\begin{align*}
s & =O\left(\frac{d^{3}\left(\log d+\log\frac{n}{\alpha\beta}\right)\sqrt{\log\frac{1}{\delta}\log n}}{\alpha^{6}\epsilon}+\frac{d^{2}\left(\log\log\frac{M}{m}+\log\frac{d}{\beta}\right)\log\frac{M}{m}}{\epsilon}\right).
\end{align*}

\end{thm}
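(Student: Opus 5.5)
Both parts follow the template of Theorems \ref{thm:packing} and \ref{thm:covering}: privacy comes from strong composition over $T$ calls to the exponential mechanism, and utility from tracking a single potential built from the truncated softmax and softmin. Since Algorithms \ref{alg:mixed} and \ref{alg:mixed-1} are not shown yet, I assume the natural design. We maintain $x_t$ and weights $p_t=\nabla\smax^{U}(\eta Px_t)$ on the packing rows and $q_t=\nabla\smin^{U}(\eta Cx_t)$ on the covering rows. In each step an oracle $\pc$ uses the exponential mechanism to choose a coordinate $j$, with step $\Delta=\gamma 1_j$. The score compares covering gain to packing cost, e.g. $Q(j)=\langle q_t,C1_j\rangle-\langle p_t,P1_j\rangle$ (or the ratio form, made Lipschitz). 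A feasible $x^*$ with $1^\top x^*=V$ guarantees that some $j$ has $\langle q_t,C1_j\rangle\ge\langle p_t,P1_j\rangle$ after rescaling by $V$, since averaging over $x^*/V$ gives $\langle q,Cx^*\rangle/V\ge 1/V\ge\langle p,Px^*\rangle/V$. This is the analogue of the existence step in Lemma \ref{lem:EM-gaurantee-packing}.

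For the utility in part 1, the potential is $\Phi_t=\smax^{U}(\eta Px_t)-\smin^{U}(\eta Cx_t)$. By \eqref{eq:smax-update} and \eqref{eq:smin-update}, with $D\le\eta\gamma\max(P_{\max},C_{\max})\le\alpha/10$, each step changes $\Phi$ by at most $(1+D)\langle p_t,\eta P\Delta\rangle-(1-D)\langle q_t,\eta C\Delta\rangle$. The oracle guarantee makes this at most $O(\alpha)\eta\gamma\cdot(\text{scale})$ per step. After $T$ steps, with $\smin^{U}(0)=-\ln n$ and $\smax^{U}(0)=\ln n$, we stop once $\smin^{U}(\eta Cx_T)\ge\eta\cdot\tau$ for a suitable target $\tau$. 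Dividing by $\tau$, the top-$s$ averaging argument from the packing proof gives the violation bounds: for packing, the top-$1/U$ average of $P\bar x$ is at most $1+\alpha$; for covering, symmetrically, the bottom-$1/U$ average of $C\bar x$ is at least $1-\alpha$. At most $2s$ rows are therefore violated.

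The main obstacle is the sensitivity calculation together with the parameter balance. Changing one row alters at most one coordinate of the gradient by up to $U$ and, through the normalizer $t_U$, shifts the others by total mass $O(U)$, exactly as in Lemma \ref{lem:EM-gaurantee-packing}. This gives score sensitivity $O((P_{\max}+C_{\max})/s)$ per unit of the relevant scale. Because the score must be accurate to $\alpha$ relative to quantities of order $1/V$ on both sides, the additive error $\Delta\log(dT/\beta)/\epsilon'$ forces $s\gtrsim(P_{\max}+C_{\max})V\log(dT/\beta)/(\alpha\epsilon')$ times an extra normalization factor. The number of iterations picks up a factor from both widths, roughly $T\approx P_{\max}C_{\max}V^2\log n/\alpha^{3}$ after normalization. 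Substituting $\epsilon'=\epsilon/(2\sqrt{T\log(1/\delta)})$ should give the stated $P_{\max}C_{\max}\sqrt{P_{\max}+C_{\max}}V^{2.5}/\alpha^{4.5}$. The exact exponents depend on how the mixed oracle normalizes the two terms, which I would check carefully.

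For part 2, I would mimic the pre- and post-processing. Any column whose maximum entry is below $m$ is irrelevant, and a first private step estimates, for each column $j$, the scale of its maximum entries by binary search over $[m,M]$. Each estimate is a noisy threshold or exponential-mechanism query over $\log(M/m)$ candidates and costs one dropped constraint per query per column, which accounts for the additive $d^2\log\frac{M}{m}(\log\log\frac{M}{m}+\log\frac d\beta)/\epsilon$ term. We then rescale variables so that all relevant entries lie in a range polynomial in $d/\alpha$, truncate large $P$ entries, and zero out tiny coordinates afterwards as in the packing post-processing. This effectively replaces $P_{\max},C_{\max}\lesssim d/\alpha$ and $V\lesssim d/\alpha$ in a way that yields $d^3/\alpha^6$. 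Controlling $V$ after rescaling is the delicate point, and I would handle it by guessing $V$ over powers of $(1+\alpha)$, per Remark \ref{rem:assumption}.
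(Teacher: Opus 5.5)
There is a genuine gap in part 1, and it sits exactly at the step you defer with ``made Lipschitz''.

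\textbf{Part 1: no progress guarantee.} With the score $\langle q_t,C1_j\rangle-\langle p_t,P1_j\rangle$, averaging over $x^*/V$ only gives $\max_jQ(j)\ge 0$. Any coordinate on which both $\langle q_t,C1_j\rangle$ and $\langle p_t,P1_j\rangle$ are nearly zero already attains this. So nothing forces $\smin^{U}(\eta Cx_t)$ to grow. The additive exponential-mechanism error then accumulates in $\Phi_t$ with no matching gain, and you cannot turn it into $\smax^{U}(\eta Px_T)\le(1+\alpha)\smin^{U}(\eta Cx_T)$. The event $\smin^{U}(\eta Cx_T)\ge\eta\tau$ need never occur.

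The paper's missing ingredient is the shift $C_{ij}\gets C_{ij}+\alpha/V$. It changes a covering row by $\frac{\alpha}{V}1^{\top}x$, i.e.\ by $\alpha$ at mass $V$, and it forces every denominator $\langle\nabla\smin^{U}(Cx),C1_j\rangle\ge\alpha/V$. That single lower bound does two jobs:
\begin{itemize}
\item it bounds the sensitivity of the ratio score $-\langle\nabla\smax^{U}(Px),P1_j\rangle/\langle\nabla\smin^{U}(Cx),C1_j\rangle$ by $3SRV^{2}/(s\alpha^{2})$;
\item it guarantees covering progress of at least $\frac{\alpha}{30(S+R)}\cdot\frac{\alpha}{V}$ per step, so $T=\Theta((S+R)V\ln n/\alpha^{3})$.
\end{itemize}
The stated bound is then $s=\Theta(SRV^{2}\log(dT/\beta)/(\alpha^{3}\epsilon'))$ with this $T$. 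Your accounting ($T\approx P_{\max}C_{\max}V^{2}/\alpha^{3}$, and $s\gtrsim(P_{\max}+C_{\max})V/(\alpha\epsilon')$ times an unspecified factor) puts the two factors in the wrong places and does not combine to the stated bound.

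Your difference score can also be rescued differently. Use $\langle q_t,C1_j\rangle-\lambda\langle p_t,P1_j\rangle$ with $\lambda=1-\Theta(\alpha)$. The same averaging shows its maximum is at least $(1-\lambda)/V$. An answer accurate to $(1-\lambda)/(2V)$ then gives both the ratio bound $1/\lambda$ and progress $\Omega(\alpha/V)$, at sensitivity only $O((S+R)/s)$.

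Finally, your data-dependent stopping time (and hence the scale you divide by) is not privatized. The paper instead picks the scale $k$ among powers of $1+\alpha$ in $[\frac{m}{\alpha TM},\frac{60M}{\alpha m}]$ with one more exponential mechanism that minimizes the number of violated rows. This is where the $[m,M]$ assumption enters part 1.

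\textbf{Part 2: the packing post-processing does not transfer.} Zeroing small $\bar x_j$ can break a covering row that relies on a large $C_{ij}$ times a small $\bar x_j$. Without zeroing, truncating $P$ only loosens the packing rows. The paper instead does the following.
\begin{itemize}
\item It privately estimates $M_j\approx\max_iP_{ij}$ per column with $\est$ at budget $\epsilon/(2d)$. Each call may leave $O(\frac{d}{\epsilon}\log\frac{M}{m}(\log\log\frac{M}{m}+\log\frac{d}{\beta}))$ packing rows above its threshold, and these are counted as dropped. Summed over the $d$ columns this is the additive term, not one row per query.
\item It modifies only $C$: $C_{ij}\gets\min\{C_{ij}+\alpha M_j/d,\ 40dM_j/\alpha\}$. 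Capping covering entries is the safe direction, and the shift bounds the denominators below by $\alpha M_j/d$.
\item The role of $V$ is played by $\sum_jM_jx^{*}_j=O(d)$, which holds automatically for any feasible $x^*$, so no guessing of $V$ is needed. Near-feasibility of the modified instance is witnessed by $y_j=\frac{1}{1+\alpha/20}(x^{*}_j+\frac{\alpha}{20dM_j})$.
\end{itemize}
Plugging your $P_{\max},C_{\max},V\lesssim d/\alpha$ into part 1 gives $d^{5}/\alpha^{9.5}$, not $d^{3}/\alpha^{6}$.

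Be aware that the paper's own arithmetic slips here. Its Case 1 bound $\frac{M_j}{s}\cdot\frac{3\cdot 40dM_j/\alpha}{(\alpha M_j/d)^{2}}$ equals $120d^{3}/(s\alpha^{3})$, not $120d^{2}/(s\alpha^{3})$. With $T=\Theta(d^{2}\ln n/\alpha^{4})$ this gives $d^{4}/\alpha^{6}$. Reaching $d^{3}/\alpha^{6}$ needs one more step, for example capping the ratio score at $2$. This is harmless, since some $j$ has ratio at most $(1+\alpha/20)^{2}$, and it cuts the Case 1 sensitivity to $O(d^{2}/(\alpha^{2}s))$.
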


The higher order of dependence on the problem parameters/dimension
and the approximation factor of both algorithms compared with the
counterparts for pure LPs is expected, due to the more complex interplay
between the packing and covering constraints. Even when privacy is
not considered, mixed packing-covering LPs require more iterations
for MWU algorithms to find an approximate solution.

\subsection{Algorithm with data-dependent guarantee}

\begin{algorithm}
\caption{Private Algorithm with Data-Dependent Guarantee for Mixed Packing-Covering LP}

\label{alg:mixed}

\begin{algorithmic}[1]

\STATE \textbf{Input}: $P\in\R^{p\times d}_{\ge0}$, $C\in\R^{c\times d}_{\ge0}$,
$\alpha,\beta\in(0,1)$, privacy parameters $\epsilon,\delta$, $V=1^{\top}x$
for some solution $x$

\STATE Initialize $x_{0}=0$

\STATE Let $C_{ij}\gets C_{ij}+\frac{\alpha}{V}$ for all $i,j$

\STATE Let $T=\Theta\left(\left(S+R\right)V\frac{\ln n}{\alpha^{3}}\right),$
$\epsilon'=\frac{\epsilon}{2\sqrt{T\log\frac{1}{\delta}}}$,

\STATE Let $s=O\left(\frac{SR\sqrt{S+R}V^{2.5}(\log d+\log\frac{T}{\beta})\sqrt{\log\frac{1}{\delta}\log n}}{\alpha^{4.5}\epsilon}\right)$,
$U=\frac{1}{s}$

\STATE for $t=0,\dots,T-1$:

\STATE $\qquad$Let $\Delta_{t}=\pc(x_{t},\epsilon')$

\STATE $\qquad$$x_{t+1}=x_{t}+\Delta_{t}$

\STATE Use the Exponential mechanism with privacy parameter $\epsilon'$
to find $k$ such that $k\cdot x_{T}$ minimizes the number of violated
constraints for $k$ being power of $1+\alpha$ in range $[\frac{m}{\alpha TM},\frac{60M}{\alpha m}]$.

\STATE \textbf{Output} $\overline{x}=k\cdot x_{T}$

\STATE $\pc(x,\epsilon)$:

\STATE $\qquad$Use the Exponential mechanism with privacy parameter
$\epsilon$ to find coordinate $j\in[d]$ with score function $Q(j)=-\frac{\left\langle \nabla\smax^{U}\left(Px\right),P1_{j}\right\rangle }{\left\langle \nabla\smin^{U}\left(Cx\right),C1_{j}\right\rangle }$

\STATE $\qquad$\textbf{Output} $\Delta=\frac{\alpha\cdot1_{j}}{30(S+R)}\cdot$

\end{algorithmic}
\end{algorithm}

Algorithm \ref{alg:mixed} uses $\pc$ to find the coordinate $j$
so that the relative increase between the packing and covering constraints,
characterized by $\frac{\left\langle \nabla\smax^{U}\left(Px\right),P1_{j}\right\rangle }{\left\langle \nabla\smin^{U}\left(Cx\right),C1_{j}\right\rangle }$
is minimized (i.e, we want the increase of the packing constraints
is small compared with that of the covering constraints). Similarly
to the cases of pure LPs, we use the exponential mechanism to find
this coordinate privately with score function $Q(j)=-\frac{\left\langle \nabla\smax^{U}\left(Px\right),P1_{j}\right\rangle }{\left\langle \nabla\smin^{U}\left(Cx\right),C1_{j}\right\rangle }$.
However, the sensitivity of this score function can be large since
the term $\left\langle \nabla\smin^{U}\left(Cx\right),C1_{j}\right\rangle $
can be as small as $C_{\min}$. To avoid the case when $C_{\min}$
is close to $0$, we perturb the entries of $C$ by adding to them
$\frac{\alpha}{V}$. While maintaining that $1^{\top}x=V$, the additional
term $\frac{\alpha}{V}\cdot1^{\top}x$ only contributes $\alpha$
to the covering constraints, hence if the modified covering constraints
are satisfied, the original are also satisfied approximately.

Note that $x_{T}$ guarantees that the ratio $\frac{\text{smax}^{U}\left(Px_{T}\right)}{\text{smin}^{U}\left(Cx_{T}\right)}\le1+\alpha.$
A final step before outputting the cumulated solution $x_{T}$ is
to find the scale $k$ so that $\overline{x}=k\cdot x_{T}$ so that
the constraints are satisfied. This can be done by using the exponential
mechanism in combination with a binary search.

We defer the analysis to Appendix \ref{sec:Proof-of-Theorem-mixed-1}.

\subsection{Algorithm with data-independent guarantee}

\begin{algorithm}
\caption{Private Algorithm with Data-Independent Guarantee Mixed Packing-Covering LP}

\label{alg:mixed-1}

\begin{algorithmic}[1]

\STATE \textbf{Input}: $P\in\R^{p\times d}_{\ge0}$, $C\in\R^{c\times d}_{\ge0}$,
$\alpha,\beta\in(0,1)$, privacy parameters $\epsilon,\delta$, known
range $[m,M]$ for $\max_{j}\left\{ P_{ij}\right\} $, $\max_{j}\left\{ C_{ij}\right\} $.

\STATE \textbf{Pre-processing}:

\STATE $\qquad$for $j\in[d]:$

\STATE $\qquad\qquad$Estimate $M_{j}=\max_{i}\left\{ P_{ij}\right\} $
and filter out $i$ such that $P_{ij}\ge M_{j}$ using $\est\left(m,M,\left\{ P_{ij}\right\} ^{p}_{j=1},\frac{\epsilon}{2d},\frac{\beta}{2d}\right)$

\STATE $\qquad$Let $C_{ij}\gets\min\left\{ C_{ij}+\frac{\alpha M_{j}}{d},\frac{40dM_{j}}{\alpha}\right\} $,
for all $i,j$

\STATE Initialize $x_{0}=0$,

\STATE Let $T=\Theta\left(\frac{d^{2}\ln n}{\alpha^{4}}\right),$
$\epsilon'=\frac{\epsilon}{4\sqrt{T\log\frac{1}{\delta}}}$,

\STATE Let $s=O\left(\frac{d^{3}\left(\log d+\log\frac{T}{\beta}\right)\sqrt{\log\frac{1}{\delta}\log n}}{\alpha^{6}\epsilon}\right)$,
$U=\frac{1}{s}$

\STATE for $t=0,\dots,T-1$:

\STATE $\qquad$Let $\Delta_{t}=\pc(x_{t},\epsilon')$

\STATE $\qquad$$x_{t+1}=x_{t}+\Delta_{t}$

\STATE Use the Exponential mechanism with privacy parameter $\epsilon'$
to find $k$ such that $k\cdot x_{T}$ minimizes the number of violated
constraints for $k$ being power of $1+\alpha$ in range $[\frac{m}{\alpha TM},\frac{60M}{\alpha m}]$.

\STATE \textbf{Output} $\overline{x}=k\cdot x_{T}$

\STATE $\est(m,M,\left\{ a_{i}\right\} ,\epsilon,\beta):$

\STATE $\qquad$Use the Exponential mechanism with privacy parameter
$\epsilon$ to find $k$ such that $2^{k}m\in[m,2M]$ with score function
$Q(k)=-\left|\left\{ a_{i}:a_{i}\ge2^{k}m\right\} \right|-\frac{2\left(\log\log\frac{2M}{m}+\log\frac{1}{\beta}\right)}{\epsilon}k$

\STATE $\qquad$\textbf{Output} $2^{k}m$.

\end{algorithmic}
\end{algorithm}

To obtain a bound independent of the input data, we need to pre-process
the data. As in the case of pure covering LPs, we need to clip the
large entries of the covering matrix. The challenge here is that the
clipping threshold has to take the packing constraints into account
and needs to be relative to the entries of the packing constraints.
For this reason, we need to estimate the maximum entries for each
coordinate of the packing matrix. This can be done relatively easy
using the exponential mechanism over the range of the entries $[m,M]$.
Once we have the estimate of $M_{j}=\max_{i}\left\{ P_{ij}\right\} $,
we can clip the entries of the covering matrix by $\frac{40dM_{j}}{\alpha}$,
besides perturbing them by $\frac{\alpha M_{j}}{d}$ as in Algorithm
\ref{alg:mixed}. The algorithm then proceeds iteratively, similarly
to Algorithm \ref{alg:mixed}. We give the full analysis in Appendix
\ref{sec:Analysis-of-Algorithm-mixed2}.

\bibliographystyle{plainnat}
\bibliography{ref}

\appendix

\section{Tools: Truncated Softmax and Softmin }

We expand the properties of $\smax^{U}$ and $\smin^{U}$ and their
proofs from Section \ref{sec:Preliminaries}. Recall that, given a
scalar parameter $0<U\leq1$, define $\mathcal{D}^{U}=\left\{ r\in\Delta_{n}:\max r\leq U\right\} $.
We define 
\[
\text{smax}^{U}\left(x\right)=\max_{r\in\mathcal{D}^{U}}\left\langle x,r\right\rangle -\omega\left(r\right),
\]
where $\omega\left(r\right)=\sum_{i}r_{i}\ln r_{i}$ is the negative
entropy. 
\begin{prop}
The gradient of $\smax^{U}\left(x\right)$ is given by
\begin{align*}
\left[\nabla\smax^{U}\left(x\right)\right]_{i} & =\min\left\{ U,\exp\left(x_{i}-t_{U}(x)\right)\right\} .
\end{align*}
\end{prop}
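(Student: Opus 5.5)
Since $\omega$ is strictly convex and $\mathcal{D}^{U}$ is compact and convex, the maximizer $r^{*}(x)$ of $\langle x,r\rangle-\omega(r)$ over $\mathcal{D}^{U}$ is unique. By Danskin's theorem, or equivalently because $\smax^{U}$ is the convex conjugate of $\omega+\iota_{\mathcal{D}^{U}}$, which is strictly convex, $\smax^{U}$ is differentiable with $\nabla\smax^{U}(x)=r^{*}(x)$. So the whole task reduces to identifying $r^{*}(x)$ in closed form, which we will do through the KKT conditions of this concave maximization. We assume $U\ge 1/n$, so that $\mathcal{D}^{U}$ is nonempty.

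First, $r^{*}_{i}>0$ for every $i$. The derivative of $-r_{i}\ln r_{i}$ blows up as $r_{i}\to0^{+}$. Hence if some coordinate were zero, we could shift a little mass to it from a positive coordinate (one with $r_{k}>0$; since we decrease $r_{k}$, the cap $U$ is not violated) and strictly increase the objective. This means the constraints $r_{i}\ge0$ are inactive, and only the equality $\sum_{i}r_{i}=1$, with multiplier $t$, and the caps $r_{i}\le U$, with multipliers $\mu_{i}\ge0$, remain. Slater's condition holds whenever $U>1/n$, since the uniform distribution is strictly feasible; the case $U=1/n$ is trivial because the domain is a single point. Stationarity gives
\[
x_{i}-\ln r_{i}-1-t-\mu_{i}=0,\qquad\text{that is,}\qquad r_{i}=\exp(x_{i}-t-1-\mu_{i}).
\]
Absorbing the constant into $t_{U}(x):=t+1$ gives $r_{i}=\exp(x_{i}-t_{U}(x)-\mu_{i})$.

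Next we use complementary slackness. If $r_{i}<U$ then $\mu_{i}=0$, so $r_{i}=\exp(x_{i}-t_{U})$, and this value is below $U$. If $r_{i}=U$ then $\mu_{i}\ge0$, so $U=r_{i}\le\exp(x_{i}-t_{U})$. In both cases $r_{i}=\min\{U,\exp(x_{i}-t_{U}(x))\}$. The scalar $t_{U}(x)$ is pinned down by $\sum_{i}r_{i}=1$. The map $t\mapsto\sum_{i}\min\{U,\exp(x_{i}-t)\}$ is continuous and nonincreasing. It is strictly decreasing wherever some coordinate is uncapped, tends to $nU\ge1$ as $t\to-\infty$, and tends to $0$ as $t\to+\infty$. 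So a suitable $t_{U}(x)$ exists, and the resulting $r$ is unique.

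Conversely, a point of this form together with $\mu_{i}=\max\{0,x_{i}-t_{U}-\ln U\}$ satisfies all the KKT conditions. Since the problem is concave, KKT is sufficient, so this point is the maximizer. The main obstacle is the differentiability step, namely justifying the envelope/Danskin argument. Uniqueness of the maximizer, which follows from strict concavity of $-\omega$ on the compact convex domain, is exactly what this step needs. Alternatively, one can cite the standard fact that the conjugate of a strictly convex function, here restricted to a compact convex set, is differentiable with gradient equal to the argmax.
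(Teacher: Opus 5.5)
Your proof is correct and follows the same route as the paper's: Lagrange multipliers for the simplex and cap constraints, stationarity giving $r_i=\exp(x_i-t_U(x)-\mu_i)$, and complementary slackness forcing $\mu_i=0$ on uncapped coordinates. It is more complete than the paper's argument, since you justify $\nabla\smax^{U}(x)=r^{*}(x)$ via uniqueness of the maximizer, prove $r^{*}>0$ instead of calling it standard, and use $\mu_i\ge0$ to get the $\min$ form rather than only $r_i\in\{U,\exp(x_i-t_U(x))\}$.
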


\begin{proof}
Using Lagrange multipliers we write
\begin{align*}
\max_{r\in\mathcal{D}^{U}}\left\langle x,r\right\rangle -\omega\left(r\right) & =\max_{r}\min_{y\geq0,z\geq0,\lambda}\left\langle x,r\right\rangle -\omega\left(r\right)+\left\langle y,r\right\rangle +\left\langle z,U-r\right\rangle +\lambda\left(1-\left\langle 1,r\right\rangle \right)\\
 & =\min_{y\geq0,z\geq0,\lambda}\max_{r}\left\langle x+y-z-\lambda1,r\right\rangle -\omega\left(r\right)+\left\langle z,U\right\rangle +\lambda.
\end{align*}
By first order optimality for the maximization problem we have
\[
x+y-z-\lambda1=1+\ln r
\]
and satisfy complementary slackness conditions $z_{i}\left(U-r\right)_{i}=0$,
$y_{i}r_{i}=0$.

We also note that by complementary slackness $y$ is nonzero only
in the degenerate case where all the entries of $x$ are either $U$
or $0$. Indeed, let $C=\left\{ i:r_{i}=U\right\} $, and let $\alpha=1-U\cdot\left|C\right|$
denote the remaining mass not on maximized entries. Provided that
$\alpha>0$, we can verify via a standard argument that $r$ will
put mass on all coordinates, otherwise it contradicts optimality.
Hence we only have to consider
\[
r=\exp\left(x-z-t\right)
\]
for some scalar $t$, and where $z_{i}\left(U-r\right)_{i}=0$. This
means that entries of $r$ smaller than $U$ satisfy $z_{i}=0$ and
hence this entries satisfy $r_{i}=\exp\left(x_{i}-t\right)$. Otherwise
$r_{i}=U$. This leads to the expression 
\[
\left[\nabla\text{smax}^{U}\left(x\right)\right]_{i}=r_{i}\in\left\{ U,\exp\left(x_{i}-t_{U}\left(x\right)\right)\right\} 
\]
with $t_{U}\left(x\right)$ chosen so that the resulting vector belongs
to the unit simplex.
\end{proof}

\begin{prop}
Let $C=\left\{ i:\left[\nabla\smax^{U}\left(x\right)\right]_{i}=U\right\} $
and $\overline{C}=\left[n\right]\setminus C$. We have 
\begin{align*}
\nabla^{2}\text{smax}^{U}\left(x\right) & \preceq\text{diag}\left(\left[\begin{array}{c}
\nabla\text{smax}^{U}\left(x\right)\bigg\vert_{\overline{C}}\\
0_{C}
\end{array}\right]\right).
\end{align*}
\end{prop}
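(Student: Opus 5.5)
We differentiate the explicit gradient formula from the preceding proposition. Write $r=\nabla\smax^{U}(x)$ with $r_i=\min\{U,\exp(x_i-t)\}$ and $t=t_U(x)$, and work at a generic point where no coordinate is exactly at the kink $\exp(x_i-t)=U$. There $C$ is locally constant and $r$ is differentiable. Non-generic points can be handled afterwards by a limiting/continuity argument, or by reading the inequality as one about directional second derivatives: the Hessian of a convex function exists almost everywhere, and $r$ is Lipschitz.

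At a generic point, $r_i=U$ is constant for $i\in C$, so $\partial r_i/\partial x_k=0$ for $i\in C$. For $i\in\overline{C}$ we have $r_i=\exp(x_i-t)$, so
\[
\frac{\partial r_i}{\partial x_k}=r_i\left(\delta_{ik}-\frac{\partial t}{\partial x_k}\right).
\]
To find $\partial t/\partial x_k$, differentiate the normalization $\sum_i r_i=1$. This gives $\sum_{i\in\overline{C}}r_i(\delta_{ik}-\partial_k t)=0$, so $\partial_k t=r_k 1[k\in\overline{C}]/m$ with $m=\sum_{i\in\overline{C}}r_i=1-U|C|$.

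Let $\bar r$ denote $r$ restricted to $\overline{C}$ and zero on $C$. Then
\[
\nabla^{2}\smax^{U}(x)=\diag(\bar r)-\frac{1}{m}\bar r\bar r^{\top}.
\]
The case $m=0$ is the degenerate case where every coordinate of $r$ equals $U$ or $0$. In that case $\overline{C}$ carries no mass, $r$ is locally constant, and the Hessian is $0$, so the claim holds trivially.

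In the main case $m>0$, the correction term $-\frac{1}{m}\bar r\bar r^{\top}$ is negative semidefinite, which gives exactly $\nabla^{2}\smax^{U}(x)\preceq\diag(\bar r)$, the claimed bound.

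The main obstacle is justifying the Hessian at kink points, where some coordinate satisfies $\exp(x_i-t)=U$ exactly and $\nabla\smax^{U}$ is not differentiable. I would handle this in two steps.

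First, show that along any direction $u$ the one-sided directional derivative of $r$ still has the above form, with $C$ replaced by the active set in direction $u$.

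Second, note that adding indices to $C$ only removes nonnegative diagonal mass, since those coordinates still have $r_i=U\ge0$ as upper-bound contributions. So the bound with the diagonal over the original $\overline{C}$ still dominates.

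This is enough for the downstream use, namely integrating $u^{\top}\nabla^{2}\smax^{U}u$ along segments to obtain the update inequality \eqref{eq:smax-update}.
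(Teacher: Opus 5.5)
Your computation at points where no coordinate sits exactly at the cap is correct, and it is the paper's argument. Rows and columns indexed by $C$ vanish. On $\overline{C}$ the Hessian is the softmax Hessian for total mass $m=1-U|C|$, i.e.\ $\diag(\bar r)-\frac{1}{m}\bar r\bar r^{\top}$, and dropping the negative semidefinite rank-one term gives the bound. Your implicit differentiation of $t_U$ actually gets the $1/m$ normalization more cleanly than the paper's intermediate identity $H=\alpha(\diag(g)-gg^{\top})$.

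The part that fails is your treatment of kink points.
\begin{itemize}
\item With $C=\{i:[\nabla\smax^U(x)]_i=U\}$ as in the statement, a coordinate with $\exp(x_i-t_U(x))=U$ already lies in $C$.
\item So the active set along a direction $u$ is a subset of $C$, not a superset. Kink coordinates pushed below the cap join $\overline{C}$ and add diagonal mass $U$.
\item Your second step therefore runs in the wrong direction, and the stated inequality is false there under either of your readings.
\end{itemize}
A concrete counterexample: take $n=3$, $U=0.4$, $x=(\ln\frac{4}{3},0,0)$, so $\nabla\smax^U(x)=(0.4,0.3,0.3)$ and $C=\{1\}$.
\begin{itemize}
\item For $s\ge 0$, the maximizer stays $(0.4,0.3,0.3)$, so $\smax^U(x+se_1)$ is linear in $s$.
\item For small $s<0$, nothing is capped, so $\smax^U(x+se_1)=\ln\sum_i e^{x_i+s\delta_{i1}}$. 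Its second derivative tends to $0.4\cdot 0.6=0.24$ as $s\to 0^-$.
\item The claimed bound gives $e_1^{\top}\diag(\bar r)e_1=0$.
\end{itemize}
Neither the one-sided directional second derivative nor the limit of nearby Hessians respects the stated bound. The Hessian simply does not exist at such points.

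If you had in mind $C$ as the strictly capped set, your argument is fine. But it then proves the bound with $\overline{C}$ enlarged by the kink coordinates, i.e.\ with $\overline{C}$ replaced by $\{i:\exp(x_i-t_U(x))\le U\}$, not the stated bound.

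The proposition should be read only where $C$ is locally constant. This is exactly how the paper uses it: it applies the bound on intervals where $C(t)$ is constant and uses continuity of $g'$.

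If you want a version valid at every point, use the enlarged $\overline{C}$ above, or simply bound by $\diag(\nabla\smax^U(x))$. For $0\le u\le D$, either bound still gives $u^{\top}(\cdot)\,u\le D\langle\nabla\smax^U(x),u\rangle$. That is all the downstream update inequality needs.
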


\begin{proof}
It is easy to verify that $\nabla^{2}\text{smax}^{U}\left(x\right)$
must be $0$ on rows/columns corresponding to entries in $C$. Meanwhile
all the entries in $\nabla\text{smax}^{U}\bigg\vert_{\overline{C}}$
have $\alpha=1-U\cdot\left|C\right|$ total mass distributed among
them. Hence we can view $\nabla\text{smax}^{U}\bigg\vert_{\overline{C}}$
as a vector defined over $\alpha\Delta_{\left|\overline{C}\right|}$.
In this case we can verify that the gradient is just $g=\alpha\cdot\frac{\exp\left(x\right)}{\sum\exp\left(x\right)}$
and thus the Hessian satisfies the standard identity $H=\alpha\cdot\left(\text{diag}\left(g\right)-gg^{\top}\right)$.
Thus we have that 
\begin{align*}
\nabla^{2}\text{smax}^{U}\left(x\right) & =\text{diag}\left(\left[\begin{array}{c}
\nabla\text{smax}^{U}\left(x\right)\bigg\vert_{\overline{C}}\\
0_{C}
\end{array}\right]\right)-\frac{1}{\alpha}\cdot\left[\begin{array}{c}
\nabla\text{smax}^{U}\left(x\right)\bigg\vert_{\overline{C}}\\
0_{C}
\end{array}\right]\left[\begin{array}{c}
\nabla\text{smax}^{U}\left(x\right)\bigg\vert_{\overline{C}}\\
0_{C}
\end{array}\right]^{\top}\\
 & \preceq\text{diag}\left(\left[\begin{array}{c}
\nabla\text{smax}^{U}\left(x\right)\bigg\vert_{\overline{C}}\\
0_{C}
\end{array}\right]\right).\qedhere
\end{align*}
\end{proof}

\begin{prop}
For all $x$ and $u\in\R^{n}_{\ge0}$
\begin{align*}
\text{smax}^{U}\left(x+u\right) & \leq\text{smax}^{U}\left(x\right)+\frac{\exp\left(D\right)-1}{D}\left\langle \nabla\text{smax}^{U}\left(x\right),u\right\rangle ,
\end{align*}
where $D=\max_{i}\left\{ u_{i}\right\} $.
\end{prop}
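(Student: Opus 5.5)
Define $g(\tau)=\smax^{U}(x+\tau u)$ for $\tau\in[0,1]$. Since $\smax^{U}$ is a maximum of affine functions in $x$ plus a fixed term, it is convex, so $g$ is convex. The plan is to bound $g'(\tau)$ by an exponentially growing multiple of $g'(0)=\langle\nabla\smax^{U}(x),u\rangle$, and then integrate.

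\textbf{The derivative.} We have
\[
g'(\tau)=\langle\nabla\smax^{U}(x+\tau u),u\rangle,
\]
and it suffices to show
\[
g'(\tau)\le e^{\tau D}\,g'(0).
\]
Integrating this over $\tau\in[0,1]$ then gives
\[
g(1)-g(0)\le\frac{e^{D}-1}{D}\langle\nabla\smax^{U}(x),u\rangle,
\]
which is the claim.

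\textbf{Bounding the growth of $g'$.} Write $p(\tau)=\nabla\smax^{U}(x+\tau u)$ and let $C(\tau)$ be the set of coordinates of $p(\tau)$ capped at $U$. For almost every $\tau$, the active set $C$ is locally constant. There the previous proposition gives
\[
g''(\tau)=u^{\top}\nabla^{2}\smax^{U}\,u\le\sum_{i\notin C}p_{i}u_{i}^{2}.
\]
Since $0\le u_i\le D$, this is at most $D\sum_{i\notin C}p_iu_i\le D\,g'(\tau)$, where the last step uses $p_iu_i\ge0$ on all coordinates. Hence $(\ln g')'\le D$ wherever $g'>0$. If $g'(0)=0$, then every $i$ with $u_i>0$ has $p_i=0$, which is impossible because the entropy regularizer makes every $p_i$ strictly positive; so this case is trivial when $u=0$. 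Grönwall's inequality then gives $g'(\tau)\le e^{\tau D}g'(0)$.

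\textbf{Main obstacle: regularity at breakpoints.} The difficulty is that $g$ is only piecewise $C^2$. The active set $C(\tau)$ can change at finitely many breakpoints, where $\smax^{U}$ may fail to be twice differentiable. I would handle this as follows. The gradient $p$ is given by the explicit formula $\min\{U,\exp(x_i-t_U(x))\}$. Here $t_U$ is continuous and monotone in $x$, so $p$ is continuous and piecewise smooth in $\tau$, and $g'$ is continuous and piecewise $C^1$.

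The differential inequality $(\ln g')'\le D$ holds on each piece. Since $\ln g'$ is continuous, integrating piece by piece is valid. The number of breakpoints is finite because each coordinate enters or leaves the cap at most a bounded number of times: $x_i+\tau u_i-t_U$ is monotone-ish along the segment. Alternatively, one can approximate by a perturbation in which the breakpoints are isolated.

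A cleaner alternative avoids the Hessian entirely. Directly from the formula, $\frac{d}{d\tau}p_i\le u_i p_i$ for uncapped $i$, because $t_U$ is nondecreasing in $\tau$ when $u\ge0$. Also $\frac{d}{d\tau}p_i=0$ for capped $i$. This gives $p_i(\tau)\le e^{\tau u_i}p_i(0)\le e^{\tau D}p_i(0)$ coordinatewise, which immediately yields the bound on $g'$. I would present this coordinatewise argument as the main proof. Its key check is that $t_U(x+\tau u)$ is nondecreasing in $\tau$, which follows because every uncapped weight would otherwise increase and the total mass could not remain equal to $1$.
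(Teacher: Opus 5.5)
Your proof is correct, and the coordinatewise argument you choose as the main proof takes a genuinely different route from the paper's. The paper's proof is your first approach. It uses the Hessian bound $\nabla^{2}\smax^{U}\preceq\diag\bigl(p|_{\overline{C}},0_{C}\bigr)$ to get $g''\le D\,g'$ on intervals where the capped set is constant, then integrates $(\ln g')'\le D$ using only continuity of $g'$; it never addresses the breakpoints. Your finiteness justification in that route is also shaky: $x_i+\tau u_i-t_U(x+\tau u)$ need not be monotone, since on each piece $\frac{d}{d\tau}t_U$ is a weighted average of the $u_j$ over uncapped $j$ and can cross $u_i$. Your alternative makes this moot. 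Once $t(\tau)=t_U(x+\tau u)$ is nondecreasing, you do not need to differentiate $p_i$ at all:
\[
p_i(\tau)=\min\{U,e^{x_i+\tau u_i-t(\tau)}\}\le\min\{U,e^{\tau u_i}e^{x_i-t(0)}\}\le e^{\tau u_i}\,p_i(0).
\]
Your mass-counting argument for monotonicity is right; in the degenerate case $U=1/n$ the gradient is the constant vector $\frac{1}{n}1$ and there is nothing to prove. I suggest stating the bound in this direct form rather than as a differential inequality. The only analytic fact left is that $g$ is $C^{1}$: the maximizer is unique, so $\smax^{U}$ is a differentiable convex function, which gives $g(1)-g(0)=\int_{0}^{1}g'$. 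Your route avoids the Hessian and all breakpoint bookkeeping. It also yields the sharper bound $\smax^{U}(x+u)-\smax^{U}(x)\le\sum_{i}p_i(0)\,(e^{u_i}-1)$, which implies the stated one because $(e^{a}-1)/a$ is increasing. The paper's route, in exchange, is a generic relative-smoothness template, $u^{\top}\nabla^{2}f\,u\le D\langle\nabla f,u\rangle$ for $u\ge0$. It does not rely on the explicit threshold form of the gradient that your argument exploits.
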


\begin{proof}
Let $g\left(t\right)=\text{smax}^{U}\left(x+tu\right)$. Also let
$C\left(t\right)$ be the set of coordinates $i$ for which $\left[\nabla\text{smax}^{U}\left(x+tu\right)\right]_{i}=U$.
We have
\begin{align*}
g'\left(t\right) & =\left\langle \nabla\text{smax}^{U}\left(x+tu\right),u\right\rangle \\
g''\left(t\right) & =u^{\top}\nabla^{2}\text{smax}^{U}\left(x+tu\right)u\\
 & \leq u^{\top}\text{diag}\left(\left[\begin{array}{c}
\nabla\text{smax}^{U}\left(x+tu\right)\bigg\vert_{\overline{C(t)}}\\
0_{C(t)}
\end{array}\right]\right)u\\
 & \leq D\cdot\left\langle \nabla\text{smax}^{U}\left(x+tu\right),u\right\rangle \\
 & =D\cdot g'\left(t\right).
\end{align*}
Thus,
\[
g''\left(t\right)\leq D\cdot g'\left(t\right).
\]
Note that $g'$ is continuous (recall that since we regularize with
a strongly convex regularizer, $\text{smax}^{U}$ is smooth and thus
$\nabla\text{smax}^{U}$ is continuous) and the inequality holds on
each of the intervals where $C\left(t\right)$ is constant. Thus $\ln g'\left(t\right)$
is also continuous and
\[
\frac{d}{dt}\ln g'\left(t\right)=\frac{g"\left(t\right)}{g'\left(t\right)}\leq D.
\]
Therefore,
\begin{align*}
\ln g'\left(t\right) & \leq\ln g'\left(0\right)+t\cdot D\\
g'\left(t\right) & \leq\exp\left(t\cdot D\right)\cdot g'\left(0\right).
\end{align*}
So we have 
\begin{align*}
g\left(1\right) & =g\left(0\right)+\int^{1}_{0}g'\left(t\right)dt\\
 & \leq g\left(0\right)+\int^{1}_{0}\exp\left(t\cdot D\right)g'\left(0\right)dt\\
 & =g\left(0\right)+g'\left(0\right)\cdot\frac{e^{D}-1}{D}.
\end{align*}
and therefore,
\[
\text{smax}^{U}\left(x+u\right)\leq\text{smax}^{U}\left(x\right)+\frac{\exp\left(D\right)-1}{D}\left\langle \nabla\text{smax}^{U}\left(x\right),u\right\rangle .\qedhere
\]
\end{proof}

\section{Analysis of Algorithm \ref{alg:covering}}\label{sec:Proof-of-Theorem-covering}

\paragraph{Privacy guarantee.}
\begin{lem}
Algorithm \ref{alg:covering} is $\left(\epsilon,\delta\right)$-DP.
\end{lem}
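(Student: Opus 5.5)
The plan mirrors the privacy argument for Algorithm \ref{alg:packing}. First I isolate where Algorithm \ref{alg:covering} touches the private data $A$. The optional pre-processing step $A_{ij}\gets\min\{A_{ij},H\}$ is applied entrywise to each row, and $H=\frac{40d}{\alpha\cdot\opt}$ (or $H=R$) does not depend on $A$ beyond the publicly assumed quantities $\opt$ and $R$. So if $A$ and $A'$ differ in one constraint, the pre-processed matrices also differ in at most one row, and neighboring inputs map to neighboring inputs. Hence it suffices to prove privacy of the iterative phase run on the (possibly truncated) matrix. The output $\overline{x}=x_T/T$ is a deterministic function of the sequence $\Delta_0,\dots,\Delta_{T-1}$, so post-processing reduces the claim to privacy of the adaptive sequence of $\co$ calls.

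Second, I argue that each call $\co(x_t,\epsilon')$ is an instance of the exponential mechanism with privacy parameter $\epsilon'$. Here $x_t$ is determined by previously released outputs and is therefore public to the adaptive adversary. The score $Q(j)=\langle\nabla\smin^{U}(\eta Ax),A1_j\rangle\opt$ is a fixed function of $(j,A)$ for given $x$. By the exponential mechanism theorem, each call is $\epsilon'$-DP, with the appropriate normalization of $\epsilon$ by the sensitivity built into the mechanism as in the packing case. The sensitivity bound matters only for utility, not privacy, but I would record it anyway. By the triangle-inequality split used in Lemma \ref{lem:EM-gaurantee-packing}, together with the facts that coordinates of $\nabla\smin^{U}$ are at most $U=1/s$ and entries of $A$ are at most $H$, the sensitivity is $O(H\cdot\opt/s)$.

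Third, I apply advanced (strong) composition over the $T$ adaptive mechanisms. This gives $\left(\epsilon'\sqrt{2T\log\frac1\delta}+\frac{T\epsilon'^2}{2},\delta\right)$-DP. Substituting $\epsilon'=\frac{\epsilon}{2\sqrt{T\log(1/\delta)}}$, the first term is $\epsilon/\sqrt2$ and the second is $\frac{\epsilon^2}{8\log(1/\delta)}$, which is at most $\epsilon(1-1/\sqrt2)$ for constant $\epsilon\le1$ and $\delta$ small. So the total loss is at most $\epsilon$.

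The only step needing care is the first one: checking that the truncation preserves the neighboring relation and uses no data-dependent threshold. The remaining steps are a direct repeat of the packing lemma.
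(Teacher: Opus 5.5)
Your argument is correct and is the same one the paper uses: its proof just defers to the packing case, i.e.\ one exponential-mechanism call with parameter $\epsilon'$ per iteration followed by strong composition with $\epsilon'=\frac{\epsilon}{2\sqrt{T\log\frac{1}{\delta}}}$, and your check that the entrywise truncation maps neighbors to neighbors is a useful detail the paper leaves implicit. One correction: the sensitivity bound is not merely a utility matter---since the mechanism is normalized by it, each call is $\epsilon'$-DP only because $3H\cdot\opt/s$ really bounds the sensitivity. That bound needs more than $[\nabla\smin^{U}]_i\le U$: it also uses that changing one row of $A$ moves $\nabla\smin^{U}(\eta Ax)$ by at most $2U$ in $\ell_1$, because the remaining coordinates all shift monotonically through the common threshold $t_U$.
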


\begin{proof}
The proof is similar to the packing case.
\end{proof}

\paragraph{Utility Analysis.}

To show the number of constraints dropped, first, we have the following
guarantee of $\mathsf{\co}$.
\begin{lem}
\label{lem:EM-gaurantee-covering}With probability at least $1-\beta$,
for all $t\in[T]$:
\begin{align*}
\left\langle \nabla\smin^{U}\left(\eta Ax_{t-1}\right),A\Delta_{t}\right\rangle  & \ge1-\frac{\alpha}{10}.
\end{align*}
\end{lem}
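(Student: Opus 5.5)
The plan mirrors the proof of Lemma \ref{lem:EM-gaurantee-packing}. There are three steps: bound the sensitivity of the score function $Q(j)=\left\langle \nabla\smin^{U}\left(\eta Ax\right),A1_{j}\right\rangle \opt$, show that some coordinate attains a good score, and apply the exponential mechanism guarantee in each iteration followed by a union bound over the $T$ iterations.

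\textbf{Sensitivity.} Fix any $x$ and neighboring (pre-processed) inputs $A,A'$. Every entry of $A$ and $A'$ is at most $H$, and every coordinate of $\nabla\smin^{U}$ is at most $U=\frac{1}{s}$. We split with the triangle inequality, exactly as in the packing case:
\[
\left|\left\langle g,A1_{j}\right\rangle -\left\langle g',A'1_{j}\right\rangle \right|\le\left|\left\langle g,(A-A')1_{j}\right\rangle \right|+\left|\left\langle g-g',A'1_{j}\right\rangle \right|,
\]
where $g=\nabla\smin^{U}(\eta Ax)$ and $g'=\nabla\smin^{U}(\eta A'x)$.
\begin{itemize}
\item The first term involves a single differing row, so it is at most $UH$.
\item For the second term we use $\|g-g'\|_{1}\le 2U$, since both vectors lie in $\mathcal{D}^{U}$ and differ essentially on the changed coordinate plus a renormalization through $t_{U}$. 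This bounds the term by $2UH$.
\end{itemize}
The sensitivity of $Q$ is therefore at most $\frac{3H\cdot\opt}{s}$. This is the step I expect to be the main obstacle. In particular, $\|g-g'\|_{1}\le 2U$ must be justified rather than assumed, for example from the monotone dependence of $t_{U}$ on the changed coordinate. It is harmless if the final constant comes out as $6$ rather than $3$, because $s$ carries the factor $120$.

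\textbf{Existence of a good coordinate.} Let $x^{*}$ be optimal, so $1^{\top}x^{*}=\opt$ and $Ax^{*}\ge1$. Pre-processing only lowers entries, so we must check that $Ax^{*}\ge 1$ still holds after truncation. Here I would lean on the choice $H=\frac{40d}{\alpha\cdot\opt}$: either the truncation of a row leaves a near-feasible slack, or, if genuinely needed, argue about a $(1-\alpha/20)$ factor. Writing $r=\nabla\smin^{U}(\eta Ax_{t-1})\in\Delta_{n}$, we get $\left\langle r,Ax^{*}\right\rangle \ge1$. Since $x^{*}/\opt$ is a convex combination of the $1_{j}$, some $j$ satisfies $\left\langle r,A1_{j}\right\rangle \opt\ge1$. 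Hence $\max_{j}Q(j)\ge1$ (or $\ge 1-\alpha/20$ in the fallback).

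\textbf{Exponential mechanism and union bound.} Apply the exponential mechanism theorem with failure probability $\beta/T$ over $d$ candidates. With probability $1-\beta/T$, the selected $j$ satisfies
\[
\left\langle r,A\Delta_{t}\right\rangle \ge1-\frac{6H\opt(\log d+\log\frac{T}{\beta})}{s\epsilon'}.
\]
Substituting $s=\frac{120H\opt(\log d+\log\frac{T}{\beta})}{\alpha\epsilon'}$ makes the error term $\frac{\alpha}{20}$. Combined with any fallback loss of $\frac{\alpha}{20}$ from the existence step, this gives at least $1-\frac{\alpha}{10}$. A union bound over the $T$ iterations completes the proof.
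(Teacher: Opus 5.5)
Your skeleton matches the paper's proof: the sensitivity bound $\frac{3H\opt}{s}$ via the same triangle inequality, the exponential mechanism with failure probability $\beta/T$ and error $\frac{\alpha}{20}$, and the union bound. Your monotonicity argument for $\|g-g'\|_1\le 2U$ is sound: moving one coordinate moves $t_U$ monotonically, all other weights shift the opposite way, and mass conservation gives $\|g-g'\|_1=2|g_{i_0}-g'_{i_0}|\le 2U$. That is more than the paper gives, since it simply asserts the bound. The genuine gap is the step you treated as a hedge: existence of a good coordinate when pre-processing is on.

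Your first option, that truncation leaves $Ax^{*}$ near-feasible, is false. Take rows $(K,0)$ and $(0,1)$ with $K\gg H$. Then $x^{*}=(1/K,1)$, and after truncation the first row gives $A_1x^{*}=H/K$, which is arbitrarily close to $0$. Rescaling $x^{*}$ cannot repair such rows within objective $\opt$. Your second option only names the target $1-\frac{\alpha}{20}$ and gives no argument for it.

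The missing idea (the paper's Claim \ref{claim:covering}) is to mix $x^{*}$ with the uniform vector. Set $y=\frac{1}{1+\alpha/20}\bigl(x^{*}+\frac{\alpha\opt}{20d}\bigr)$, with the scalar added to every coordinate. Then $1^{\top}y=\opt$ and every coordinate of $y$ is at least $\frac{\alpha\opt}{20d(1+\alpha/20)}$. A truncated row contains an entry equal to $H=\frac{40d}{\alpha\opt}$, so $A_iy\ge\frac{2}{1+\alpha/20}\ge1$. An untruncated row equals $A^{\init}_i$, so $A_iy\ge\frac{1}{1+\alpha/20}\ge1-\frac{\alpha}{20}$.

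Run your convex-combination argument on $y$ instead of $x^{*}$. This gives $\max_jQ(j)\ge1-\frac{\alpha}{20}$, and the rest of your proof then yields $1-\frac{\alpha}{10}$. This shift is exactly what the constant $40$ in $H$ is calibrated for. Without pre-processing ($H=R$, so nothing is truncated), your argument with $x^{*}$ is already complete.
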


To prove this lemma, we use the following claim which shows the change
in the LP if the algorithm truncates the input entries. Again, let
us denote by $A^{\init}$ and $A$ the original and the truncated
input matrices for distinction.
\begin{claim}
\label{claim:covering}There exists $y\ge0$ such that $1^{\top}y=\opt$
and $Ay\ge1-\frac{\alpha}{20}$. 
\end{claim}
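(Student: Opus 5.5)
We take an optimal solution $x^{*}$ of the original covering LP, i.e. $x^{*}\ge0$, $1^{\top}x^{*}=\opt$ and $A^{\init}x^{*}\ge1$. We build $y$ from it by moving mass off the coordinates where truncation can cause trouble. If no pre-processing is done, then $A=A^{\init}$ and $y=x^{*}$ already works. So assume $H=\frac{40d}{\alpha\cdot\opt}$.

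Call a coordinate $j$ \emph{large} if $x^{*}_{j}\ge\frac{\alpha\opt}{40d}$ and \emph{small} otherwise. The key observation is that truncation does no harm on a large coordinate. Take any row $i$ and any large $j$ with $A^{\init}_{ij}>H$. Then $A_{ij}x^{*}_{j}=Hx^{*}_{j}\ge\frac{40d}{\alpha\opt}\cdot\frac{\alpha\opt}{40d}=1$, so row $i$ is satisfied by this single term in the truncated matrix. Otherwise $A_{ij}=A^{\init}_{ij}$. The damage is therefore confined to small coordinates, whose total mass is at most $d\cdot\frac{\alpha\opt}{40d}=\frac{\alpha\opt}{40}$.

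Define $y$ by keeping only the large coordinates and rescaling them so the total mass returns to $\opt$. Set $y_{j}=\lambda x^{*}_{j}$ for large $j$ and $y_{j}=0$ for small $j$, where $\lambda=\opt/\sum_{j\text{ large}}x^{*}_{j}\ge1$. Then $y\ge0$ and $1^{\top}y=\opt$.

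Now fix a row $i$ and split into two cases.
\begin{itemize}
\item If some large $j$ has $A^{\init}_{ij}>H$, then $A_{i}y\ge\lambda A_{ij}x^{*}_{j}\ge1$.
\item Otherwise $A_{ij}=A^{\init}_{ij}$ for every large $j$. This gives $A_{i}y\ge\sum_{j\text{ large}}A^{\init}_{ij}x^{*}_{j}=A^{\init}_{i}x^{*}-\sum_{j\text{ small}}A^{\init}_{ij}x^{*}_{j}$.
\end{itemize}

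The main obstacle is the second case. The small coordinates may carry huge coefficients $A^{\init}_{ij}$, so their contribution to row $i$ is not bounded by their tiny mass. Dropping them could lose a lot on that row.

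To handle this, I would refine the split per row rather than globally. For row $i$, call the pair $(i,j)$ \emph{heavy} if $A^{\init}_{ij}x^{*}_{j}\ge\frac{1}{d}$. If some heavy pair has $A^{\init}_{ij}\le H$, nothing changes in that entry. The troublesome pairs have $A^{\init}_{ij}>H$ and small $x^{*}_{j}$. For those, I would redistribute: rather than deleting small coordinates, set $y_{j}=\max\{x^{*}_{j},\frac{\alpha\opt}{40d}\}$. This costs total extra mass at most $\frac{\alpha\opt}{40}$, and it guarantees $Hy_{j}\ge1$, which covers every row that had a truncated entry. We then scale $y$ down by the factor $(1+\frac{\alpha}{40})^{-1}$ to restore $1^{\top}y\le\opt$, and pad any leftover mass arbitrarily, which is harmless since $A\ge0$.

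Every row $i$ now falls into one of two cases.
\begin{itemize}
\item It has a truncated entry, in which case $A_{i}y\ge(1+\frac{\alpha}{40})^{-1}$.
\item It has none, in which case $A_{i}=A^{\init}_{i}$ on all coordinates, and $A_{i}y\ge(1+\frac{\alpha}{40})^{-1}A^{\init}_{i}x^{*}\ge(1+\frac{\alpha}{40})^{-1}$.
\end{itemize}
In both cases $A_{i}y\ge1-\frac{\alpha}{40}\ge1-\frac{\alpha}{20}$, so this version of the construction works without the large/small case analysis. The step I expect to be the main obstacle is exactly this redistribution choice: ensuring that lifting small coordinates to the threshold $\frac{\alpha\opt}{40d}$ matches the constant $40$ in $H$.
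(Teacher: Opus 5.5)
Your final construction is correct and is essentially the paper's argument. You set $y_j=\max\{x^*_j,\frac{\alpha\opt}{40d}\}$, rescale by $(1+\alpha/40)^{-1}$ and pad; the paper instead adds $\frac{\alpha\opt}{20d}$ to every coordinate and rescales by $(1+\alpha/20)^{-1}$, which gives mass exactly $\opt$ without padding, and in both versions a clipped entry $H$ times the guaranteed minimum mass covers its row on its own while unclipped rows lose only the rescaling factor. The abandoned large/small split and the unused ``heavy pair'' notion are not needed and should be cut from a final write-up.
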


\begin{proof}
Let $x^{*}$ be such that $1^{\top}x^{*}=\opt$ and $A^{\init}x\ge1$.
Let $y=\frac{1}{1+\alpha/20}\left(x^{*}+\frac{\alpha\opt}{20d}\right)$.
We have $1^{\top}y=\opt$. Consider constraint $i$. If there is some
$j$ such that $A^{\init}_{ij}\ge\frac{40d}{\alpha\opt},$we have
$A_{i}y\ge\frac{40d}{\alpha\opt}\cdot\frac{\alpha\opt}{20d(1+\alpha)}\ge1$.
Otherwise we have $A_{i}=A^{\init}_{i}.$ Hence 
\begin{align*}
A_{i}y & =A^{\init}_{i}y\ge A^{\init}_{i}\frac{x^{*}}{1+\alpha/20}\ge\frac{1}{1+\alpha/20}\ge1-\frac{\alpha}{20}.
\end{align*}
Therefore there exists $y\ge0$ such that $1^{\top}y=\opt$ and $Ay\ge1-\frac{\alpha}{20}$.
\end{proof}

\begin{proof}
We calculate the sensitivity of the score function $Q(i)=\left\langle \nabla\smin^{U}\left(\eta Ax\right),A1_{i}\right\rangle \opt$.
Note that the coordinate of $\nabla\smax^{U}$ is upperbounded by
$U=\frac{1}{s}$ and the entries of $A$ are bounded by $H$. For
any $x$ and any neighboring inputs $A$ and $A'$ (after prepoccessing),
\begin{align*}
 & \left|\left\langle \nabla\smin^{U}\left(\eta Ax\right),A1_{i}\right\rangle -\left\langle \nabla\smin^{U}\left(\eta A'x\right),A'1_{i}\right\rangle \right|\\
\le\  & \left|\left\langle \nabla\smin^{U}\left(\eta Ax\right),A1_{i}\right\rangle -\left\langle \nabla\smin^{U}\left(\eta Ax\right),A'1_{i}\right\rangle \right|\\
 & \ +\left|\left\langle \nabla\smin^{U}\left(\eta Ax\right),A'1_{i}\right\rangle -\left\langle \nabla\smin^{U}\left(\eta A'x\right),A'1_{i}\right\rangle \right|\\
\le\  & \frac{3H}{s}.
\end{align*}
Hence the sensitivity of $Q$ is $\frac{3H}{s}$. By Claim \ref{claim:covering},
there must exist a coordinate $j$ such that $A1_{j}\opt\ge1-\frac{\alpha}{20}$.
There are $d$ coordinates, so with probability $1-\frac{\beta}{T}$,
in each iteration, the Exponential mechanism guarantees
\begin{align*}
\left\langle \nabla\smin^{U}\left(\eta Ax_{t-1}\right),A\Delta_{t}\right\rangle  & \ge1-\frac{\alpha}{20}-\frac{6H\cdot\opt\left(\log d+\log\frac{T}{\beta}\right)}{s\epsilon'}\\
 & \ge1-\frac{\alpha}{10},
\end{align*}
where $s=\frac{120H\cdot\opt\left(\log d+\log\frac{T}{\beta}\right)}{\alpha\epsilon'}$.
The lemma statement can be obtained by taking the union bound over
$T$ iterations.
\end{proof}

\begin{lem}
With probability $1-\beta$, the output $\overline{x}$ of Algorithm
\ref{alg:covering} satisfies $1^{\top}\overline{x}=\opt$ and $A_{i}\overline{x}\ge1-\alpha$
except for at most $s$ constraints where

1. Without pre-processing:
\begin{align*}
s & =O\left(\frac{\left(A_{\max}\opt\right)^{1.5}\left(\log d+\log\frac{n}{\alpha\beta}\right)\sqrt{\log n\log\frac{1}{\delta}}}{\alpha^{2}\epsilon}\right).
\end{align*}

2. With pre-processing:
\begin{align*}
s & =O\left(\frac{d^{1.5}\left(\log d+\log\frac{n}{\alpha\beta}\right)\sqrt{\log n\log\frac{1}{\delta}}}{\alpha^{3.5}\epsilon}\right).
\end{align*}

\end{lem}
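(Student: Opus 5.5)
The argument mirrors the packing analysis, with $\smin^{U}$ in place of $\smax^{U}$ and inequality \eqref{eq:smin-update} in place of \eqref{eq:smax-update}. We condition on the event of Lemma \ref{lem:EM-gaurantee-covering}, which holds with probability $1-\beta$. The objective is immediate: each $\Delta_{t}=1_{j}\opt$ gives $1^{\top}\Delta_{t}=\opt$, so $1^{\top}\overline{x}=\frac{1}{T}\sum_{t}1^{\top}\Delta_{t}=\opt$. Since there is no post-processing in the covering algorithm, the objective needs no further correction.

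\textbf{Potential argument.} Since $A\le H$ entrywise after pre-processing, we have $\|\eta A\Delta_{t}\|_{\infty}\le\eta H\opt=\frac{\alpha}{10}\le1$. Applying \eqref{eq:smin-update} with $D\le\alpha/10$ and then Lemma \ref{lem:EM-gaurantee-covering} gives, at each step,
\[
\smin^{U}(\eta Ax_{t+1})\ge\smin^{U}(\eta Ax_{t})+\eta\left(1-\tfrac{\alpha}{10}\right)^{2}.
\]
For the starting value, $\smin^{U}(0)=\min_{r\in\mathcal{D}^{U}}\omega(r)\ge-\ln n$. Telescoping over $T$ steps yields
\[
\smin^{U}(\eta Ax_{T})\ge\eta T\left(1-\tfrac{\alpha}{5}\right)-\ln n\ge\eta T(1-\alpha).
\]
The last inequality uses $\ln n\le\eta T\alpha/2$, which follows from $T=\frac{20H\opt\ln n}{\alpha}\cdot\frac{1}{\alpha}$. (If $T$ is read literally as stated in the algorithm, one gets $\eta T=2\ln n$ and the constants must be adjusted; I would use $T=\Theta(H\opt\ln n/\alpha^{2})$, as in the packing case.)

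\textbf{From the potential to the violation count.} For any $S\subseteq[n]$ with $|S|=s=1/U$, the uniform distribution $r=1_{S}/s$ lies in $\mathcal{D}^{U}$ and satisfies $\omega(r)=-\ln s\le0$. Hence
\[
\smin^{U}(y)\le\langle r,y\rangle+\omega(r)\le\left\langle \tfrac{1_{S}}{s},y\right\rangle.
\]
Taking $y=\eta Ax_{T}$ and dividing by $\eta T$, every set of $s$ constraints has average value $A_{i}\overline{x}$ at least $1-\alpha$. In particular this holds for the $s$ smallest values, so the $(s+1)$-th smallest value is at least $1-\alpha$. Therefore all but at most $s$ constraints satisfy $A_{i}\overline{x}\ge1-\alpha$.

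\textbf{Untruncated constraints and the value of $s$.} With pre-processing we must pass from $A$ back to $A^{\init}$. Here the direction is easy: truncation only lowers entries, so $A^{\init}\ge A$ and $x\ge0$ give $A^{\init}_{i}\overline{x}\ge A_{i}\overline{x}$. No post-processing is needed, and this is why Claim \ref{claim:covering} is needed only for the oracle guarantee.

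It remains to substitute the parameters into $s=\frac{120H\opt(\log d+\log\frac{T}{\beta})}{\alpha\epsilon'}$ with $\epsilon'=\frac{\epsilon}{2\sqrt{T\log(1/\delta)}}$. This gives
\[
s=O\left(\frac{H\opt\sqrt{T\log(1/\delta)}\,(\log d+\log\tfrac{T}{\beta})}{\alpha\epsilon}\right).
\]
With $T=O(H\opt\ln n/\alpha^{2})$ this becomes $O\left((H\opt)^{1.5}\sqrt{\log n\log(1/\delta)}\,(\log d+\log\frac{n}{\alpha\beta})/(\alpha^{2}\epsilon)\right)$. Here $\log T=O(\log\frac{n}{\alpha})$ up to range factors. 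For the two cases:
\begin{itemize}
\item Without pre-processing, $H=R=O(A_{\max})$, which gives the first bound.
\item With pre-processing, $H\opt=40d/\alpha$, so $(H\opt)^{1.5}/\alpha^{2}=O(d^{1.5}/\alpha^{3.5})$, which gives the second bound.
\end{itemize}

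The main obstacle is bookkeeping rather than ideas. One must get the sign conventions of $\smin^{U}$ right, namely the gradient formula with $-x_{i}$ and the lower bound $\smin^{U}(0)\ge-\ln n$. One must also choose $T$ so that the additive $\ln n$ is absorbed into $\alpha\eta T$ with the right constants.
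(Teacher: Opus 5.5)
Your proposal is correct and follows the paper's proof step for step: the per-iteration lower bound from \eqref{eq:smin-update} combined with Lemma~\ref{lem:EM-gaurantee-covering}, telescoping from $\smin^{U}(0)\ge-\ln n$, comparing $\smin^{U}$ against the uniform distribution on any $s$-set, and using $A^{\init}\ge A$ to pass back to the original constraints. Your reading $T=\Theta(H\opt\ln n/\alpha^{2})$ is also what the paper's proof uses, namely $T=\frac{2\ln n}{\eta\alpha}$, even though Algorithm~\ref{alg:covering} writes $\frac{1}{\alpha}$ instead of $\frac{1}{\alpha^{2}}$.
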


\begin{proof}
Since $\Delta_{t}=1_{j}\opt$ for some coordinate $j$, we have $1^{\top}\Delta_{t}=\opt$.
Hence, 
\begin{align*}
1^{\top}\overline{x} & =\frac{1}{T}\sum_{t}1^{\top}\Delta_{t}=\opt.
\end{align*}
Next we show that at most $s$ constraints are not satisfied. To distinguish
between the input before and after the truncation, let us denote by
$A^{\text{init}}$ the original input and $A$ the truncated input.
Since $A^{\init}_{i}\ge A_{i}$ for all $i$, we only need to show
that $A_{i}\overline{x}\ge1-\alpha$ except for at most $s$ constraints.

We have$\left\Vert A\Delta_{t}\right\Vert _{\infty}\le\max_{i,j}A_{ij}\cdot\opt\le H\cdot\opt$.
Hence, for $\eta=\frac{\alpha}{10H\cdot\opt}$ and for all $t$, $\left\Vert \eta A\Delta_{t}\right\Vert _{\infty}\le\frac{\alpha}{10}\le1$.
Using property \eqref{eq:smax-update} of the $\smax^{U}$ function
\begin{align*}
\text{\ensuremath{\smin}}^{U}\left(\eta Ax_{T}\right) & \ge\smin^{U}\left(\eta Ax_{T-1}\right)+\left(1-\left\Vert \eta A\Delta_{t}\right\Vert _{\infty}\right)\left\langle \nabla\text{\ensuremath{\smin}}^{U}\left(\eta Ax_{T-1}\right),\eta A\Delta_{T}\right\rangle \\
 & \ge\text{\ensuremath{\smin}}^{U}\left(\eta Ax_{T-1}\right)+\left(1-\frac{\alpha}{10}\right)\left\langle \nabla\text{\ensuremath{\smin}}^{U}\left(\eta Ax_{T-1}\right),\eta A\Delta_{T}\right\rangle .
\end{align*}
Unrolling this over $T$ iterations, we obtain 
\begin{align*}
\text{\ensuremath{\smin}}^{U}\left(\eta Ax_{T}\right) & \ge\text{\ensuremath{\smin}}^{U}\left(0\right)+\eta\left(1-\frac{\alpha}{10}\right)\sum^{T-1}_{t=1}\left\langle \nabla\text{\ensuremath{\smin}}^{U}\left(\eta Ax_{t}\right),A\Delta_{t+1}\right\rangle \\
 & \overset{(i)}{\ge}-\ln n+\eta\left(1-\frac{\alpha}{10}\right)\cdot T\cdot\left(1-\frac{\alpha}{10}\right)\\
 & \overset{(ii)}{\ge}\eta T\left(1-\alpha\right).
\end{align*}
where $(i)$ comes from Lemma \ref{lem:EM-gaurantee-covering} and
$(ii)$ is due to the choice $\eta=\frac{\alpha}{10H\cdot\opt}$ and
$T=\frac{2\ln n}{\eta\alpha}.$

We then have
\begin{align*}
\min_{S\in\left[n\right]:\left|S\right|=s=\frac{1}{U}}\left\langle \frac{1_{S}}{\left|S\right|},\eta Ax_{T}\right\rangle  & \ge\min_{r\in\Delta_{n},r\leq U}\left\langle r,\eta Ax_{T}\right\rangle +\omega\left(r\right)\\
 & =\text{\ensuremath{\smin}}^{U}\left(\eta Ax_{T}\right)\\
 & \ge\eta T\left(1-\alpha\right).
\end{align*}
Therefore, we have that
\[
\min_{S\in\left[n\right]:\left|S\right|=s}\left\langle \frac{1_{S}}{\left|S\right|},A\overline{x}\right\rangle =\min_{S\in\left[n\right]:\left|S\right|=s}\left\langle \frac{1_{S}}{\left|S\right|},A\frac{x_{T}}{T}\right\rangle \ge1-\alpha.
\]
This means among the top $s$ constraints with the smallest values
$A_{i}\overline{x}$, the largest value of them is lower bounded by
$1-\alpha$. This implies all but at most $s$ constraints $A_{i}x\ge1-\alpha$
are satisfied.

\textbf{Without pre-processing}:
\begin{align*}
s & =O\left(\frac{R\cdot\opt\left(\log d+\log\frac{T}{\beta}\right)}{\alpha\epsilon'}\right)=O\left(\frac{\left(A_{\max}\opt\right)^{1.5}\left(\log d+\log\frac{n}{\alpha\beta}\right)\sqrt{\log n\log\frac{1}{\delta}}}{\alpha^{2}\epsilon}\right).
\end{align*}

\textbf{With pre-processing}:
\begin{align*}
s & =O\left(\frac{d\left(\log d+\log\frac{T}{\beta}\right)}{\alpha^{2}\epsilon'}\right)=O\left(\frac{d^{1.5}\left(\log d+\log\frac{n}{\alpha\beta}\right)\sqrt{\log n\log\frac{1}{\delta}}}{\alpha^{3.5}\epsilon}\right).\qedhere
\end{align*}
\end{proof}

\section{Analysis of Algorithm \ref{alg:mixed}}\label{sec:Proof-of-Theorem-mixed-1}

\paragraph{Privacy guarantee.}
\begin{lem}
Algorithm \ref{alg:mixed} is $\left(\epsilon,\delta\right)$-DP.
\end{lem}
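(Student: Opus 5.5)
The plan is to follow the packing case. Algorithm \ref{alg:mixed} touches the data through exactly $T+1$ invocations of the exponential mechanism: one call to $\pc$ with parameter $\epsilon'$ in each of the $T$ iterations, and one final call with parameter $\epsilon'$ to select the scale $k$. The first step is to check that each invocation is $\epsilon'$-DP with respect to neighboring inputs.

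By the exponential mechanism theorem, this holds once the score is normalized by its sensitivity. Concretely, sampling proportionally to $\exp(\epsilon' Q/(2\Delta_Q))$ is $\epsilon'$-DP, where $\Delta_Q$ is the sensitivity of $Q$. The utility loss from this normalization is already absorbed into the choice of $s$, so it is not a privacy concern.

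For the scaling step, the score is the negated number of violated constraints. Changing a single constraint changes this count by at most $1$, so $\Delta_Q\le 1$ and the step is $\epsilon'$-DP.

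For $\pc$, the score is a ratio. I would argue that its sensitivity is finite and bounded by an explicit quantity, as in Lemma \ref{lem:EM-gaurantee-packing}:
\begin{itemize}
\item Every coordinate of $\nabla\smax^U$ and $\nabla\smin^U$ is at most $U$.
\item After the perturbation, every entry of $C$ lies in $[\alpha/V, R+\alpha/V]$, so the denominator is at least $\alpha/V$.
\item Entries of $P$ are at most $S$.
\item Replacing one row changes the weighted sums $\langle\nabla\smax^U(Px),P1_j\rangle$ and $\langle\nabla\smin^U(Cx),C1_j\rangle$ by $O(S/s)$ and $O((R+\alpha/V)/s)$ respectively. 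This uses the same two-term triangle-inequality split as in the packing proof, which gives the $3H/s$ bound.
\end{itemize}
A ratio bound then gives a sensitivity of order $\frac{SV}{\alpha s}+\frac{S(R+\alpha/V)V^2}{\alpha^2 s}$, which is finite. For privacy alone, the exact value matters only in that the mechanism is calibrated to it.

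Next, I would compose. The privacy budget is spent over $T+1\le 2T$ mechanisms, each $\epsilon'$-DP with $\epsilon'=\frac{\epsilon}{2\sqrt{T\log(1/\delta)}}$. Advanced composition gives privacy loss $\epsilon'\sqrt{2(T+1)\log\frac1\delta}+(T+1)\epsilon'^2/2$. For constant $\epsilon$ this is $O(\epsilon)$, so the algorithm is $(O(\epsilon),\delta)$-DP. Rescaling $\epsilon'$ by a constant gives exactly $(\epsilon,\delta)$, and I would note that this rescaling only affects constants in $s$.

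The main obstacle I expect is the sensitivity bound for the ratio score, which needs care at the neighboring row. The denominators for $A$ and $A'$ must both be bounded below by $\alpha/V$, and this holds because the perturbation is data-independent: $V$ and $\alpha$ are public. The gradients of $\smax^U$ and $\smin^U$ also change on other coordinates, since the normalizer $t_U$ shifts. To handle this, I would reuse the argument implicit in the packing lemma, namely that the total $\ell_1$ change of the capped gradient when one coordinate of the input changes is $O(U)$.
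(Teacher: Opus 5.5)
Your proposal is correct and takes the same route as the paper, whose proof just defers to the packing case. There, each of the $T$ calls to $\pc$ is an $\epsilon'$-DP exponential mechanism, with finite sensitivity because the data-independent shift $C_{ij}\gets C_{ij}+\alpha/V$ keeps the denominator at least $\alpha/V$, and advanced composition with $\epsilon'=\epsilon/(2\sqrt{T\log(1/\delta)})$ gives $(\epsilon,\delta)$-DP. You also account for the final exponential mechanism that selects the scale $k$ (sensitivity $1$) and absorb it by a constant rescaling of $\epsilon'$; the paper's one-line proof glosses over this step, but it does not change the argument.
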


\begin{proof}
The proof is similar to the packing case.
\end{proof}

\paragraph{Utility Analysis.}

To show the number of constraints dropped, first, we have the following
guarantee of $\pc$.
\begin{lem}
\label{lem:EM-gaurantee-mixed}With probability at least $1-\beta$,
for all $t\in[T]$:
\begin{align*}
\frac{\left\langle \nabla\smax^{U}\left(Ax_{t-1}\right),P\Delta_{t}\right\rangle }{\left\langle \nabla\smin^{U}\left(Ax_{t-1}\right),C\Delta_{t}\right\rangle } & \le1+\frac{\alpha}{5}.
\end{align*}
\end{lem}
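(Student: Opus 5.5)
We follow the template of Lemma \ref{lem:EM-gaurantee-packing} and Lemma \ref{lem:EM-gaurantee-covering}: bound the sensitivity of the score $Q(j)=-\langle \nabla\smax^{U}(Px),P1_{j}\rangle/\langle \nabla\smin^{U}(Cx),C1_{j}\rangle$, exhibit a coordinate with good score, and apply the exponential mechanism guarantee with a union bound over $T$ iterations. (We read the $Ax_{t-1}$ in the statement as $Px_{t-1}$ in the numerator and $Cx_{t-1}$ in the denominator, matching $\pc$.) The steps are as follows.

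\emph{Existence of a good coordinate.} Let $x^{*}$ be feasible for the original LP with $1^{\top}x^{*}=V$. After the perturbation $C_{ij}\gets C_{ij}+\frac{\alpha}{V}$ we still have $Px^{*}\le1$ and $Cx^{*}\ge1+\alpha\ge1$. Set $p=\nabla\smax^{U}(Px)$ and $q=\nabla\smin^{U}(Cx)$; both lie in $\Delta$. Then $\langle p,Px^{*}\rangle\le1\le\langle q,Cx^{*}\rangle$. Write both sides as $\sum_{j}x^{*}_{j}a_{j}$ and $\sum_{j}x^{*}_{j}b_{j}$, where $a_{j}=\langle p,P1_{j}\rangle$ and $b_{j}=\langle q,C1_{j}\rangle\ge\frac{\alpha}{V}>0$. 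By averaging, some $j$ with $x^{*}_{j}>0$ has $a_{j}/b_{j}\le1$, i.e. $\max_{j}Q(j)\ge-1$.

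\emph{Sensitivity.} For neighboring inputs only one row of $P$ or of $C$ changes. Adapting the two-term triangle-inequality split from the packing proof, the numerator changes by at most $3S/s$ and the denominator by at most $3R'/s$. Here $R'=R+\frac{\alpha}{V}$ bounds the perturbed entries, and each coordinate of $\nabla\smax^{U}$ and $\nabla\smin^{U}$ is at most $U=1/s$. The denominator is at least $\alpha/V$ and the numerator is at most $S$. Using $|a/b-a'/b'|\le|a-a'|/b+a'|b-b'|/(bb')$, the sensitivity is therefore
\[
\Delta_{Q}=O\!\left(\frac{SV}{\alpha s}+\frac{S R' V^{2}}{\alpha^{2}s}\right)=O\!\left(\frac{SRV^{2}}{\alpha^{2}s}\right),
\]
assuming $RV\gtrsim\alpha$, which is the nontrivial regime.

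\emph{Main obstacle.} This bound can be too crude. The numerator bound $S$ and the denominator lower bound $\alpha/V$ are both worst case, and the claimed $s$ has a $\sqrt{S+R}V^{2.5}/\alpha^{4.5}$ profile. If needed, I would refine the estimate of the ratio $a'/b$, for instance by restricting attention to $j$ whose score is near the optimum, where $a_{j}\lesssim b_{j}$, so that $a'|b-b'|/(bb')\lesssim|b-b'|/b$.

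\emph{Conclusion.} With probability $1-\beta/T$ the exponential mechanism returns $j$ with score at least $-1-2\Delta_{Q}(\log d+\log\frac{T}{\beta})/\epsilon'$. Plugging in $\epsilon'=\epsilon/(2\sqrt{T\log\frac1\delta})$ and $T=\Theta((S+R)V\ln n/\alpha^{3})$ gives $\sqrt{T}=\Theta(\sqrt{(S+R)V\ln n}/\alpha^{1.5})$. The error term is then
\[
O\!\left(\frac{SRV^{2.5}\sqrt{S+R}\,(\log d+\log\frac T\beta)\sqrt{\log\frac1\delta\log n}}{\alpha^{3.5}\epsilon\, s}\right),
\]
which is at most $\alpha/5$ by the choice of $s$ with the $\alpha^{4.5}$ denominator and a suitable constant. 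The step $\Delta_{t}$ is a positive scalar multiple of $1_{j}$, so the scalar cancels in the ratio. A union bound over the $T$ iterations finishes the proof.
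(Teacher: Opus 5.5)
Your proposal is correct and matches the paper's proof step for step. Both use the same $O\bigl(SRV^{2}/(\alpha^{2}s)\bigr)$ sensitivity bound, obtained from the lower bound $\alpha/V$ on the perturbed covering entries (the paper splits this into two cases where you use a single quotient inequality). Both then exhibit a coordinate with ratio at most $1$ (you justify this by an explicit averaging argument that the paper only asserts) and finish with the exponential-mechanism guarantee and a union bound over the $T$ iterations.

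Your ``main obstacle'' is not an obstacle. The crude bound is exactly the one the paper uses, and your own final computation shows it suffices. The condition $RV\gtrsim\alpha$ also holds automatically, since feasibility gives $1\le C_{i}x^{*}\le C_{\max}V\le RV$. Separately, the suggested restriction to near-optimal $j$ would require changing the score (e.g., clipping it), because the mechanism's privacy needs a sensitivity bound that holds uniformly over all $j$.
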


\begin{proof}
We calculate the sensitivity of the score function $Q(j)=-\frac{\left\langle \nabla\smax^{U}\left(Px\right),P1_{j}\right\rangle }{\left\langle \nabla\smin^{U}\left(Cx\right),C1_{j}\right\rangle }$.
We consider two cases: when the neighboring inputs differ by a covering
constraint and when the differ by a packing constraint.

Case 1: Suppose that neighboring inputs are given by $(P,C)$ and
$(P,C')$. Note that after pre-processing, we have $C_{\min}$ and
$C'_{\min}$ are bounded from below by $\alpha/V$. Thus we have
\begin{align*}
 & \left|\frac{\left\langle \nabla\smax^{U}\left(Px\right),P1_{j}\right\rangle }{\left\langle \nabla\smin^{U}\left(Cx\right),C1_{j}\right\rangle }-\frac{\left\langle \nabla\smax^{U}\left(Px\right),P1_{j}\right\rangle }{\left\langle \nabla\smin^{U}\left(C'x\right),C'1_{j}\right\rangle }\right|\\
=\  & \left|\left\langle \nabla\smax^{U}\left(Px\right),P1_{j}\right\rangle \right|\cdot\frac{\left|\left\langle \nabla\smin^{U}\left(Cx\right),C1_{j}\right\rangle -\left\langle \nabla\smin^{U}\left(C'x\right),C'1_{j}\right\rangle \right|}{\left\langle \nabla\smin^{U}\left(Cx\right),C1_{j}\right\rangle \left\langle \nabla\smin^{U}\left(C'x\right),C'1_{j}\right\rangle }\\
\le\  & S\frac{3R/s}{C_{\min}C'_{\min}}\le\frac{3SRV^{2}}{s\alpha^{2}}.
\end{align*}

Case 2: Suppose that neighboring inputs are given by $(P,C)$ and
$(P',C)$. We also have
\begin{align*}
\left|\frac{\left\langle \nabla\smax^{U}\left(Px\right),P1_{j}\right\rangle }{\left\langle \nabla\smin^{U}\left(Cx\right),C1_{j}\right\rangle }-\frac{\left\langle \nabla\smax^{U}\left(P'x\right),P'1_{j}\right\rangle }{\left\langle \nabla\smin^{U}\left(Cx\right),C1_{j}\right\rangle }\right|\le & \frac{3S}{sC_{\min}}\le\frac{3SRV^{2}}{s\alpha^{2}}.
\end{align*}
There exists a coordinate $j$ such that 
\begin{align*}
\frac{\left\langle \nabla\smax^{U}\left(Px_{t-1}\right),P1_{j}\right\rangle }{\left\langle \nabla\smin^{U}\left(Cx_{t-1}\right),C1_{j}\right\rangle } & \le1.
\end{align*}
Therefore, with probability at least $1-\frac{\beta}{T}$,
\begin{align*}
\frac{\left\langle \nabla\smax^{U}\left(Px_{t-1}\right),P1_{j}\right\rangle }{\left\langle \nabla\smin^{U}\left(Cx_{t-1}\right),C1_{j}\right\rangle } & \le1+\frac{6SRV^{2}(\log d+\log\frac{T}{\beta})}{s\alpha^{2}\epsilon'}\le1+\frac{\alpha}{5},
\end{align*}
by the choice of $s$.
\end{proof}

\begin{lem}
\label{lem:mixed-utility}With probability $1-\beta$, the output
$\overline{x}$ of Algorithm \ref{alg:mixed} satisfies $P_{i}\overline{x}\le1+\alpha$
and $C_{i}\overline{x}\ge1-\alpha$ except for at most $s$ constraints
where
\begin{align*}
s & =O\left(\frac{P_{\max}C_{\max}\sqrt{P_{\max}+C_{\max}}V^{2.5}}{\alpha^{4.5}\epsilon}\left(\log d+\log\frac{n}{\alpha\beta}\right)\sqrt{\log\frac{1}{\delta}\log n}\right).
\end{align*}
 
\end{lem}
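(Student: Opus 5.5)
We follow the template of the packing and covering proofs: run two potential arguments, one for $\smax^{U}(Px_t)$ and one for $\smin^{U}(Cx_t)$, and use Lemma \ref{lem:EM-gaurantee-mixed} to couple their increments.

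\textbf{Step 1: per-step increments.} Every update is $\Delta_t=\frac{\alpha}{30(S+R)}1_{j}$. Each entry of $P\Delta_t$ and $C\Delta_t$ is then at most $\frac{\alpha}{30}$; for $C$ we use that the perturbed entries are at most $R+\alpha/V=O(R)$. Write $a_t=\langle\nabla\smax^{U}(Px_{t}),P\Delta_{t+1}\rangle$ and $b_t=\langle\nabla\smin^{U}(Cx_{t}),C\Delta_{t+1}\rangle$. Properties \eqref{eq:smax-update} and \eqref{eq:smin-update} give
\[
\smax^{U}(Px_{t+1})\le\smax^{U}(Px_t)+(1+\tfrac{\alpha}{30})a_t,\qquad \smin^{U}(Cx_{t+1})\ge\smin^{U}(Cx_t)+(1-\tfrac{\alpha}{30})b_t .
\]
Lemma \ref{lem:EM-gaurantee-mixed} gives $a_t\le(1+\alpha/5)b_t$. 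Summing over $t$, with $\smax^{U}(0)\le\ln n$ and $\smin^{U}(0)\ge-\ln n$, we get
\[
\smax^{U}(Px_T)\le\ln n+(1+O(\alpha))B,\qquad \smin^{U}(Cx_T)\ge-\ln n+(1-O(\alpha))B,\qquad B=\textstyle\sum_t b_t .
\]

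\textbf{Step 2: lower bound on $B$.} This is the main obstacle: we need $B\ge\Omega(\ln n/\alpha)$, so that the additive $\ln n$ terms are absorbed and $\smax^{U}(Px_T)\le(1+O(\alpha))\smin^{U}(Cx_T)$. Every row of the perturbed $C$ has all entries at least $\alpha/V$, and $\nabla\smin^{U}$ lies in the simplex. Hence $b_t\ge\frac{\alpha}{V}\cdot\frac{\alpha}{30(S+R)}$, and
\[
B\ge T\cdot\frac{\alpha^{2}}{30V(S+R)}.
\]
The choice $T=\Theta((S+R)V\ln n/\alpha^{3})$ is exactly what makes $B=\Omega(\ln n/\alpha)$. (The ratio guarantee also needs a feasible $x^*$ for the perturbed LP with $1^\top x^*=V$, which the perturbation preserves up to $1+\alpha$.)

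\textbf{Step 3: counting violated constraints.} As in the packing and covering proofs, use the truncated-softmax bounds: the average of the largest $s=1/U$ entries of $Px_T$ is at most $\smax^{U}(Px_T)$, and the average of the smallest $s$ entries of $Cx_T$ is at least $\smin^{U}(Cx_T)$. Hence all but $s$ packing rows satisfy $P_ix_T\le(1+O(\alpha))\lambda$ and all but $s$ covering rows satisfy $C_ix_T\ge\lambda$, where $\lambda=\smin^{U}(Cx_T)$. Scaling by $k\approx1/\lambda$ yields a point that violates at most $2s$ constraints beyond the approximation.

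The final exponential mechanism over powers of $1+\alpha$ in $[\frac{m}{\alpha TM},\frac{60M}{\alpha m}]$ finds such a $k$. The number of violated constraints has sensitivity $1$, so this step adds only $O(\log(T\log\frac{M}{m}/\alpha)/\epsilon')$, which is dominated by $s$. We also need $\lambda^{-1}$ to lie in this range, which follows from the column-range assumption.

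Finally, the perturbation adds $\frac{\alpha}{V}1^\top\overline{x}$ to each covering row. Since $1^\top\overline{x}\lesssim V$, the original covering constraints hold with $1-O(\alpha)$; rescaling $\alpha$ by a constant gives the stated guarantees.

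\textbf{Step 4: the bound on $s$.} Substitute $\epsilon'=\epsilon/(2\sqrt{T\log\frac1\delta})$ into $s=\Theta(SRV^{2}(\log d+\log\frac{T}{\beta})/(\alpha^{3}\epsilon'))$, the value required by Lemma \ref{lem:EM-gaurantee-mixed}. With $\sqrt{T}=\sqrt{(S+R)V\ln n}/\alpha^{1.5}$ this gives the claimed
\[
s=O\!\left(\frac{P_{\max}C_{\max}\sqrt{P_{\max}+C_{\max}}\,V^{2.5}}{\alpha^{4.5}\epsilon}\Bigl(\log d+\log\tfrac{n}{\alpha\beta}\Bigr)\sqrt{\log\tfrac1\delta\log n}\right).
\]
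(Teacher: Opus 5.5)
Steps 1--4 follow the paper's proof essentially line for line. Both use the same two potentials coupled through Lemma~\ref{lem:EM-gaurantee-mixed}, the same bound $B\ge T\alpha^{2}/(30V(S+R))$ from the $\alpha/V$ floor on the entries of $C$, the same $T$, the same top-$s$ averaging and exponential mechanism over powers of $1+\alpha$, and the same substitution for $s$.

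\textbf{The gap.} The problem is your last paragraph, where you pass from the perturbed covering matrix back to the original one via $1^{\top}\overline{x}\lesssim V$. Nothing in Algorithm~\ref{alg:mixed} enforces this. The total mass $1^{\top}x_T=\alpha T/(30(S+R))$ is fixed by the step size, and the scale $k$ is chosen only from violation counts. Your own analysis gives only $B\ge\frac{\alpha}{V}1^{\top}x_T$, hence $\lambda\ge(1-O(\alpha))\frac{\alpha}{V}1^{\top}x_T$, i.e.\ $1^{\top}\overline{x}\lesssim V/\alpha$. So the added term $\frac{\alpha}{V}1^{\top}\overline{x}$ can supply the entire covering requirement, not just an $O(\alpha)$ part of it.

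\textbf{The claim is false.} Take $d=2$, $V=M=1$ and $m\le\alpha/10$. Let all packing rows be $(1,m)$. Let the covering rows be one copy of $(1,m)$ and many copies of $(1,0)$; then $x^{*}=(1,0)$ is feasible. After the perturbation every covering entry in column $2$ is at least $\alpha$. So the ratio minimized by the oracle is at most $m/\alpha\le 1/10$ for $j=2$, but equals $1/(1+\alpha)$ for $j=1$. The additive error of the exponential mechanism in Lemma~\ref{lem:EM-gaurantee-mixed} is at most $\alpha/5$, so with high probability $j=2$ is chosen in every round. Then $x_T\propto 1_{2}$, and every rescaling $kx_T$ has value $0$ on all original rows $(1,0)$, so no choice of $k$ helps. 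Meanwhile the perturbed constraints are met only with $1^{\top}\overline{x}\approx 1/\alpha=V/\alpha$.

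\textbf{What is actually proved.} The paper's proof never attempts this transfer. Every $C$ in it (e.g.\ the use of $C_{\min}\ge\alpha/V$) is the matrix after the update $C_{ij}\gets C_{ij}+\alpha/V$. So what the paper's argument and your Steps 1--4 establish is the guarantee for the perturbed covering constraints. Either state the lemma in that form, or modify the algorithm so that $1^{\top}\overline{x}=O(V)$ is actually enforced before invoking the perturbation argument.
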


\begin{proof}
By the update $\Delta_{t}=\frac{\alpha\cdot1_{i}}{30(P_{\max}+C_{\max})}$,
we have $\left\Vert C\Delta_{t}\right\Vert _{\infty}\le\frac{\alpha}{30}\le1$,
hence,
\begin{align*}
\smin^{U}\left(Cx_{T}\right) & \ge\text{\ensuremath{\smin}}^{U}\left(Cx_{T-1}\right)+\left(1-\left\Vert C\Delta_{t}\right\Vert _{\infty}\right)\left\langle \nabla\text{\ensuremath{\smin}}^{U}\left(Cx_{T-1}\right),C\Delta_{T}\right\rangle \\
 & \ge\text{\ensuremath{\smin}}^{U}\left(Cx_{T-1}\right)+\left(1-\frac{\alpha}{30}\right)\left\langle \nabla\text{\ensuremath{\smin}}^{U}\left(Cx_{T-1}\right),C\Delta_{T}\right\rangle \\
 & \ge\text{\ensuremath{\smin}}^{U}\left(0\right)+\left(1-\frac{\alpha}{4}\right)\sum^{T-1}_{t=1}\left\langle \nabla\text{\ensuremath{\smin}}^{U}\left(Cx_{t}\right),C\Delta_{t+1}\right\rangle .
\end{align*}
Similarly, we have $\left\Vert P\Delta_{t}\right\Vert _{\infty}\le\frac{\alpha}{30}\le1$,
hence,
\begin{align*}
\text{smax}^{U}\left(Px_{T}\right) & \leq\text{smax}^{U}\left(Px_{T-1}\right)+\left(1+\left\Vert P\Delta_{t}\right\Vert _{\infty}\right)\left\langle \nabla\text{smax}^{U}\left(Px_{T-1}\right),P\Delta_{T}\right\rangle \\
 & \le\text{smax}^{U}\left(Px_{T-1}\right)+\left(1+\frac{\alpha}{30}\right)\left\langle \nabla\text{smax}^{U}\left(Px_{T-1}\right),P\Delta_{T}\right\rangle \\
 & \le\text{smax}^{U}\left(Px_{T-1}\right)+\left(1+\frac{\alpha}{30}\right)\left(1+\frac{\alpha}{5}\right)\left\langle \nabla\text{\ensuremath{\smin}}^{U}\left(Cx_{T-1}\right),C\Delta_{T}\right\rangle \\
 & \le\text{smax}^{U}\left(0\right)+\left(1+\frac{\alpha}{4}\right)\sum^{T-1}_{t=1}\left\langle \nabla\text{\ensuremath{\smin}}^{U}\left(Cx_{t}\right),C\Delta_{t+1}\right\rangle ,
\end{align*}
where in the third inequality, we use Lemma \ref{lem:EM-gaurantee-mixed}.
Combining both inequalities, we have
\begin{align*}
\frac{\text{smax}^{U}\left(Px_{T}\right)}{\text{smin}^{U}\left(Cx_{T}\right)} & \leq\frac{\ln n+\left(1+\frac{\alpha}{4}\right)\sum^{T-1}_{t=1}\left\langle \nabla\text{\ensuremath{\smin}}^{U}\left(Cx_{t}\right),C\Delta_{t+1}\right\rangle }{-\ln n+\left(1-\frac{\alpha}{4}\right)\sum^{T-1}_{t=1}\left\langle \nabla\text{\ensuremath{\smin}}^{U}\left(Cx_{t}\right),C\Delta_{t+1}\right\rangle }\\
 & =\frac{\frac{2}{\left(1-\frac{\alpha}{4}\right)}\ln n-\frac{\left(1+\frac{\alpha}{4}\right)}{\left(1-\frac{\alpha}{4}\right)}\ln n+\left(1+\frac{\alpha}{4}\right)\sum^{T-1}_{t=1}\left\langle \nabla\text{\ensuremath{\smin}}^{U}\left(Cx_{t}\right),C\Delta_{t+1}\right\rangle }{-\ln n+\left(1-\frac{\alpha}{4}\right)\sum^{T-1}_{t=1}\left\langle \nabla\text{\ensuremath{\smin}}^{U}\left(Cx_{t}\right),C\Delta_{t+1}\right\rangle }\\
 & =\frac{\frac{2}{\left(1-\frac{\alpha}{4}\right)}\ln n}{-\ln n+\left(1-\frac{\alpha}{4}\right)\sum^{T-1}_{t=1}\left\langle \nabla\text{\ensuremath{\smin}}^{U}\left(Cx_{t}\right),C\Delta_{t+1}\right\rangle }+\frac{\left(1+\frac{\alpha}{4}\right)}{\left(1-\frac{\alpha}{4}\right)}.
\end{align*}
Note that we set $\Delta_{t}=\frac{\alpha}{30(P_{\max}+C_{\max})}\cdot1_{i}$,
we have 
\begin{align*}
\sum^{T-1}_{t=1}\left\langle \nabla\text{\ensuremath{\smin}}^{U}\left(Cx_{t}\right),C\Delta_{t+1}\right\rangle  & \ge\frac{\alpha C_{\min}\cdot T}{30(P_{\max}+C_{\max})}.
\end{align*}
By the choice $T=\frac{P_{\max}+C_{\max}}{C_{\min}}\frac{480\ln n}{\alpha^{2}}$
we also have
\begin{align*}
\ln n & \le\frac{\alpha}{8}\left(-\ln n+\left(1-\frac{\alpha}{4}\right)\frac{\alpha C_{\min}\cdot T}{30(P_{\max}+C_{\max})}\right).
\end{align*}
This implies 
\begin{align*}
\frac{\text{smax}^{U}\left(Px_{T}\right)}{\text{smin}^{U}\left(Cx_{T}\right)} & \le1+\alpha.
\end{align*}
It follows that 
\begin{align*}
\frac{\max_{S\in\left[p\right]:\left|S\right|=s=\frac{1}{U}}\left\langle \frac{1_{S}}{\left|S\right|},Px_{T}\right\rangle }{\min_{S'\in\left[c\right]:\left|S'\right|=s=\frac{1}{U}}\left\langle \frac{1_{S'}}{\left|S'\right|},Cx_{T}\right\rangle } & \le\frac{\text{smax}^{U}\left(Px_{T}\right)}{\text{smin}^{U}\left(Cx_{T}\right)}\le1+\alpha.
\end{align*}
This means that for all $i\in\left[p\right]\setminus S$ and $k\in\left[c\right]\setminus S'$
$x_{T}$ satisfied $\frac{P_{i}x_{T}}{C_{k}x_{T}}\leq1+\alpha$. If
we scale $\overline{x}=kx_{T}$ such that $\max_{i}P_{i}\overline{x}=1$,
we $C_{i}\overline{x}\ge1-\alpha$, except for at most $s$ constraint.
Since,
\begin{align*}
\frac{\alpha m}{60M}\le\max_{i}P_{i}x_{T} & \le\frac{\alpha TM}{30m},
\end{align*}
we have 
\begin{align*}
k & \in\left[\frac{m}{\alpha TM},\frac{60M}{\alpha m}\right].
\end{align*}
Using the exponential mechanism for $\left(1+\alpha\right)$ powers
to minimize the number of violated constraints, we have the number
of violated constraints is upper bounded by 
\begin{align*}
s & =O\left(\frac{SRV^{2}\log d\log\frac{T}{\beta}}{\alpha^{3}\epsilon'}\right)=O\left(\frac{SRV^{2}(\log d+\log\frac{T}{\beta})\sqrt{T\log\frac{1}{\delta}}}{\alpha^{3}\epsilon}\right)\\
 & =O\left(\frac{P_{\max}C_{\max}\sqrt{P_{\max}+C_{\max}}V^{2.5}(\log d+\log\frac{n}{\alpha\beta})\sqrt{\log\frac{1}{\delta}\log n}}{\alpha^{4.5}\epsilon}\right).\qedhere
\end{align*}
\end{proof}

\section{Analysis of Algorithm \ref{alg:mixed-1}}\label{sec:Analysis-of-Algorithm-mixed2}

First, we examine the guarantee of $\est(m,M,\left\{ a_{i}\right\} ,\epsilon,\beta)$. 
\begin{lem}
Given $m,M,\epsilon,\beta$, with probability at least $1-\beta,$$\est(m,M,\left\{ a_{i}\right\} ,\epsilon,\beta)$
outputs $K=2^{k}m$ such that $K\le4\max_{i}\left\{ a_{i}\right\} $
and $\left|\left\{ a_{i}:a_{i}>K\right\} \right|\le O\left(\frac{\left(\log\log\frac{M}{m}+\log\frac{1}{\beta}\right)\log\frac{M}{m}}{\epsilon}\right)$.
\end{lem}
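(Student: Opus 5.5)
This follows from the exponential mechanism guarantee applied to \est. The candidate set is $\{k: 2^k m\in[m,2M]\}$, so $k$ ranges over $0,\dots,K_{\max}$ with $K_{\max}=\lceil\log_2(2M/m)\rceil=O(\log\frac{M}{m})$. Write $\lambda=\frac{2(\log\log\frac{2M}{m}+\log\frac{1}{\beta})}{\epsilon}$ for the coefficient of $k$ in the score, so $Q(k)=-N(k)-\lambda k$ with $N(k)=|\{a_i:a_i\ge 2^k m\}|$.

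\emph{Sensitivity.} Changing one element $a_i$ changes $N(k)$ by at most $1$ for every $k$, and the penalty $-\lambda k$ is data independent. Hence $Q$ has sensitivity $\Delta=1$, and the mechanism is $\epsilon$-DP. By the exponential mechanism theorem, with $\log(\#\text{candidates})=O(\log\log\frac{M}{m})$ in place of $\log d$, with probability at least $1-\beta$ the output $\hat k$ satisfies
\[
Q(\hat k)\ge\max_k Q(k)-\frac{2(\log\log\frac{2M}{m}+\log\frac1\beta)}{\epsilon}=\max_k Q(k)-\lambda,
\]
up to a constant in the $\log\log$ term. This is exactly why the penalty coefficient was chosen equal to the utility loss.

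\emph{Comparison point.} Let $a^*=\max_i a_i\in[m,M]$ and let $k^*$ be the smallest $k$ with $2^{k}m>a^*$. Then $2^{k^*}m\le 2a^*\le 2M$, so $k^*$ is a valid candidate, and $N(k^*)=0$, giving $Q(k^*)=-\lambda k^*$.

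\emph{Upper bound $K\le 4a^*$.} Any $k\ge k^*+1$ has $N(k)=0$ and $Q(k)=-\lambda k\le Q(k^*)-\lambda$. Any $k\ge k^*+2$ is therefore at least $2\lambda$ below the maximum. Such a $k$ cannot satisfy the guarantee if the loss is strictly less than $2\lambda$, which I would arrange by using a slightly smaller constant in the mechanism bound. This gives $\hat k\le k^*+1$, so $K=2^{\hat k}m\le 2\cdot 2^{k^*}m\le 4a^*$. This is where the constant $4$ comes from. The constants need care: the loss must be strictly below $2\lambda$, so tie cases at $k^*+1$ are still allowed.

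\emph{Count bound.} From $Q(\hat k)\ge Q(k^*)-\lambda$ we get
\[
N(\hat k)\le \lambda(k^*-\hat k)+\lambda\le\lambda(K_{\max}+1)=O\!\left(\frac{(\log\log\frac{M}{m}+\log\frac1\beta)\log\frac{M}{m}}{\epsilon}\right).
\]
Since $|\{a_i>K\}|\le N(\hat k)$, this proves the second claim.

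The main obstacle is the bookkeeping of constants: the exponential mechanism's additive loss must fall strictly inside the gap $2\lambda$ created by the linear penalty. If the paper's stated constant makes this borderline, I would fall back to $K\le 8a^*$ (allowing $\hat k\le k^*+2$) or shrink the failure term by a constant factor, neither of which affects the asymptotic count bound.
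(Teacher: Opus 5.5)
Your proposal is correct and follows the paper's route: $Q$ has sensitivity $1$, you compare against the threshold index $k^*$, and the exponential-mechanism loss $\lambda$ gives $N(\hat k)+\lambda\hat k\le\lambda(k^*+1)$, which yields both $\hat k\le k^*+1$ (so $K\le 4\max_i a_i$) and the count bound. Defining $k^*$ with a strict inequality, so that $N(k^*)=0$, is slightly cleaner than the paper's choice, and your worry about borderline constants is unnecessary: $N(\hat k)\ge 0$ already forces $\hat k\le k^*+1$ from a loss of $\lambda$, with no strictness needed.
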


\begin{proof}
Let $k^{*}$ be the smallest number such that $2^{k}m\ge\max\left\{ a_{i}\right\} $.
We have $2^{k^{*}}m<2\max\left\{ a_{i}\right\} $. We have $Q$ has
sensitivity $1$ and $\max_{k:2^{k}m\in[m,2M]}Q(k)\ge-\frac{2\left(\log\log\frac{2M}{m}+\log\frac{1}{\beta}\right)}{\epsilon}k^{*}$.
The guarantee of the exponential mechanism gives us: with probability
$\ge1-\beta$, 
\begin{align*}
Q(k)\ge & -\frac{2\left(\log\log\frac{2M}{m}+\log\frac{1}{\beta}\right)}{\epsilon}\left(k^{*}+1\right).
\end{align*}
Therefore $k\le k^{*}+1$ and $\left|\left\{ a_{i}:a_{i}>2^{k}m\right\} \right|\le\frac{2\left(\log\log\frac{2M}{m}+\log\frac{1}{\beta}\right)}{\epsilon}\left(k^{*}+1\right)\le\frac{2\left(\log\log\frac{2M}{m}+\log\frac{1}{\beta}\right)\left(\log\frac{2M}{m}+1\right)}{\epsilon}$.
\end{proof}

\begin{lem}
\label{lem:PP}The Pre-processing step in Algorithm is $\frac{\epsilon}{2}$-DP
and with probability at least $1-\frac{\beta}{2}$, the number of
filtered out constraints is bounded by 
\begin{align*}
 & O\left(\frac{d^{2}\left(\log\log\frac{M}{m}+\log\frac{d}{\beta}\right)\log\frac{M}{m}}{\epsilon}\right).
\end{align*}
\end{lem}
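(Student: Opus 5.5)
The plan is to treat the pre-processing step as $d$ independent invocations of $\est$, one per column $j\in[d]$. Each call uses privacy parameter $\frac{\epsilon}{2d}$ and failure probability $\frac{\beta}{2d}$. The claim then follows from basic composition for privacy and a union bound for utility.

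\textbf{Privacy.} The argument has two parts.
\begin{itemize}
\item Each call $\est\left(m,M,\left\{ P_{ij}\right\} _{i=1}^{p},\frac{\epsilon}{2d},\frac{\beta}{2d}\right)$ is an instance of the exponential mechanism. Its score is $Q(k)=-\left|\left\{ i:P_{ij}\ge2^{k}m\right\} \right|-c\cdot k$, where $c$ is a data-independent constant.
\item Neighboring inputs differ in a single constraint, so the count term changes by at most $1$ for each $k$ and $Q$ has sensitivity $1$. By the exponential mechanism theorem, each call is $\frac{\epsilon}{2d}$-DP.
\end{itemize}
By basic composition over the $d$ columns, the whole estimation phase is $\frac{\epsilon}{2}$-DP. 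The remaining pre-processing steps are deterministic, row-wise functions of the data and the released estimates $M_j$:
\begin{itemize}
\item filtering out rows $i$ with $P_{ij}\ge M_{j}$;
\item modifying $C_{ij}\gets\min\left\{ C_{ij}+\frac{\alpha M_{j}}{d},\frac{40dM_{j}}{\alpha}\right\} $.
\end{itemize}
Changing one constraint of the input therefore changes at most one constraint of the processed instance. I would remark that this is what lets the later iterative phase be analyzed with the remaining $\frac{\epsilon}{2}$ budget on neighboring processed inputs.

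\textbf{Utility.} Apply the preceding lemma on $\est$ to column $j$ with failure probability $\frac{\beta}{2d}$. With probability at least $1-\frac{\beta}{2d}$, the estimate $M_j$ satisfies $M_{j}\le4\max_{i}P_{ij}$, and at most
\[
O\left(\frac{\left(\log\log\frac{M}{m}+\log\frac{2d}{\beta}\right)\log\frac{M}{m}}{\epsilon/(2d)}\right)=O\left(\frac{d\left(\log\log\frac{M}{m}+\log\frac{d}{\beta}\right)\log\frac{M}{m}}{\epsilon}\right)
\]
entries $P_{ij}$ exceed $M_j$. A union bound over the $d$ columns makes this hold for all $j$ simultaneously with probability at least $1-\frac{\beta}{2}$. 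Summing over $j$, the number of filtered rows is at most $d$ times the per-column count, which gives the stated $d^{2}$ bound.

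\textbf{Main obstacle.} The point needing care is a mismatch between the lemma on $\est$ and the filtering rule. That lemma counts entries with $a_i>K$, whereas the algorithm filters rows with $P_{ij}\ge M_{j}$. I would handle this by rerunning the lemma's argument with the $\ge$ count, which is exactly what the score $Q$ uses. The same analysis goes through unchanged, since sensitivity $1$ holds either way. A second, minor point is that the range $[m,2M]$ gives $\log\frac{2M}{m}=O(\log\frac{M}{m})$, which is absorbed into the $O(\cdot)$.
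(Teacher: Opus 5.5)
Your proposal is correct and follows the paper's proof. Both use $d$ calls to \textsc{MaxEstimator}, each $\frac{\epsilon}{2d}$-DP, composed to $\frac{\epsilon}{2}$-DP, together with the per-column guarantee at failure probability $\frac{\beta}{2d}$, a union bound, and a sum over the $d$ columns giving the $d^{2}$ bound; your extra remarks (the $\ge$ versus $>$ count, which the score $Q$ already handles directly, and the row-wise nature of the filtering and clipping) are sound refinements of points the paper leaves implicit.
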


\begin{proof}
The privacy guarantee is obtained by composition of $d$ calls to
$\est$, each of which is $\frac{\epsilon}{2d}$-DP. The number of
filtered constraints by each coordinate is at most $O\left(\frac{d\left(\log\log\frac{M}{m}+\log\frac{d}{\beta}\right)\log\frac{M}{m}}{\epsilon}\right)$
with probability $\ge1-\frac{\beta}{2d}$. Across $d$ coordinate,
this number is bounded by $O\left(\frac{d^{2}\left(\log\log\frac{M}{m}+\log\frac{d}{\beta}\right)\log\frac{M}{m}}{\epsilon}\right)$
with probability $\ge1-\frac{\beta}{2}$.
\end{proof}

\paragraph{Privacy guarantee.}
\begin{lem}
Algorithm \ref{alg:mixed-1} is $\left(\epsilon,\delta\right)$-DP.
\end{lem}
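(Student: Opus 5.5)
The plan is to split Algorithm \ref{alg:mixed-1} into its three randomized stages and add up their privacy costs by composition. The stages are the pre-processing, the $T$ calls to $\pc$ in the main loop, and the final exponential mechanism that selects the scale $k$.

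\textbf{Pre-processing.} Lemma \ref{lem:PP} already shows that this step is $\frac{\epsilon}{2}$-DP. One point needs care. The later stages read the estimates $M_j$ and the filtered, perturbed, clipped matrices. So I treat those stages as mechanisms that take the $M_j$ as auxiliary input, and I apply adaptive composition. For fixed values of $M_j$, two neighboring original inputs still give neighboring processed inputs. The filtering removes at most the one differing row, and every other row is changed entrywise by a map that depends only on the $M_j$. Hence the processed instances differ in at most one constraint. A deletion counts as a neighbor in the add/remove sense, which matches the "differ by a single constraint" notion.

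\textbf{Main loop.} Each call to $\pc$ is an exponential mechanism with parameter $\epsilon'=\frac{\epsilon}{4\sqrt{T\log\frac{1}{\delta}}}$. I would state this as $\epsilon'$-DP once the score is normalized by its sensitivity. The sensitivity bound works as in Lemma \ref{lem:EM-gaurantee-mixed}, with $C_{\min}$ now bounded below by $\alpha M_j/d$ after the perturbation. Privacy itself only needs the mechanism to be calibrated to whatever the sensitivity is. I expect this to be the main subtlety: the sensitivity must be an upper bound that holds for all neighboring processed inputs given the $M_j$. Here I would use that the $P$ entries are below $M_j$ after filtering, and that the $C$ entries lie in $[\alpha M_j/d,\,40dM_j/\alpha]$, so the ratio bound is data-independent. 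Strong composition over $T$ adaptive rounds then gives $\left(\epsilon'\sqrt{2T\log\frac{1}{\delta}}+T\epsilon'^2/2,\ \delta/2\right)$-DP. With the chosen $\epsilon'$ and constant $\epsilon$, this is at most $(\epsilon/4,\delta/2)$, after adjusting constants (using $\log\frac{2}{\delta}$ in place of $\log\frac1\delta$ if needed).

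\textbf{Final selection.} The scale $k$ is chosen by one exponential mechanism. Its score is the negative number of violated constraints, which has sensitivity $1$, and its parameter is $\epsilon'\le\epsilon/4$. So this stage is $\frac{\epsilon}{4}$-DP.

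\textbf{Conclusion.} Basic composition of the three stages gives $(\frac{\epsilon}{2}+\frac{\epsilon}{4}+\frac{\epsilon}{4},\ \delta)$, that is, Algorithm \ref{alg:mixed-1} is $(\epsilon,\delta)$-DP.
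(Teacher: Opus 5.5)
Your proposal is correct and takes the same route as the paper: Lemma~\ref{lem:PP} gives $\frac{\epsilon}{2}$-DP for the pre-processing, and strong composition handles the $T$ calls to $\pc$. Beyond that, you make explicit three things the paper leaves implicit: adaptive composition given the $M_j$, that neighboring inputs stay neighboring after filtering and clipping, and the privacy cost of the final exponential mechanism that selects $k$.

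One piece of constant bookkeeping needs fixing. With $\epsilon'=\frac{\epsilon}{4\sqrt{T\log\frac{1}{\delta}}}$ the loop costs about $\frac{\sqrt{2}}{4}\epsilon$, not $\frac{\epsilon}{4}$. You can use $\delta$ itself there, since the other stages are pure. Then charge the selection of $k$ its true cost $\epsilon'$, or count it as a $(T+1)$-st round of the strong composition, rather than $\frac{\epsilon}{4}$. The total is then roughly $\frac{\epsilon}{2}+\frac{\sqrt{2}}{4}\epsilon+O\bigl(\frac{\epsilon^{2}}{\log(1/\delta)}\bigr)\le\epsilon$ for constant $\epsilon$.
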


\begin{proof}
The pre-processing step is $\frac{\epsilon}{2}$-DP by Lemma \ref{lem:PP}.
In each iteration, we use the Exponential mechanism with privacy parameter
$\epsilon'$ to find $\Delta_{t}$. By strong composition over $T$
iterations, Algorithm \ref{alg:mixed-1} is $\left(\epsilon'\sqrt{2T\log\frac{1}{\delta}}+\frac{T\epsilon'^{2}}{2},\delta\right)$-DP.
For $\epsilon'=\frac{\epsilon}{4\sqrt{T\log\frac{1}{\delta}}}$ and
constant $\epsilon$, this implies Algorithm \ref{alg:mixed-1} is
$\left(\epsilon,\delta\right)$-DP.
\end{proof}

\paragraph{Utility Analysis.}
\begin{lem}
\label{lem:EM-gaurantee-mixed-1}With probability at least $1-\beta$,
for all $t\in[T]$:
\begin{align*}
\frac{\left\langle \nabla\smax^{U}\left(Ax_{t-1}\right),P\Delta_{t}\right\rangle }{\left\langle \nabla\smin^{U}\left(Ax_{t-1}\right),C\Delta_{t}\right\rangle } & \le1+\frac{\alpha}{5}.
\end{align*}
\end{lem}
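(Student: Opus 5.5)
We follow the template of Lemma \ref{lem:EM-gaurantee-mixed}: bound the sensitivity of the score $Q(j)=-\frac{\langle\nabla\smax^{U}(Px),P1_{j}\rangle}{\langle\nabla\smin^{U}(Cx),C1_{j}\rangle}$ on the pre-processed instance, check that some coordinate attains ratio at most about $1$, and then apply the exponential mechanism with a union bound over the $T$ iterations. The new ingredient is that every quantity has to be measured column by column in units of $M_j$, the estimate returned by $\est$ in the pre-processing step. After pre-processing, conditioned on the success event of Lemma \ref{lem:PP} (probability at least $1-\beta/2$), the following hold for every $j$:
\begin{itemize}
\item every remaining packing entry satisfies $P_{ij}< M_j$, since rows with $P_{ij}\ge M_j$ were filtered out;
\item every covering entry satisfies $\frac{\alpha M_j}{d}\le C_{ij}\le \frac{40dM_j}{\alpha}$.
\end{itemize}
Both numerator and denominator of $Q(j)$ are therefore weighted averages (weights in the simplex, each weight at most $U=1/s$) of entries confined to these ranges. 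In particular $\langle\nabla\smin^{U}(Cx),C1_j\rangle\ge \alpha M_j/d$ and $\langle\nabla\smax^{U}(Px),P1_j\rangle\le M_j$.

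\textbf{Sensitivity, case of a changed covering row.} We split into the two cases used before. Write $N_j=\langle\nabla\smax^{U}(Px),P1_j\rangle$ and $D_j,D'_j$ for the denominators under $C$ and $C'$. Exactly as in the packing sensitivity argument, changing one row moves a truncated-softmin average by at most $3\cdot\frac{40dM_j}{\alpha}\cdot\frac1s$. The sensitivity is then
\[
\frac{N_j|D_j-D'_j|}{D_jD'_j}\le \frac{M_j\cdot 120dM_j/(\alpha s)}{(\alpha M_j/d)^2}=O\!\left(\frac{d^3}{\alpha^3 s}\right).
\]
The factors $M_j$ cancel, and this cancellation is exactly why the clipping and the perturbation were chosen relative to $M_j$.

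\textbf{Sensitivity, case of a changed packing row.} Here the numerator changes by at most $3M_j/s$ and the denominator is unchanged, so the change in $Q(j)$ is at most $\frac{3M_j/s}{\alpha M_j/d}=O(d/(\alpha s))$.

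Hence the overall sensitivity is $O(d^3/(\alpha^3 s))$.

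\textbf{Existence of a good coordinate.} This is the main obstacle. In Lemma \ref{lem:EM-gaurantee-mixed} the paper simply asserts the existence of a good coordinate. Here we must argue it for the modified instance and lose only $O(\alpha)$. Take a feasible $x^*$ of the original LP. We perturb it as in Claim \ref{claim:covering}: set $y=\frac{1}{1+c\alpha}\bigl(x^*+\frac{\alpha}{40d}w\bigr)$, where $w_j=1/M_j$ along each coordinate, with constants chosen so that $Py\le 1$ up to $1+O(\alpha)$.

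\begin{itemize}
\item \emph{Packing side.} The filtered rows are gone. The extra term contributes $\sum_j P_{ij}\frac{\alpha}{40dM_j}\le\alpha/40$ to each packing row.
\item \emph{Covering side, clipped rows.} If some entry of row $i$ was clipped to $\frac{40dM_j}{\alpha}$, then $C_iy\ge \frac{40dM_j}{\alpha}\cdot\frac{\alpha}{40dM_j(1+c\alpha)}\ge 1-O(\alpha)$.
\item \emph{Covering side, unclipped rows.} The added $\frac{\alpha M_j}{d}$ only increases $C_iy$, so $C_iy\ge C^{\init}_ix^*/(1+c\alpha)\ge 1-O(\alpha)$.
\end{itemize}

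Now weight these inequalities by $\nabla\smax^{U}(Px)$ and $\nabla\smin^{U}(Cx)$. Both weight vectors lie in the simplex, so $\sum_j y_j\langle\nabla\smax^{U},P1_j\rangle\le 1+O(\alpha)$ and $\sum_j y_j\langle\nabla\smin^{U},C1_j\rangle\ge 1-O(\alpha)$. Since $y\ge0$, an averaging argument produces some $j$ with ratio at most $1+\alpha/10$, after tuning the constants. I expect this constant-chasing to need care: the rescaling by $M_j$ interacts with the algorithm's step $\Delta$, and the scaling of $y$ must be set so the $\alpha/40$ packing slack and the $1-O(\alpha)$ covering loss together stay within $\alpha/10$.

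\textbf{Conclusion via the exponential mechanism.} The guarantee of the exponential mechanism, with failure probability $\beta/(2T)$ per iteration over $d$ candidates, yields ratio at most
\[
1+\frac{\alpha}{10}+O\!\left(\frac{d^3(\log d+\log\frac{T}{\beta})}{\alpha^3 s\,\epsilon'}\right).
\]
With $\epsilon'=\Theta(\epsilon/\sqrt{T\log(1/\delta)})$ and $T=\Theta(d^2\ln n/\alpha^4)$, the stated choice $s=\Theta\bigl(d^3(\log d+\log\frac{T}{\beta})\sqrt{\log\frac1\delta\log n}/(\alpha^6\epsilon)\bigr)$ makes the last term at most $\alpha/10$. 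A union bound over the $T$ iterations together with the pre-processing event gives the claim with probability at least $1-\beta$.
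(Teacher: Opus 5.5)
Your structure matches the paper's: a two-case sensitivity bound for the ratio score, then a perturbed feasible point $y_j\propto x^*_j+\Theta(\alpha/(dM_j))$ certifying a coordinate with ratio $1+O(\alpha)$, then the exponential mechanism with a union bound. The existence part is fine, and your use of $P_{ij}<M_j$ after filtering is exactly what it needs.

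The step that fails is the final parameter check. Your covering-row sensitivity $\frac{M_j\cdot 120dM_j/(\alpha s)}{(\alpha M_j/d)^2}=\frac{120d^3}{\alpha^3 s}$ is correct. With $\epsilon'=\epsilon/(4\sqrt{T\log\frac1\delta})$ and $T=\Theta(d^2\ln n/\alpha^4)$, one has $1/\epsilon'=\Theta\bigl(d\sqrt{\log n\log\frac1\delta}/(\alpha^2\epsilon)\bigr)$. So the exponential-mechanism loss is
\[
\Theta\Bigl(\frac{d^3L}{\alpha^3 s\,\epsilon'}\Bigr)=\Theta\Bigl(\frac{d^4L\sqrt{\log n\log\frac1\delta}}{\alpha^5\epsilon\, s}\Bigr),\qquad L=\log d+\log\frac{T}{\beta}.
\]
For the algorithm's $s=\Theta\bigl(d^3L\sqrt{\log n\log\frac1\delta}/(\alpha^6\epsilon)\bigr)$ this equals $\Theta(d\alpha)$, not $\alpha/10$. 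No choice of the hidden constant absorbs a factor of $d$. As written, your argument proves the lemma only for $s$ of order $d^4L\sqrt{\log n\log\frac1\delta}/(\alpha^6\epsilon)$, which would weaken Theorem~\ref{thm:mixed}.

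The paper's proof closes only because it evaluates the same expression $\frac{M_j}{s}\cdot\frac{3\cdot 40dM_j/\alpha}{(\alpha M_j/d)^2}$ as $\frac{120d^2}{s\alpha^3}$. That expression actually equals $\frac{120d^3}{s\alpha^3}$, so the paper's step appears to be an arithmetic slip. You cannot combine its choice of $s$ with your (correct) sensitivity bound.

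The ingredients both you and the paper use cannot give better than $d^3/(\alpha^3 s)$ for the raw ratio. Those ingredients are: numerator at most $M_j$, denominator at least $\alpha M_j/d$, and a change of $O(dM_j/(\alpha s))$ in the denominator. So recovering the stated $s$ needs a new idea. One option is to change the mechanism to use the clipped score $-\min\{N_j/D_j,2\}$:
\begin{itemize}
\item The good coordinate has ratio below $2$, so nothing is lost.
\item Only ratios below $2$ matter, so a changed covering row moves the clipped score by at most $2|D_j-D'_j|/\min\{D_j,D'_j\}\le 240d^2/(\alpha^2 s)$.
\item A changed packing row still moves it by at most $3d/(\alpha s)$, since clipping is $1$-Lipschitz.
\end{itemize}
With the stated $s$, this gives a loss of $O(\alpha^2)$.
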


\begin{claim}
There exists $y\ge0$ such that $\sum_{i}M_{i}x_{i}\le d$, $Py\le1+\frac{\alpha}{20}$
and $Cy\ge\frac{1}{1+\alpha/20}$. 
\end{claim}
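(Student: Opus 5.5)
Following Claim \ref{claim:covering}, I would construct $y$ explicitly from a feasible solution $x^{*}$ of the original (unclipped) LP, so $Px^{*}\le1$ and $C^{\init}x^{*}\ge1$. Here $P$ denotes the packing matrix after filtering, so $P_{ij}\le M_{j}$ for all remaining rows. Thus $M_{j}x^{*}_{j}$ could be as large as $\max_{i}P_{ij}x^{*}_{j}$. Since $\max_{i}P_{ij}$ is within a factor $4$ of $M_{j}$ (roughly, by the $\est$ guarantee) and $P_{i}x^{*}\le1$ for every row, each $M_{j}x^{*}_{j}$ is at most a constant. Summing over $j$ gives $\sum_{j}M_{j}x^{*}_{j}=O(d)$. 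After constant rescaling of the threshold (absorbed into the claim's constants), this is the budget $d$.

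Next I would define $y=\frac{1}{1+\alpha/20}\left(x^{*}+z\right)$ with $z_{j}=\frac{\alpha}{20d\,M_{j}}$. Then $\sum_{j}M_{j}z_{j}=\alpha/20$, so the weighted budget is preserved up to the factor $\frac{1+\alpha/20}{1+\alpha/20}$. For packing, $P_{i}z=\sum_{j}P_{ij}\frac{\alpha}{20dM_{j}}\le\frac{\alpha}{20}$ because $P_{ij}\le M_{j}$ after filtering. Hence $P_{i}y\le\frac{1+\alpha/20}{1+\alpha/20}\le1+\frac{\alpha}{20}$.

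For covering, let $C$ be the modified matrix $C_{ij}=\min\{C^{\init}_{ij}+\frac{\alpha M_{j}}{d},\frac{40dM_{j}}{\alpha}\}$, and split into two cases exactly as in Claim \ref{claim:covering}.
\begin{itemize}
\item If some $j$ has $C_{ij}=\frac{40dM_{j}}{\alpha}$, then $C_{i}y\ge\frac{40dM_{j}}{\alpha}\cdot\frac{\alpha}{20dM_{j}(1+\alpha/20)}=\frac{2}{1+\alpha/20}\ge\frac{1}{1+\alpha/20}$.
\item Otherwise $C_{ij}\ge C^{\init}_{ij}$ for all $j$, so $C_{i}y\ge\frac{C^{\init}_{i}x^{*}}{1+\alpha/20}\ge\frac{1}{1+\alpha/20}$.
\end{itemize}

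The main obstacle is the first step: bounding $\sum_{j}M_{j}x^{*}_{j}$ requires relating the estimate $M_{j}$ to the entries of $P$ on the surviving rows. I would use the $\est$ lemma in the direction $M_{j}\le4\max_{i}P_{ij}$. If the maximizing row was filtered out, its constraint still holds for $x^{*}$ in the original LP, so $x^{*}_{j}\le 1/\max_{i}P_{ij}$ regardless of filtering. This gives $M_{j}x^{*}_{j}\le4$ and $\sum_{j}M_{j}x^{*}_{j}\le4d$. The constant $4$ either is absorbed by scaling $x^{*}$ or signals that the claim's bound should read $O(d)$.
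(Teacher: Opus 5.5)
Your construction is the same as the paper's: $y=\frac{1}{1+\alpha/20}(x^{*}+z)$ with $z_{j}=\frac{\alpha}{20dM_{j}}$, the packing bound from $P_{ij}<M_{j}$ on surviving rows, and the same two-case covering argument. Your split on whether the clip $\frac{40dM_{j}}{\alpha}$ is active is, if anything, slightly cleaner than the paper's split on $C^{\init}_{ij}$. You diverge only on the budget, and there you are right and the paper is not. The paper writes $\sum_{j}M_{j}y_{j}\le\frac{1}{1+\alpha/20}\left(d+\frac{\alpha}{20}\right)$, which tacitly uses $M_{j}x^{*}_{j}\le1$. The \est{} guarantee only gives $M_{j}\le4\max_{i}P^{\init}_{ij}$, and hence $M_{j}x^{*}_{j}\le4$, exactly as you argue.

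Your closing remark needs a correction. The factor $4$ cannot be absorbed by scaling $x^{*}$: shrinking $x^{*}$ breaks $C^{\init}x^{*}\ge1$, which your second covering case relies on. In fact the budget $d$ cannot hold as stated. Take $d=1$, with many packing rows and one covering row, all equal to $a=2^{k}m$. Suppose \est{} returns $M_{1}=2a$; this is within its guarantee, and is its most likely output here. Then every $y$ with $M_{1}y\le1$ satisfies $C_{1}y\le\frac{a+2\alpha a}{2a}=\frac{1}{2}+\alpha<\frac{1}{1+\alpha/20}$ for $\alpha\le0.4$.

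So the correct statement has budget $4d$, which your bound $\frac{4d+\alpha/20}{1+\alpha/20}\le4d$ gives. This costs nothing downstream. The claim is only invoked in the proof of Lemma \ref{lem:EM-gaurantee-mixed-1}, to exhibit a coordinate with ratio at most $(1+\alpha/20)^{2}$, and that step uses only the packing and covering inequalities.
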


\begin{proof}
Let $x^{*}$ be such that $P^{\init}x^{*}\le1$ and $C^{\init}x\ge1$.

Let $y$ be such that $y_{j}=\frac{1}{1+\alpha/20}\left(x^{*}_{j}+\frac{\alpha}{20dM_{j}}\right)$.
We have $\sum_{j}M_{j}y_{j}\le\frac{1}{1+\alpha/20}\left(d+\frac{\alpha}{20}\right)\le d$.

Consider packing constraint $i$. 
\begin{align*}
P_{i}y & \le\frac{1}{1+\alpha/20}\left(1+\frac{\alpha}{20}\right)=1.
\end{align*}

Consider covering constraint $i$.

If there is some $j$ such that $C^{\init}_{ij}\ge\frac{40dM_{j}}{\alpha},$we
have $C_{i}y\ge\frac{1}{1+\alpha/20}\frac{40dM_{j}}{\alpha}\cdot\frac{\alpha}{20dM_{j}}\ge1$.
Otherwise we have $C_{i}\ge C^{\init}_{i}.$ Hence 
\begin{align*}
C_{i}y & =C^{\init}_{i}y\ge C^{\init}_{i}\frac{x^{*}}{1+\alpha/20}\ge\frac{1}{1+\alpha/20}.\qedhere
\end{align*}
\end{proof}

\begin{proof}
We calculate the sensitivity of the score function $Q(j)=-\frac{\left\langle \nabla\smax^{U}\left(Px\right),P1_{j}\right\rangle }{\left\langle \nabla\smin^{U}\left(Cx\right),C1_{j}\right\rangle }$.
We consider two cases: when the neighboring inputs differ by a covering
constraint and when the differ by a packing constraint.

Case 1: Suppose that neighboring inputs are given by $(P,C)$ and
$(P,C')$. Note that after pre-processing, we have $C_{\min}$ and
$C'_{\min}$ are bounded from below by $\frac{\alpha M_{j}}{d}$.
Thus we have
\begin{align*}
 & \left|\frac{\left\langle \nabla\smax^{U}\left(Px\right),P1_{j}\right\rangle }{\left\langle \nabla\smin^{U}\left(Cx\right),C1_{j}\right\rangle }-\frac{\left\langle \nabla\smax^{U}\left(Px\right),P1_{j}\right\rangle }{\left\langle \nabla\smin^{U}\left(C'x\right),C'1_{j}\right\rangle }\right|\\
=\  & \left|\left\langle \nabla\smax^{U}\left(Px\right),P1_{j}\right\rangle \right|\frac{\left|\left\langle \nabla\smin^{U}\left(Cx\right),C1_{j}\right\rangle -\left\langle \nabla\smin^{U}\left(C'x\right),C'1_{j}\right\rangle \right|}{\left\langle \nabla\smin^{U}\left(Cx\right),C1_{j}\right\rangle \left\langle \nabla\smin^{U}\left(C'x\right),C'1_{j}\right\rangle }\\
\le\  & \frac{M_{j}}{s}\frac{3\frac{40dM_{j}}{\alpha}}{\left(\frac{\alpha M_{j}}{d}\right)^{2}}=\frac{120d^{2}}{s\alpha^{3}}.
\end{align*}

Case 2: Suppose that neighboring inputs are given by $(P,C)$ and
$(P',C)$. We also have
\begin{align*}
\left|\frac{\left\langle \nabla\smax^{U}\left(Px\right),P1_{j}\right\rangle }{\left\langle \nabla\smin^{U}\left(Cx\right),C1_{j}\right\rangle }-\frac{\left\langle \nabla\smax^{U}\left(P'x\right),P'1_{j}\right\rangle }{\left\langle \nabla\smin^{U}\left(Cx\right),C1_{j}\right\rangle }\right|\le & \frac{3M_{j}}{s\left(\frac{\alpha M_{j}}{d}\right)}\le\frac{3d}{s\alpha}\le\frac{120d^{2}}{s\alpha^{3}}.
\end{align*}
There must exist a coordinate $j$ such that 
\begin{align*}
\frac{\left\langle \nabla\smax^{U}\left(Px_{t-1}\right),P1_{j}\right\rangle }{\left\langle \nabla\smin^{U}\left(Cx_{t-1}\right),C1_{j}\right\rangle } & \le\left(1+\frac{\alpha}{20}\right)^{2}.
\end{align*}
Therefore, with probability at least $1-\frac{\beta}{T}$,
\begin{align*}
\frac{\left\langle \nabla\smax^{U}\left(Px_{t-1}\right),P1_{j}\right\rangle }{\left\langle \nabla\smin^{U}\left(Cx_{t-1}\right),C1_{j}\right\rangle } & \le\left(1+\frac{\alpha}{20}\right)^{2}+\frac{120d^{2}\left(\log d+\log\frac{T}{\beta}\right)}{s\alpha^{3}\epsilon'}\le1+\frac{\alpha}{5}.\qedhere
\end{align*}
\end{proof}

\begin{lem}
With probability $1-\beta$, the output $\overline{x}$ of Algorithm
\ref{alg:mixed} satisfies $P_{i}\overline{x}\le1+\alpha$ and $C_{i}\overline{x}\ge1-\alpha$
except for at most $r$ constraints where
\begin{align*}
r & =O\left(\frac{d^{3}\left(\log d+\log\frac{n}{\alpha\beta}\right)\sqrt{\log\frac{1}{\delta}\log n}}{\alpha^{6}\epsilon}+\frac{d^{2}\left(\log\log\frac{M}{m}+\log\frac{d}{\beta}\right)\log\frac{M}{m}}{\epsilon}\right).
\end{align*}
\end{lem}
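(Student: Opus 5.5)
The plan mirrors the proof of Lemma \ref{lem:mixed-utility}, run on the pre-processed matrices, with one extra accounting step for the rows removed by $\est$. We condition on three events: the pre-processing guarantee of Lemma \ref{lem:PP}, the per-iteration guarantee of Lemma \ref{lem:EM-gaurantee-mixed-1}, and the success of the final exponential mechanism that selects the scale $k$. Each fails with probability $O(\beta)$, so a union bound gives overall success probability $1-O(\beta)$.

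\textbf{Step 1: potential argument on the processed instance.} After pre-processing we have $P_{ij}\le M_j$ on the surviving rows, and $\frac{\alpha M_j}{d}\le C_{ij}\le\frac{40dM_j}{\alpha}$. I would take the update to be $\Delta_t=\frac{\alpha}{30}\cdot\frac{1_j}{c_j}$, where $c_j=\Theta(dM_j/\alpha)$ bounds column $j$ of both $P$ and $C$. This keeps $\|P\Delta_t\|_\infty$ and $\|C\Delta_t\|_\infty$ at most $\alpha/30$. The update inequalities \eqref{eq:smax-update} and \eqref{eq:smin-update}, combined with Lemma \ref{lem:EM-gaurantee-mixed-1}, then give
\[
\smax^U(Px_T)\le \ln n+(1+\tfrac{\alpha}{4})\Sigma,\qquad
\smin^U(Cx_T)\ge -\ln n+(1-\tfrac{\alpha}{4})\Sigma,
\]
where $\Sigma=\sum_t\langle\nabla\smin^U(Cx_t),C\Delta_{t+1}\rangle$. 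Every step contributes at least $C_{\min,j}\cdot\frac{\alpha}{30c_j}=\Omega(\alpha^3/d^2)$ to $\Sigma$, because the ratio $C_{\min,j}/c_j=\Omega(\alpha^2/d^2)$ is scale-free in $M_j$. Hence $\Sigma=\Omega(T\alpha^3/d^2)$. With $T=\Theta(d^2\ln n/\alpha^4)$ this yields $\ln n\le O(\alpha)\Sigma$, and the same algebra as in Lemma \ref{lem:mixed-utility} gives $\smax^U(Px_T)/\smin^U(Cx_T)\le 1+\alpha$.

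\textbf{Step 2: from the ratio to counting violated constraints.} As before, this ratio bounds the ratio of the average of the top $s$ packing rows to the average of the bottom $s$ covering rows. So outside at most $2s$ rows, $P_ix_T\le(1+\alpha)C_kx_T$. A scale $k$ from the grid then gives $P_i\overline x\le 1+O(\alpha)$ and $C_i\overline x\ge 1-O(\alpha)$ on those rows. The final exponential mechanism selects such a $k$, adding only an $O(\log(\cdot)/\epsilon')$ term, which is dominated. Two transfers back to the original instance remain.

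First, covering. Clipping only lowers entries that exceed $40dM_j/\alpha$. A clipped row is already satisfied whenever $\overline x_j\ge\alpha/(40dM_j)$. As in the covering pre-processing argument, I would handle the remaining case either by a small post-processing step or by noting that $C^{\init}_i\ge C_i-\frac{\alpha}{d}\sum_j M_j\overline x_j$.

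Second, the perturbation. The added $\frac{\alpha M_j}{d}$ contributes $\frac{\alpha}{d}\sum_jM_j\overline x_j$, and this is $O(\alpha)$ because packing feasibility on the surviving rows holds with $M_j$ being, up to a factor $4$, the largest surviving entry in column $j$. This mirrors the claim's $\sum_j M_jy_j\le d$.

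\textbf{Step 3: final count.} Substituting $\epsilon'=\Theta(\epsilon/\sqrt{T\log(1/\delta)})$ into the choice of $s$ from Lemma \ref{lem:EM-gaurantee-mixed-1}, namely $s=\Theta\bigl(d^2(\log d+\log\frac T\beta)/(\alpha^4\epsilon')\bigr)$, gives $s=O\bigl(d^3(\log d+\log\frac{n}{\alpha\beta})\sqrt{\log\frac1\delta\log n}/(\alpha^6\epsilon)\bigr)$. The filtered rows, counted by Lemma \ref{lem:PP}, may be arbitrarily violated, and adding them gives the second term of $r$.

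I expect the main obstacle to be Step 2's bound $\sum_jM_j\overline x_j=O(d)$. It needs some surviving row $i$ with $P_{ij}\gtrsim M_j/4$ in every column $j$ with $\overline x_j>0$, so that the packing constraint on that row controls $M_j\overline x_j$. I would try to obtain this from the estimator guarantee $K\le 4\max_i a_i$ together with the fact that, by definition, $K$ leaves few rows above it. If that fails, I would instead rescale columns by $M_j$ upfront, which reduces the instance to the $V\le d$ setting of Algorithm \ref{alg:mixed}.
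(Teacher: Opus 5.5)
Your Steps 1 and 3 follow the paper's route: rerun the potential argument of Lemma~\ref{lem:mixed-utility} on the processed instance, then add the rows filtered by \est{}. Your column-dependent step $\frac{\alpha}{30c_j}1_j$ is in fact what makes $T=\Theta(d^2\ln n/\alpha^4)$ independent of $M/m$, and since the score is a ratio it costs nothing.

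The genuine gap is the one you flag, and neither of your proposed routes closes it. Your inequality $C^{\init}_i\overline x\ge C_i\overline x-\frac{\alpha}{d}\sum_jM_j\overline x_j$ is correct and handles clipping and perturbation at once. So everything rests on $\sum_jM_j\overline x_j=O(d)$. The premise behind that bound is false: $M_j$ need not be within a factor $4$ of the largest \emph{surviving} entry of column $j$. \est{} only guarantees $M_j\le 4\max_iP^{\init}_{ij}$ over all rows, and the rows attaining that maximum are exactly the ones it may filter.

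Here is an example. Take $d=2$, $m=\frac14$, $M=4$, packing rows $(1,1)$ and many copies of $(0,1)$, and covering rows $(\frac14,1)$ and many copies of $(0,1)$; then $x=(0,1)$ is feasible.
\begin{itemize}
\item Column $1$ of $P$ has a single entry $\ge m$. The penalty coefficient in the score of \est{} exceeds $1$, so $Q(0)=-1>Q(k)$ for all $k\ge1$. With high probability $M_1=m$, and the row $(1,1)$ is filtered.
\item Column $1$ of the surviving packing matrix is then identically zero. So \pc{} sees score $0$ (the largest possible) for $j=1$, against roughly $-\frac{1}{1+\alpha}$ for $j=2$. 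With high probability every iteration picks $j=1$, so $x_{T,2}=0$.
\item The processed covering rows are met purely through the added $\frac{\alpha M_1}{d}$. But for every scale $k$, each original row $(0,1)$ has $C^{\init}_i\overline x=0$. Essentially all covering constraints are violated, far more than $r$.
\end{itemize}

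Your fallback fails for the same reason. After rescaling columns by $M_j$, Lemma~\ref{lem:mixed-utility} also certifies only the \emph{perturbed} covering rows. It never bounds $1^\top\overline x$ of the output, since nothing in Algorithm~\ref{alg:mixed} enforces $1^\top x=V$, and the rescaled version of the example above defeats it. The paper's own proof does not supply this step either; it only repeats the count of Lemma~\ref{lem:mixed-utility} and adds the filtered rows. So for the algorithm as written you need a modification, not a sharper argument.

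One possible repair is to add the packing row $\frac{1}{4d}\sum_jM_jx_j\le1$ in at least $s$ copies, so it cannot be among the dropped rows, and to enforce it when choosing $k$. This row:
\begin{itemize}
\item depends only on the privately released $M_j$;
\item is satisfied by the benchmark $y$ of the Claim, because $M_jx^*_j\le4$;
\item gives $\frac{\alpha}{d}\sum_jM_j\overline x_j\le4\alpha(1+\alpha)$, after which your Step 2 goes through with $\alpha$ rescaled by a constant.
\end{itemize}

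Separately, the $d^2$ you import from Lemma~\ref{lem:EM-gaurantee-mixed-1} comes from an arithmetic slip there: $M_j\cdot\frac{120dM_j/(\alpha s)}{(\alpha M_j/d)^2}=\frac{120d^3}{\alpha^3 s}$. Taken literally, Step 3 then gives $d^4$. You can avoid this by capping the ratio in the score at $2$, which is harmless because the benchmark coordinate has ratio below $2$. The capped score has sensitivity $O(d^2/(\alpha^2 s))$.
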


\begin{proof}
The proof is similar to the that of Lemma \ref{lem:mixed-utility}.
The number of violated constraints is 
\begin{align*}
s & =O\left(\frac{d^{2}\left(\log d+\log\frac{T}{\beta}\right)}{\alpha^{4}\epsilon'}\right)=O\left(\frac{d^{2}\left(\log d+\log\frac{T}{\beta}\right)\sqrt{T\log\frac{1}{\delta}}}{\alpha^{4}\epsilon}\right)\\
 & =O\left(\frac{d^{3}\left(\log d+\log\frac{n}{\alpha\beta}\right)\sqrt{\log\frac{1}{\delta}\log n}}{\alpha^{6}\epsilon}\right).
\end{align*}
Taken into account the number of filtered constraints, we obtain the
bound in the lemma.
\end{proof}

\end{document}